\theoremstyle{plain}
\newtheorem{thm}{Theorem}[]
\newtheorem{cor}[thm]{Corollary}
\newtheorem{lem}[thm]{Lemma}
\newtheorem{prop}[thm]{Proposition}
\theoremstyle{definition}
\newtheorem*{rmk}{Remark}
\renewcommand{\P}{\mathbb{P}}
\newcommand{\R}{\mathbb{R}}
\newcommand{\N}{\mathbb{N}}
\newcommand{\ind}[1]{\mathbbm{1}_{\{#1\}}}
\newcommand{\eps}{\varepsilon}
\newcommand{\bp}{\begin{proof}}
	\newcommand{\ep}{\end{proof}}
\def\bal#1\eal{\begin{align*}#1\end{align*}}
\newcommand{\E}[1]{{\mathbb E}\left[#1\right]}
\def\1{\mathds{1}}
\newcommand{\floor}[1]{\left\lfloor #1 \right\rfloor}
\numberwithin{equation}{section}
\numberwithin{thm}{section}
\title{Selection of the fittest or selection of the luckiest: the emergence of Goodhart's law in evolution}
\author{Zsófia Talyigás\thanks{Faculty of Mathematics, University of Vienna, Oskar-Morgenstern-Platz 1, 1090 Wien, Austria} \thanks{Department of Natural Sciences and Sustainable Resources, Institute of Mathematics, BOKU University, Gregor-Mendel-Straße 33, 1180 Wien, Austria}, Francesco Paparella\thanks{Division of Science and Mubadala Arabian Center for Climate and Environmental Science, NYUAD Research Institute, New York University Abu Dhabi, Saadiyat Island, Abu Dhabi, United Arab Emirates}, Bastien Mallein\thanks{Institut de Mathématiques de Toulouse, UMR 5219, Université de Toulouse, UPS, F-31062 Toulouse Cedex 9, France} \thanks{Institut universitaire de France (IUF)},  Emmanuel Schertzer\footnotemark[1]}
\date{\today}
\begin{document}
\maketitle

\begin{abstract}
Natural selection is commonly assumed to become more effective as it becomes stronger. However, selection acts on phenotypes rather than directly on genotypes, and phenotypic success is inherently noisy. Here we study how this mismatch shapes long-term evolutionary dynamics. Using a minimal stochastic model in which individuals inherit genetic fitness while selection acts on noisy phenotypic expressions, we show that increasing selection strength accelerates adaptation only up to a critical threshold. Beyond this point, stronger selection paradoxically slows evolution and erodes genetic diversity by favoring the luckiest individuals rather than the genetically fittest.

We identify two distinct evolutionary regimes—selection of the fittest and selection of the luckiest—separated by a sharp transition. This transition corresponds to a previously unrecognized change in the structure of traveling fitness waves, from semipulled to fully pulled fronts, with profound consequences for adaptation speed and genealogical structure. Our results reveal a biological instance of Goodhart’s law: when phenotypic measures become overly optimized targets, they cease to reliably promote genetic improvement. These findings highlight intrinsic limits to the effectiveness of strong selection and suggest that optimal evolutionary outcomes require intermediate selection strength in noisy environments.
\end{abstract}

\section{Introduction}

Natural selection is often assumed to become more effective as it becomes stronger. Yet selection acts on phenotypes rather than directly on genotypes, and phenotypic success is inherently shaped by chance. As a result, the individuals favored by selection within a given generation are not always those carrying the most heritable advantages. This distinction, already emphasized by Fisher in his dictum that “natural selection is not evolution”\cite{fisher1999genetical}, highlights a fundamental separation between short-term selective success and long-term evolutionary change. Selection filters phenotypes within generations, whereas evolution depends on the reliable transmission of genetic information across generations.

This raises a fundamental question for evolutionary theory: how does the strength of selection shape long-term evolutionary outcomes, not only in terms of the rate of adaptation, but also in terms of the maintenance of genetic diversity?

\medskip

When optimization relies on noisy or imperfect proxies, intensifying selection can produce unintended and even counterproductive outcomes. In economics and the social sciences, this failure mode is known as Goodhart’s law \cite{Goodhart1984,danielsson2002RiskModeling}: when a measure becomes a target, it ceases to be a good measure. 
Closely related phenomena have since been identified across a wide range of disciplines, including statistics, control theory, computer science, and machine learning  where optimizing imperfect objective functions can lead to overfitting and instability \cite{mlGoodhart,reinforcedGoodhart,pan2022misalignment,amodei2016concrete,russell2022human,lehman2020surprising,lehman2011abandoning}. In all these contexts, stronger optimization does not necessarily improve performance, but can instead amplify noise and distort the relationship between measured targets and underlying goals. Whether an analogous phenomenon can arise in biological evolution—where selection measures phenotypic prowess as proxy for underlying genetic fitness—has remained largely unexplored. Yet biological selection is unavoidably noisy, shaped by environmental contingencies, developmental variability, and stochastic interactions between organisms and their environment. These sources of noise suggest that overly strong selection might, paradoxically, undermine the effectiveness of the  evolutionary mechanism.

\medskip

To investigate this possibility, we study an idealized evolutionary model in which individuals inherit a quantitative genetic fitness, while selection operates on noisy phenotypic expressions of that fitness. Only a fraction of individuals reproduce each generation, with reproductive success determined by phenotypes that imperfectly reflect the genotypes transmitted to offspring. This framework captures a minimal but generic feature of biological evolution: selection acts on observable traits, whereas evolution proceeds through the transmission of genetic information. By analyzing the long-term dynamics of this system, we examine how reproduction skewness and phenotypic noise jointly shape evolutionary trajectories.

\medskip

First, we show that strengthening selection by allowing fewer individuals to reproduce—thereby magnifying small phenotypic differences—accelerates adaptation only up to a critical threshold. Beyond this point, stronger selection systematically favors the luckiest individuals rather than the genetically fittest, leading to a slowdown of evolutionary adaptation. This change in behavior reflects a qualitative shift in evolutionary dynamics between two distinct regimes, which we term the selection of the fittest and the selection of the luckiest. In {\it the selection of the fittest} regime, moderate restrictions on reproduction allow selection to reliably preserve the highest-fitness genotypes, resulting in efficient and sustained adaptation. By contrast, in {\it the selection of the luckiest regime}, overly strong restrictions amplify stochastic fluctuations in phenotypic success, causing reproductive success to be dominated by chance and leading to the systematic loss of high-fitness genotypes. We characterize these regimes and the transition between them through a combination of simulations and mathematical analysis.

\medskip

At a mathematical level, this transition is accompanied by a change in the nature of the underlying evolutionary waves \cite{hallatschek2011noisy}. We show that the dynamics shift from {\it semipulled} waves in the selection of the fittest regime to {\it fully pulled} waves in the selection of the luckiest regime, highlighting a previously unrecognized distinction within the class of pulled waves
\cite{kolmogorov1937etude,van2003front}. Beyond its mathematical interest, this distinction has direct biological implications. Our newly identified fully pulled regime represents an extreme form of tip-dominated evolution, in which the contribution of the bulk is negligible. As a consequence, genetic drift at the wave front is expected to be even stronger than in previously studied semipulled waves, leading to an especially pronounced loss of genetic diversity \cite{hallatschek2008gene,hallatschek2010life}.

\medskip

The emergence of Goodhart’s law in evolution reflects a non-linear dependence of the rate of adaptation on selection strength. We uncover a second instance of such non-monotonic evolutionary behavior by examining how genetic diversity responds to phenotypic noise. Reducing phenotypic noise can be viewed as an amplification of the evolutionary signal. We find that genetic diversity depends non-monotonically on noise amplitude: 
in the selection of the luckiest regime, amplification of the evolutionary signal by reducing phenotypic noise decreases genetic diversity, whereas, once the selection of the fittest regime is reached, further noise reduction has the opposite effect.

\medskip

Taken together, these results highlight non-monotonic responses of evolutionary outcomes to selection parameters. Continuous changes in how selection operates—whether by restricting reproduction or by modifying the fidelity of phenotypic information—can lead to qualitatively different evolutionary regimes. Our results suggest that biological evolution is governed by intrinsic non-linearities that limit the effectiveness of strong selection and give rise to complex evolutionary behavior.

\section{The model}
\label{sec:theModel}

We consider an asexual population of $N\gg 1$ individuals, of which $K$ are selected to reproduce and generate $N$ offspring. We will assume that each individual possesses a \emph{genotype}, identified as a real number, representing the genotypic fitness of the individual.
The genotype consists of the inherited information transmitted from parent to child. The expression of that genotype in a given environment is the \emph{phenotype}, which is also represented by a single real number in our idealized model. In Biology, the genotype to phenotype map may be influenced by extrinsic and intrinsic noises such as developmental noise \cite{gavrilets1994quantitative}, phenotypic heterogeneity \cite{bodi2017phenotypic}, cellular noise \cite{hortsch2018characterization}, biological noise \cite{eling2019challenges} and intra--genotypic variability \cite{bruijning2020evolution}. In our model, all of the above effects, plus any environmental contingency in the selection process, are accounted by expressing the number representing the phenotype as the sum of the number representing the genotype plus a zero--mean noise. 
We assume a discrete time dynamics consisting of two sub-steps that repeat for an indefinite number of times.

\begin{description}
\item[Reproduction.] Each of the $K$ selected individuals produces a fixed number of offspring, $r$. These children inherit the genotype of their parent up to an independent random fluctuation owed to the occurrence of random mutations. The phenotype is determined by a random fluctuation of the genotypes.  More precisely, the child of an individual with genotype $g$ has genotype distributed as $g + X$ and phenotype distributed as $g + X + Y$ where $g$ is a real number and $(X,Y)$ is a pair of independent random variables with prescribed densities $f_{X}$ and $f_{Y}$ (to be specified later).
\item[Selection.] Following reproduction, the population consists of $N=r K$ adult individuals. Among them, the $K$ individuals with the largest {\it phenotypes} are selected to survive and give birth to the next generation, transmitting their {\it genotypes}. 
\end{description}

Selection then retains only the highest--valued phenotypes, but only the genotypic information propagates from one generation to the next. In this model,  $(N)$ is interpreted as the number of juveniles at every generation; whereas $(K)$ is the number of those juveniles reaching adulthood and allowed to reproduce.

We parametrize selection using a single quantity, $\gamma$, which measures reproductive skew, or how many
juveniles give rise to adults that successfully reproduce and pass their genes to the next generation. When $\gamma$ is close to $1$, reproductive success is broadly shared among juveniles. When $\gamma$ is small, reproduction is highly skewed, with only a few juveniles reaching adulthood and transmitting their genes. Mathematically, this is defined as
$$
\gamma := \log(K)/\log(N) = \log(K)/\log(r K),
$$ 
so that $K=N^\gamma$ is the number of adults passing their genes to the next generation, and $\gamma$ encodes 
{\it reproduction skewness} in a log scale. This entails that the fertility $r=N^{1-
\gamma}$ of individuals before selection is typically large.
Beyond natural populations, $\gamma$ also has a direct interpretation in artificial selection: it corresponds to an explicit choice of how many individuals are allowed to reproduce and contribute genetically to the next generation. For example, one may think of a Lenski-type long-term evolution experiment~\cite{lenski2017experimental} with the addition of a controlled challenge that affects reproduction: individuals could be required to complete a task (such as swimming through a tube with a nutrient gradient), and only a chosen fraction of the best-performing individuals would be allowed to reproduce. By changing how many individuals are allowed to breed, the experimenter effectively adjusts $\gamma$, the reproduction skewness. In this sense the parameter $\gamma$ acts as an ˝evolutionary knob” that controls how strongly traits influence reproductive success.

Similarly, a breeder may also decide what fraction of the current generation is allowed to reproduce, again corresponding to tuning the parameter $\gamma$. Here we note that we also propose and investigate a version of our model with sexual reproduction at the end of Section~\ref{subsec:logprof} and in the supplementary material (SM) in Section~\ref{sect: sexual}.
\\

\medskip

Finally, we make some assumptions on the genetic noise distribution $f_{X}$, and the phenotype noise distribution $f_{Y}$. For the sake of simplicity, we shall assume that both phenotypic and genotypic noises have tails decaying at least exponentially fast. Let $\alpha\geq1$ and $\mu,\lambda>0$ such that 
$$
f_{X}(x) = C_{\lambda,\alpha} \exp(-(\lambda |x|)^\alpha ), \ \ f_{Y}(x) = C_{\mu,\alpha} \exp(-(\mu |x|)^\alpha ),
$$ 
with $C_{.,.}$ being positive normalization constants. The case $\alpha=1$ corresponds to the Laplace distribution, $\alpha=2$ to the Gaussian distribution.
Up to a change of unit of measure, one can assume without loss of generality that $\lambda = 1$ so that $\mu$ now represents the ratio of the genotypic vs the phenotypic standard deviations. 
The parameter $\mu$ therefore indicates how closely an individual's phenotype reflects its genotype.

\section{Selection of the fittest and selection of the luckiest regimes}\label{sect:phase-transition}
We first discuss the results of numerical simulations. Averaging over many realizations, we find that the evolution of the average genotype $m^{(N)}_n$ of the $n$th generation is well described by the linear law:
$$m^{(N)}_n \approx m^{(N)}_0 + V^{(N)}n.$$ 

The rate $V\equiv V^{(N)}$ of genotypic change depends on the population size $N$: for fixed phenotypic noise $\mu$ and selection pressure $\gamma$, larger populations evolve more rapidly than smaller ones. To allow for a meaningful comparison across different population sizes, we therefore rescale the rate of adaptation and consider the quantity $V/\log(N)$ in the Laplace case ($\alpha=1$) and 
$V/\sqrt{\log(N)}$ in the Gaussian case ($\alpha=2$).

In the model, selection acts explicitly on a single quantitative trait. All other traits, as well as the genomic regions encoding them, are not modeled explicitly and are assumed to be selectively neutral. We nevertheless quantify genetic diversity at a neutral locus that is genetically linked to the selected trait. Although this locus does not directly affect fitness, its diversity is indirectly shaped by selection through linkage, and its genetic variation can be strongly reduced by selection acting on the linked trait.
In a classical way,
neutral genetic diversity is assessed by quantifying the depth of a population's genealogical structure \cite{durrett2008probability} at the locus under selection (or equivalently a locus linked to it).
We keep track of the ancestry of the individuals, and for each run of the model we quantify the number of generations $T_2$ that separates two randomly chosen individuals from their most recent common ancestor. The average age of the most recent common ancestor, $N_e = \mathbb{E}(T_2)$, is an \textit{effective population size}, which measures the level of genetic variability in a population \cite{charlesworth2009effective}. We ran the model both for Laplacian and Gaussian noise distributions, and for population sizes spanning several orders of magnitude. The results are summarized below, and shown in Figure~\ref{fig:2pannelsvne} for the Laplacian case with $N=10^5$. The Gaussian case (with similar qualitative properties) is discussed in SM Section~\ref{subsec:GaussianNumerics}.
\begin{figure}[ht]
\begin{center}
    \begin{tabular}{cc}
       \includegraphics[width=0.49\textwidth]{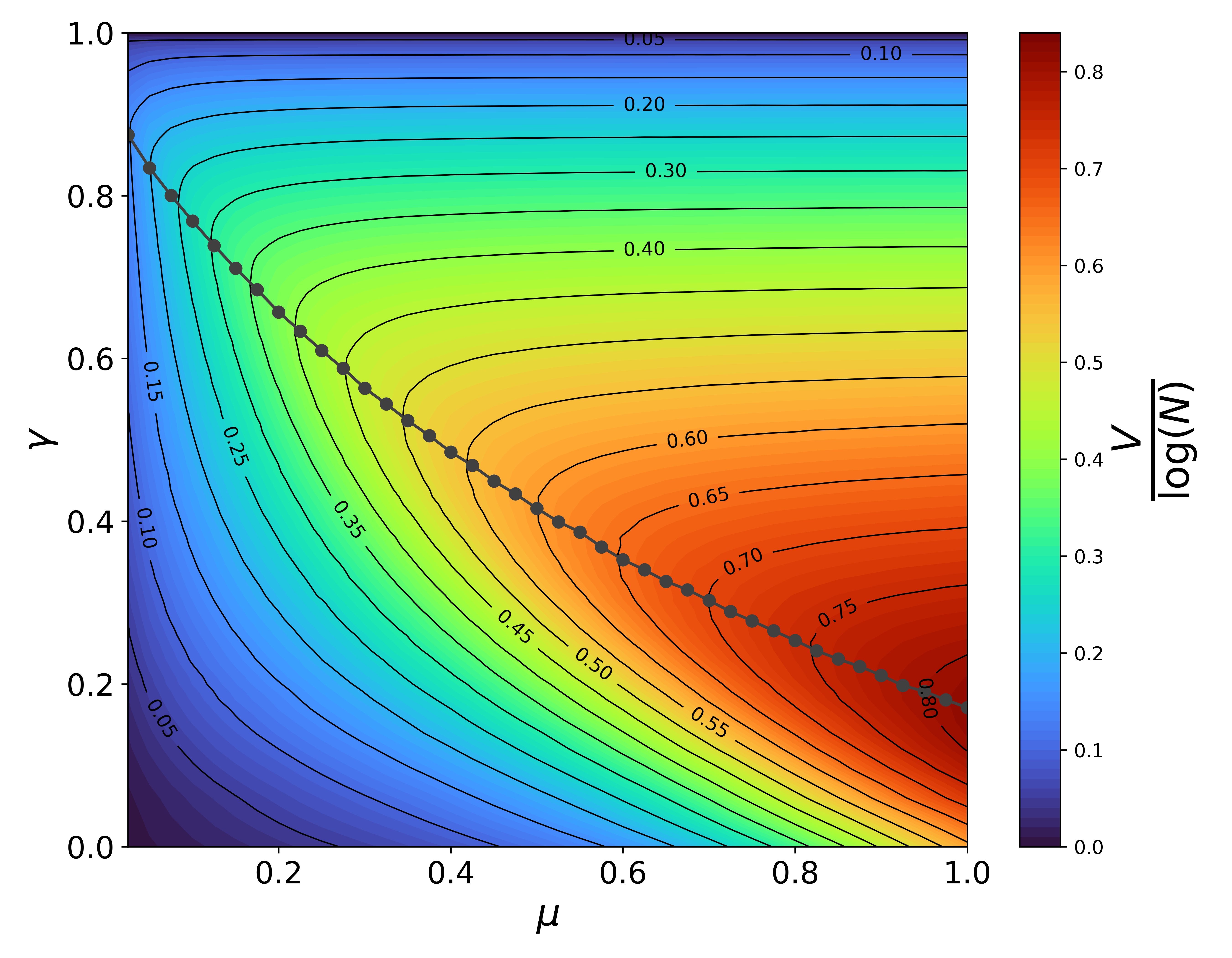}   & 
       \includegraphics[width=0.49\textwidth]{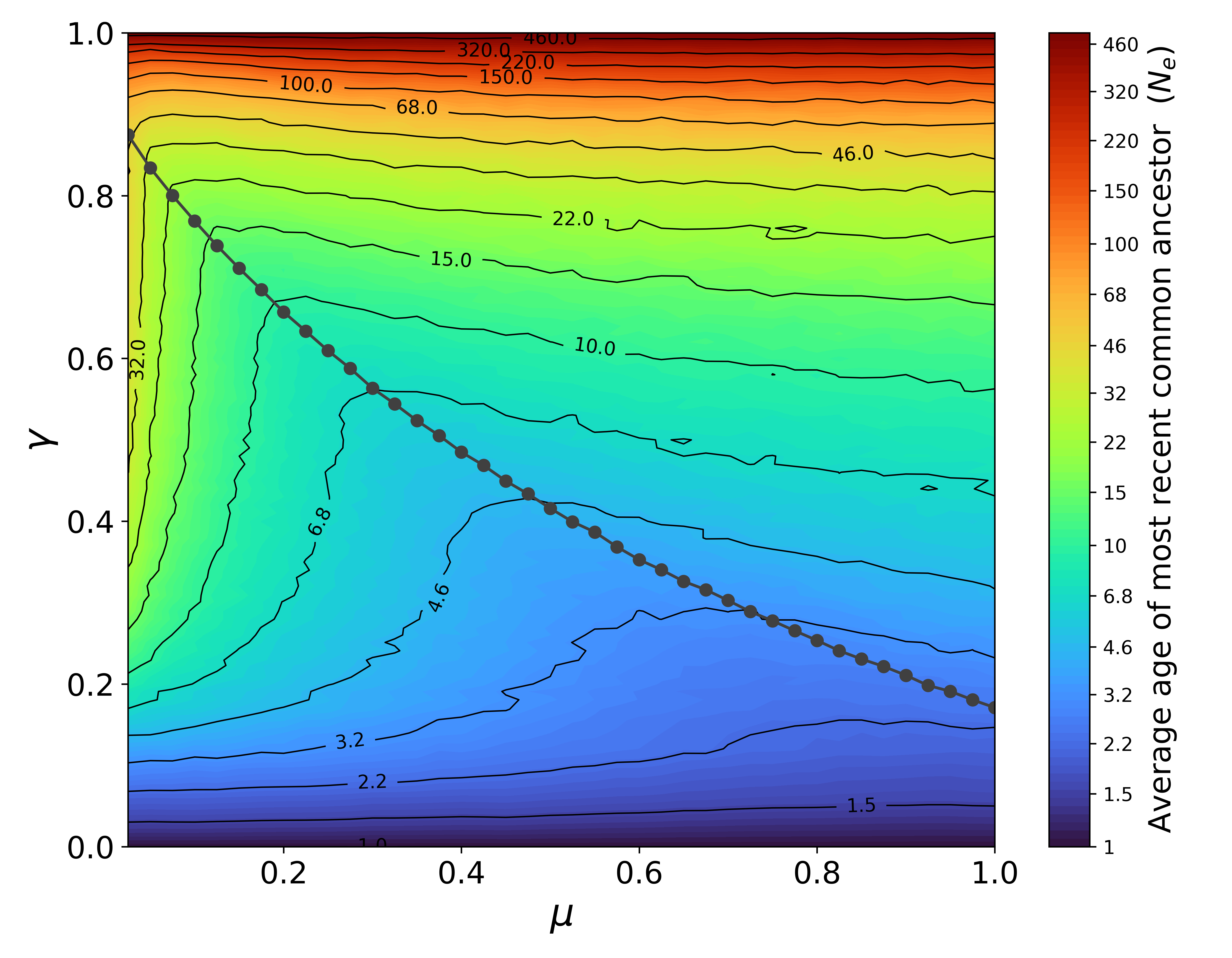} \\
    \end{tabular}
\end{center}
\caption{Rates of adaptation and effective population sizes in our reproduction-selection model with $\alpha=1$ (Laplace distribution for both the phenotypic and genotypic distributions) and a population of size $N=10^5$, plotted as a function of the genotypic to phenotypic standard deviation ratio $\mu$ and the selection pressure parameter $\gamma$ (low $\gamma$ means high selection pressure). 
{\it Left Panel:} Numerically estimated rate of adaptation of the genotype profile. The dotted line is the critical line $\gamma_{c}(\mu)$ defined as the selection pressure that maximizes the rate of adaptation. It separates the regimes of selection of the fittest (lower part of the diagram) and selection of the luckiest (upper part of the diagram). {\it Right Panel:} The effective population size $N_e$, defined as the average age of the most recent common ancestors of random pairs of individuals. The same function $\mu \mapsto \gamma_c(\mu)$ computed from the left panel is reproduced. 
}
\label{fig:2pannelsvne}
\end{figure}

\bigskip

{\bf Rate of adaptation.}
In the left panel of Figure~\ref{fig:2pannelsvne}, we first observe that at a fixed level of noise $(\mu)$, the rate of adaptation is not monotone as a function of the selection parameter ($\gamma$), and we define $\gamma_c(\mu)$ as the value of the selection pressure that maximizes the rate of adaptation for a given value of the phenotypic noise. The function $\mu \mapsto \gamma_c(\mu)$
(black line in Figure~\ref{fig:2pannelsvne})
segregates the parameters $(\mu,\gamma)$ of our models into two domains, that correspond to two regimes for the branching-selection process. 

We refer to the region above the critical line ($\gamma > \gamma_c(\mu)$) as the \emph{selection of the fittest} regime, and to the region below the critical line ($\gamma < \gamma_c(\mu)$) as the \emph{selection of the luckiest} regime. Although this terminology may not be self-explanatory at this stage, its meaning will be clarified in later sections, where we provide an explicit interpretation of the underlying evolutionary dynamics.

In the selection of the fittest regime ($\gamma > \gamma_c(\mu)$), increasing the selection pressure  (i.e.~decreasing $\gamma$ along the vertical axis in the left panel of Figure~\ref{fig:2pannelsvne}) has the effect of increasing the rate of adaptation. 
In the selection of the luckiest regime ($\gamma < \gamma_c(\mu)$), this effect is counter-intuitively reversed: increasing the selection pressure has the effect of \emph{decreasing} the rate of adaptation of the population.
 
As a result, for a fixed value of the noise $(\mu)$, a selection pressure $\gamma_c(\mu)$ entails a maximal rate of adaptation for the population. The selection pressure is too weak ($\gamma>\gamma_c(\mu)$) in the selection of the fittest regime to optimize the speed of adaptation, whereas it is too strong ($\gamma<\gamma_c(\mu)$) in the selection of the luckiest regime and has a detrimental effect on the evolution of the population. This latter case corresponds to an instance of Goodhart's law, i.e., an overly zealous attempt at maximizing a proxy measure (the phenotypic values) hinders the intended goal of maximizing the target (genotypic values). 

\medskip

\noindent
{\bf Effective population size.} 
A fundamental question in population genetics is how natural selection shapes neutral genetic diversity \cite{fisher1999genetical}. Since the early developments of the field \cite{fisher1923xxi,haldane1927mathematical,haldane1963polymorphism,smith1974hitch}, it has been widely accepted that stronger directional selection reduces genetic diversity. Our model is consistent with this classical view, as it predicts an effective population size $N_e \ll N$. 

However, the model reveals a non-intuitive dependence of neutral genetic diversity, as quantified by $N_e$, on the level of phenotypic noise ($\mu$). Decreasing noise (increasing ($\mu$), with ($\gamma$) remaining constant) increases the instantaneous efficiency of selection by tightening the correspondence between genotype and phenotype. Thus, one may be tempted to assimilate a decrease in noise to a decrease in the effective population size $N_e$. In contrast, when $(\gamma)$ is kept fixed,  
we find that genetic diversity depends non-monotonically on the noise amplitude $(\mu)$, as shown in the right panel of Figure~\ref{fig:2pannelsvne}. Further,  the neutral genetic diversity is minimized at a noise level close to the critical line separating the fittest from the luckiest regime.

\medskip

In conclusion, our numerical simulations suggest the existence of a critical line in the selection parameters $(\mu,\gamma)$ where a transition occurs in the rate of adaptation and the ancestral structure of the population. The next sections will be dedicated to provide a theoretical explanation behind this phenomenon and provide a clear explanation for our terminology (luckiest vs fittest).

\bigskip

\section{Deterministic evolution and ancestral structures}
\label{subsec:model}
In order to explain the existence of a critical line where the key evolutionary parameters of the model (rate of adaptation and effective population size) exhibit a non-monotone response to the selection parameters (selection pressure ($\gamma$) and phenotypic noise ($\mu$)), we turn to a mathematical analysis of the model. Here we shall focus on the Laplace case ($\alpha=1$) which is particularly amenable to analysis. 

\subsection{Log-profiles.}\label{subsec:logprof}
Numerical simulations (see top left panel of Fig. \ref{fig:4pannelsprofiles}) reveal that genotypes and phenotypes are typically highly concentrated around the mean of the population. Moreover, the distribution of genotypes in the population, on a logarithmic scale, appears to quickly stabilize on a profile, that travels as a wave over time. We take interest in the dynamic of this logarithmic profile, which describes the extreme genotypes in the population. The typical highest increment of the genotype in a generation being $O(\log(N))$, we rescale this profile appropriately.

More precisely, we define the genotypic profile $g$ of a population as a quantity valued in $\R_+ \cup\{-\infty\}$ so that the number of particles (genotypes) around $x\log(N)$ is approximately given by $N^{g(x)}dx$. In other words, $g(x)$ can be thought of as the limiting stochastic exponent of the population in $N$. In PDE, this is often referred to as the Hopf-Cole transformation of the system (see e.g.~\cite{champagnat2023filling}). 
Note that when $g(x)=-\infty$, this corresponds to having no particle present around $x$. We also consider the phenotypic profile $p(x)$ so that $N^{p(x)}$ captures the number of phenotypes
{\it to the right} of $x\log(N)$ after the reproduction step.

Let $(X,Y)$ be a pair of independent random variables with Laplace distributions of parameters $1$ and $\mu$ respectively and which respectively correspond to the genotypic and phenotypic noise. Direct computations (see SM Section~\ref{sect:tails}) yield the following asymptotic relationships, 
\begin{equation}
\begin{split}
\lim_{N \to \infty} \frac{1}{\log N}  \log \P(X + Y >  y \log N) &= -\min(\mu,1) y_+ \\
  \text{and}\quad \lim_{\epsilon \to 0} \lim_{N \to \infty} \frac{1}{\log N}  \log \P(|X - x \log N|<\eps\log N, X+Y > y \log N) &=-(|x| + \mu(y-x)_+ ),
\end{split}
\label{eq:tail}
\end{equation}
where we write  $x_+ = \max (x,0)$. 
Writing $g_{n-1}$ for the genotypic profile of a population at the $(n-1)$th generation, and $p_n$ for the phenotypic profile of its children,
we observe that $(g_n,p_n)$ evolve according to the following deterministic dynamics:
\begin{equation}
  \label{eq: deterministicDynamic}
  \begin{split}
    p_{n} (x) &= \pi\left[1-\gamma + \sup_{y \in \R}\left( g_{n-1}(y) - \min(1,\mu) (x-y)_+\right)\right]\\
    s_{n} &= \sup\{x \in \R: p_{n}(x) \geq \gamma \}\\
    g_{n}(x) &= \pi\left[1 - \gamma + \sup_{y \in \R} \left( g_{n-1}(y) - |x-y| - \mu (s_n - x)_+\right)\right],
  \end{split}
\end{equation}
with 
\begin{eqnarray*}
\pi(x)= \left\{ \begin{array}{cc} x & \mbox{if $x \geq 0$} \\ - \infty & \mbox{if $x < 0$}. \end{array}\right.
\end{eqnarray*} Let us provide a quick justification for these formulas based on the tail estimates (\ref{eq:tail}). 
The $N^{g_{n-1}(y)}dy$ individuals with a genotype approximately $y \log N$ at generation $n-1$ will create on average 
$$N^{g_{n-1}(y) +1-\gamma - \min(1,\mu) (x-y)_+}dy$$ 
phenotypes larger than $x \log N$ at generation $n$, where we used the first equation of~(\ref{eq:tail}). As a consequence, the number 
of phenotypes larger than $x\log N$ is given by the above formula after integrating over $y$ and using the Laplace method --which can be detected in the $\sup$ of the first equation in~\eqref{eq: deterministicDynamic}. The projector $\pi$ expresses that if the average number of phenotypes larger than $x \log N$ goes to $0$ as $N \to \infty$ then the probability of observing such an individual also becomes vanishingly small. The properties discussed in the rest of the article are derived from studying the deterministic dynamical system~\eqref{eq: deterministicDynamic}. These results translate to the stochastic system in the sense that they hold with high probability over finite time horizons.

The value $s_n$ is then defined such that there are about $N^\gamma$ individuals with phenotypes larger than $s_n \log N$. 
Finally, we obtain $g_n(x)$ by estimating (in the same manner) the number of children with a genotype around $x \log N$ and a phenotype larger than $s_n \log N$, applying the second equation of~(\ref{eq:tail}). The function $\pi$ has the effect of only conserving the positive part of $g_n$, since if $g_{n-1}(y) - |x-y|-\mu(s_n-x)_+ +1-\gamma < 0$, then with high probability none of the individuals at position $y \log N$ will have a descendant at position $x \log N$ with a phenotype larger than $s_n \log N$. We provide a more detailed explanation in SM (Section~\ref{sect:RecursiveExpr}) and a fully rigorous argument for a similar model is given in~\cite{desmarais2025k}.

We show in SM (Proposition~\ref{prop: sngn}) that the evolution can be rephrased in terms of a discrete ``free-boundary" problem
\begin{equation}
  \label{eq: deterministicDynamic2}
  \begin{split}
      g_{n}(x) \ = \  \pi\left[1 - \gamma + \sup_{y \in \R} \left( g_{n-1}(y) - |x-y| - \mu (s_n - x)_+\right)\right], \\
      \mbox{where $s_n$ satisfies $\sup g_{n}= \gamma$},
  \end{split}
\end{equation}
where the second condition reflects the fact that the phenotypic threshold $s_n$ can  be determined solely by the condition that there are $N^\gamma$ individuals left after enforcing the selection step. 
Our computations have been performed for $\alpha=1$. Similar computations for any $\alpha>1$ yield a recursive equation for $g_n$ in the spatial scaling $\log(N)^{1/\alpha}$, which corresponds to the scaling of the fitness wave's diameter.

In Section~\ref{sect: sexual} we also present a sexual variant of our model, which leads to a modified version of the recursive free boundary problem~\eqref{eq: deterministicDynamic2}:
\begin{equation}
\label{eq: deterministicDynamic2sex}
\begin{split}
g_{n}(x) \ = \  \pi\left[ \sup_{y \in \R} \left( 1 - 2\gamma + 2 g_{n-1}(y) - |x-y| - \mu (s_n - x)_+\right)\right], \\
\mbox{where $s_n$ satifies $\sup g_{n}= \gamma$}.
\end{split}
\end{equation}
In Section~\ref{sect: sexual} we derive this formula and highlight the similarities and main differences between the behaviors of the asexual and sexual models.

\medskip

\subsection{Traveling wave solution}\label{sect:traveling-wave}

We only take interest in the long-term asymptotic behaviour of the profile $(g_n)$ defined by \eqref{eq: deterministicDynamic}. 
We say that a function $g$ is a traveling wave for the dynamic \eqref{eq: deterministicDynamic} with speed $v$ if, assuming that $g_0 = g$, we have
\begin{equation}
  \label{eqn:travelingWave}
  g_n(x) =  g(x-nv).
\end{equation}
In other words, the dynamic has the effect of shifting the genotypic profile by $v$, where $v$ is interpreted as the speed of evolution in the natural scale of the system ($\log(N)$ for $\alpha=1$, as in the numerical simulations of Fig.~\ref{fig:2pannelsvne}; and $\sqrt{\log(N)}$ for $\alpha=2$ as in Fig.~\ref{fig:2pannelsvne_Gaussian} in SM).
Examples of such traveling wave solutions, and convergence to those, are depicted in Fig.~\ref{fig:4pannelsprofiles} (bottom and top right).

In SM, we show the existence and uniqueness of a traveling wave solution under minimal assumptions (Theorem~\ref{thm:main}). The crucial part of this result is the existence of a transition segregating the parameter space $(\mu, \gamma)$ into two sub-regions delimited by an explicit curve
\begin{equation}\label{eq:gamma_c}
  \gamma_{c} : \mu \mapsto \gamma_c(\mu) := \frac{\floor{1/\mu}}{\floor{1/\mu}+1} \left( 1 - \frac{\mu}{2 - \floor{1/\mu}\mu}\right),
\end{equation}
drawn in the right panel of Fig.~\ref{fig: rateOfCvSM} in SM (in solid black), and corresponding to distinct evolutionary regimes mirroring our numerical simulations. 
Our deterministic analysis allows us to characterize the two regimes in different ways, which will reflect the numerical observations of Fig.~\ref{fig:2pannelsvne}. This is explained in detail in the next sections. 

\bigskip

\subsection{Selection of the fittest or selection of the luckiest.} First, our mathematical results (Theorem~\ref{thm:main} in SM) identify  the phase transition observed in numerical simulations for the rate of adaptation in the left panel of Fig.~\ref{fig:2pannelsvne}. Below the critical curve ($\gamma<\gamma_{c}(\mu)$), lowering selection by increasing $\gamma$ has the effect of increasing  the rate of adaptation. 
Above the critical curve ($\gamma>\gamma_{c}(\mu)$) the effect is reversed so that the optimal level of selection is attained at the intermediate level $\gamma_{c}(\mu)$. 

Our analysis reveals a clear explanation behind this phenomenon. Consider the population right after reproduction (that is before implementing the selection step). There are $N=rK$ genotypes available among which $K<N$  will be chosen according to their phenotypes in the selection phase. We now ask the following question: since selection only acts on phenotypes, do we always pick the very best genotype?  The answer depends on the evolutionary regime at hand. 
\begin{itemize}
\item {\it  $\gamma<\gamma_{c}(\mu)$: the best genotype is never picked (selection of the luckiest);}
\item {\it $\gamma>\gamma_{c}(\mu)$: the best genotype is always selected (selection  of the fittest)}.
\end{itemize}
To give an intuition behind this phenomenon, we note that when selection is strong ($\gamma<\gamma_c$), we only allow a tiny fraction of the population to reproduce. This is obviously a risky strategy since selection only picks a few individuals whose phenotype can potentially inflate their underlying genotypes. In contrast, weak selection ($\gamma>\gamma_c$) enforces a diversification of the risk so that the highest genotype is always picked. In other words, the extremes of the phenotypic and genotypic spaces are partially decorrelated, and when selection is too strong, it will miss the children with exceptionally high genotypes.

Beyond the intuition, the previous statement is made mathematically precise in Theorem~\ref{thm:reprod} and the preceding paragraphs in SM. The content of this formal result is related to geometric properties of the traveling wave solution, as illustrated in the bottom panels of Figure~\ref{fig:4pannelsprofiles}. Essentially,
the distinction between the selection of the luckiest and fittest regimes can be seen in the position of the phenotypic threshold $s$. Recall that $s$ corresponds to the minimal phenotypic value in order to be selected at the next generation. 
We assume that the population has reached the traveling wave state and we start from a population with a genotypic makeup corresponding to the traveling wave (blue) at time $t$, and make one step of the evolution to obtain the profile at time $t+1$ from~\eqref{eq: deterministicDynamic} (green). This genotypic profile is constructed in two successive steps: we first generate the genotypic profile of the children (orange), and then obtain the genotypic profile at time $t+1$ (green) by thinning the orange profile with our selection procedure. In particular, the difference between the orange and the green curves corresponds to the log-number of genotypes eliminated during the selection process. We can now distinguish between two geometries of the wave.

In the selection of the luckiest regime (Fig.~\ref{fig:4pannelsprofiles} bottom left), the threshold $s$ exceeds the genotypic profile of the children (orange), indicating that in surviving individuals, noise must necessarily provide an exceptionally large boost to the phenotype in order to reach the high phenotypic threshold $s$.
As it is apparent from the figure, the tip of the  reproduction curve (orange) is strictly higher than the genotypic curve at time $t+1$ (green) indicating that all the children with the best genotypes are washed out by selection. (All the children located between the two tips are not selected.)  
Therefore, in the selection of the luckiest regime, the survival of each individual is partially explained by having an unusually high phenotype. The (few) individuals with a very high genotype will tend to have average phenotype, those are therefore not preserved by the selection step, which explains the drop in the rate of adaptation of the population.

In contrast, in the selection of the fittest regime (Fig.~\ref{fig:4pannelsprofiles} bottom right), all the children with a genotype above $s$ will be selected. Geometrically, this corresponds to the alignment of the reproduction curve (orange) and the genotypic curve at time $t+1$ (green).

Further discussion of Fig.~\ref{fig:4pannelsprofiles}, including explanations of the function $A(x)$ and the terminology ``fully pulled" and ``semipulled" can be found in the next subsection and in the Summary after Theorem~\ref{thm:reprod} in SM.

\bigskip

\noindent
{\bf Low phenotypic noise.} 
If $\mu > 1$, a close inspection of Eq.~(\ref{eq:gamma_c}) reveals that
$\gamma_{c}(\mu)=0$ so that only the selection of the fittest regime persists.
In this regime, the large increment of the phenotype of an individual is primarily explained by a large increment of its genotype (see Eq. \ref{eq:tail}), and no non-trivial optimum may be found for the selection pressure~($\gamma$) when we optimize on the rate of adaptation ($v$). That is, in this case, we have $\gamma_c=0$, meaning that the optimal adaptation rate is to select a constant (independent of $N$) number of individuals with the largest phenotypes. 

\begin{figure}
    \centering
\begin{tabular}{cc}
 \includegraphics[width=0.3\textwidth]{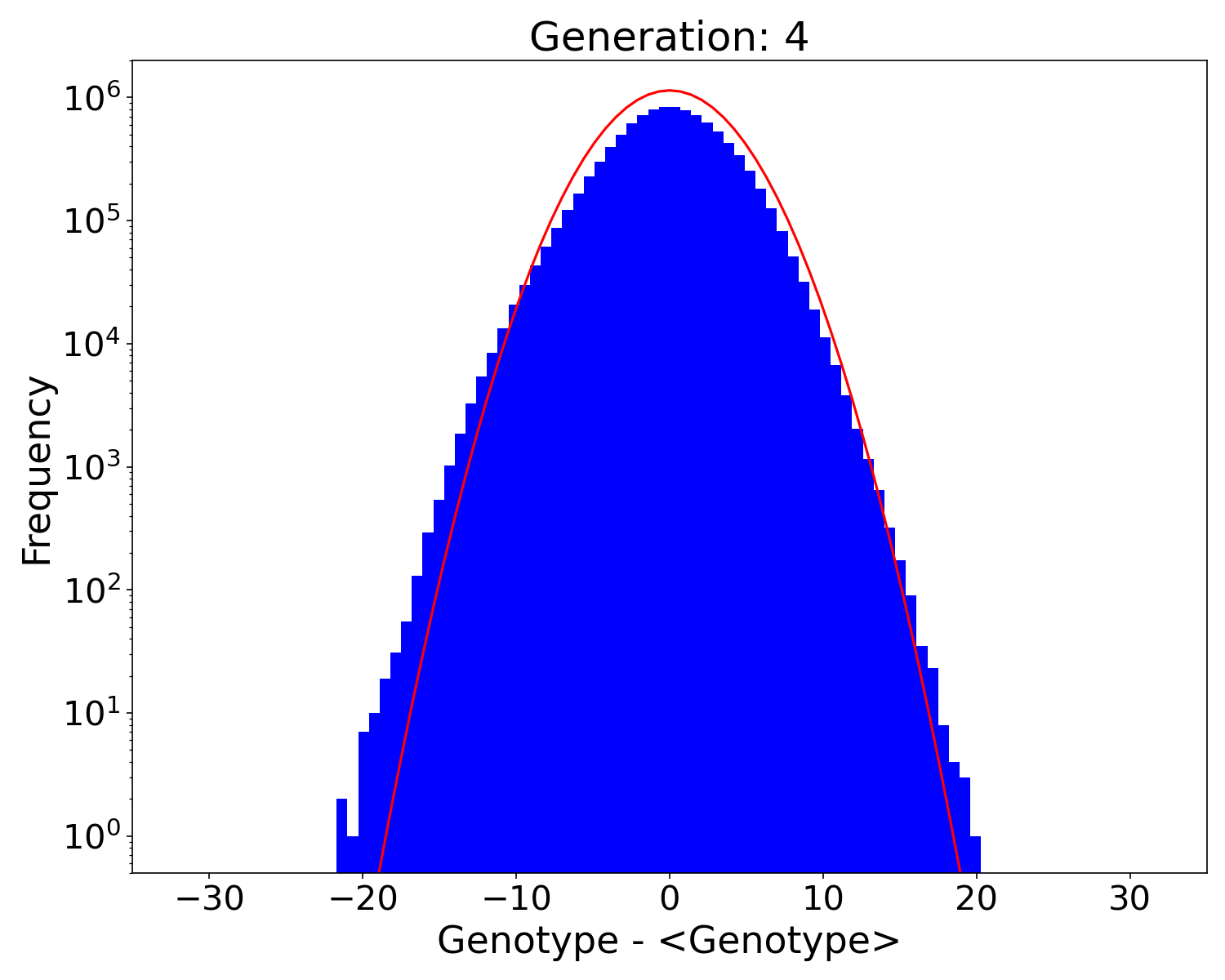} &
\includegraphics[width=0.4\textwidth]{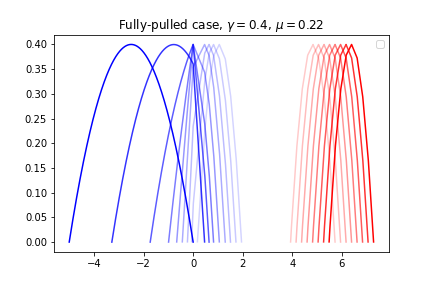} \\

 \begin{tikzpicture}[xscale = 2, yscale=6]
\draw (-1.5,.7) node {Fully pulled case};
\draw (-1.5,.63) node {$\gamma = 0.4$, $\mu = 0.4$};
\draw[->] (0,-.1) -- (0,.8);
\draw (.04,.4) -- (-.02,.4) node[left] {$\gamma$};
\draw[->] (-2.2,0) -- (1,0);

\draw [thick, color=blue!70!black] (0,0) -- (-.5,.3) -- (-1,.4) -- (-1.5,.3) -- (-2,0) node[above left] {$g$};

\draw [thick, color=green!70!black] (.5,0) -- (0,.3) -- (-.5,.4) -- (-1,.3) -- (-1.5,.0)  node[above left] {$g_1$};

\draw [thick, color = orange] (.6,0) -- (0.5,.1) -- (0,.6)  -- (-.2,.7);
\draw[color=orange] (0,.65)  node[below left] {$r$};
\draw [color=red, densely dashed] (.75,0) -- (.75,.7);
\draw[color=red] (.75,.69)  node[left] {$s$};

\draw [color=purple] (-.8,-.0) node {$\bullet$} node[below]{$x$};
\draw [color=purple] (-.3,-.0) node {$\bullet$} node[below]{$A(x)$};
\draw [color=purple] (0,-.0) node {$\bullet$} node[below right]{$A^2(x)$};
\end{tikzpicture}

&

 \begin{tikzpicture}[xscale = 2, yscale=6]
\draw (-1.5,.7) node {Semipulled case};
\draw (-1.5,.63) node {$\gamma = 0.5$, $\mu = 0.4$};
\draw[->] (0,-.1) -- (0,.8);
\draw (.05,.5) -- (-.05,.5) node[left] {$\gamma$};
\draw[->] (-2.2,0) -- (1,0);

\draw [thick, color=blue!70!black] (0,0) -- (-.1,.1) -- (-.6,.4) -- (-1.1,.5) -- (-1.6,.4) -- (-2.1,.1) -- (-2.2,0) node[above left] {$g$};

\draw [thick, color=green!70!black] (.5,0) -- (.4,.1) -- (-.1,.4) -- (-.6,.5) -- (-1.1,.4) -- (-1.6,.1) -- (-1.7,0)  node[above left] {$g_1$};

\draw [thick, color = orange] (.5,0) -- (.2,.3) -- (-.1,.6) -- (-.3,.72);
\draw [color=orange] (0,.69) node[left] {$r$};

\draw [color=red, densely dashed] (.4,0) -- (.4,.7) ;
\draw [color=red] (.4,.69) node[left] {$s$};

\draw [color=purple] (-.8,-.0) node {$\bullet$} node[below]{$x$};
\draw [color=purple] (-.3,-.0) node {$\bullet$} node[below]{$A(x)$};
\fill [color=purple] (0,.01) -- (0,-.01) -- (-.1,-.01) -- (-.1,.01) -- (0,.01)  -- cycle;
\draw [color=purple] (-.1,-.0) node [below right] {$A^2(x)$};
\end{tikzpicture} 

\end{tabular}

    \caption{
    {\it Top left panel.} Distribution of genotypes in a large population, plotted on a logarithmic scale. We observe for large populations the appearance of a deterministic genotypic profile. The profile is well  approximated by a quadratic profile (in red).
    {\it Top right panel.} Convergence to the traveling wave solution of the deterministic dynamic, started from an initial Gaussian distribution (i.e. a parabolic log-profile). Blue curves show the evolution of the first 8 steps of the genotypic profile, the red curves show steps 16 to 24. We observe that by that time, the genotypic profile has converged to a traveling wave solution.
    {\it Bottom panels. } Schematic description of the iteration of the traveling wave over one step. The blue line represents the initial traveling wave profile, the orange line the genotypic profile of the children; that is, the profile after reproduction but before selection (we only display the rightmost part of this profile). The green line corresponds to the genotypic profile of the selected children, and is a translation of the blue profile. The vertical red dotted line corresponds to the position of the selection threshold -- in particular, note that the green and orange lines are identical to the right of that threshold, indicating that genotypes to the right of the threshold will be selected. The bottom purple dots correspond to iterations of the ancestral function (see Section~\ref{sec:fullysemi}) in the stationary setting indicating the successive ancestors in the population. 
    }
    \label{fig:4pannelsprofiles}
\end{figure}

\noindent

\subsection{Fully pulled and semipulled waves}\label{sec:fullysemi}

In models of adaptation, long-term evolutionary dynamics are often described by pulled traveling waves, whose advance is controlled by rare individuals at the leading edge of the population \cite{hallatschek2011noisy}.

The critical curve $\mu\to\gamma_{c}(\mu)$, which separates the survival of the fittest and luckiest regimes,
also delineates a phase transition between what we call semipulled and fully pulled waves.
To explain the nature of the transition, we first note that the dynamics of the wave contains some partial information on the ancestral structure of the underlying  population. Recall that 
$$
g_{n}(x) \ = \  \pi\left[1 - \gamma + \sup_{y \in \R} \left( g_{n-1}(y) - |x-y| - \mu (s_n - x)_+\right)\right],
$$
where the maximization problem arises from integrating the contribution of all the population at time $n-1$ and then applying the Laplace method to extract the main contribution at $x$ from the previous generation. This overwhelming contribution by a single location is captured by the $\sup$ in the previous formula. This entails that
the ancestor of an individual at position $x$ is likely to be found at
$$
\mbox{Argmax}_{y \in \R} \left( g_{n-1}(y) - |x-y| - \mu (s_n - x)_+\right)
$$
in the previous generation.
Let us now consider an individual at distance $x$ from the extremal genotypes (right tip of the wave).  From the previous equation, we deduce that the distance of its ancestor
from the tip (of the wave in the previous generation) is given by the ancestral map
\begin{eqnarray}\label{eq:ancestor}
  A (x) =  \mathrm{Argmax}_{y \in \R} \{g(y) - |x+v-y|\},
\end{eqnarray}
where $g$ is the traveling wave. Let $A^{(n)}$ be the $n$th iteration of the ancestral map $A$. That is, the genotype of the ancestor $n$ generations backward in time is at distance $A^{(n)}(x)$ from the tip of the wave $n$ generations backward in time. It turns out that the two evolutionary regimes  dictate different behaviors for the ancestral map $A$. This is formally proved in Theorem~\ref{thm:ancestralLineages} in SM, and graphically explained in the bottom panels of Fig~\ref{fig:4pannelsprofiles} (see also the Summary after Theorem~\ref{thm:reprod} in SM). 

In the selection of the luckiest regime ($\gamma<\gamma_{c}(\mu)$), $A^{(n)}(x)$ reaches $0$ in finitely many generations so that the ancestor of any individual is directly at the tip of the wave. Thus, the wave at a given time is generated (or pulled) by few extremal individuals close to the tip and is said to be {\it fully pulled}. See bottom left panel of Fig~\ref{fig:4pannelsprofiles}. 

For the selection of the fittest regime ($\gamma>\gamma_{c}(\mu)$), $A^{(n)}(x)$ also reaches an equilibrium in finitely many generations. However, and in contrast to the luckiest regime, the maximization problem (\ref{eq:ancestor}) becomes degenerate. More precisely, iterating the ancestral map $A$ gets the ancestor closer to $0$ (i.e.~closer to the tip). After a few iterations, the maximum of the ancestral function $A$ is not attained at a single point but on an interval that we refer to as the {\it ancestral interval}. See bottom right panel of Fig.~\ref{fig:4pannelsprofiles}.  The interpretation of this  phenomenon is that the positions of ancestors of an individual are uniformly distributed on the ancestral interval after a few generations. A crucial observation is that the ancestral interval (1) contains the tip, but (2) does not contain the bulk, that is, the point where the wave is maximized. As a consequence of (2), ancestral individuals deviate substantially from the mean but in contrast to the luckiest regime, a typical ancestor is not directly at the tip, but instead it is uniformly distributed on the ancestral interval. Thus, the wave is still pulled by extremal individuals, but those extremal individuals are typically located at an intermediate location between the tip and the mean. The wave is now said to be only {\it semipulled}. 

Pulled traveling waves are commonly treated as a single universality class, in which adaptation is driven by rare individuals at the leading edge; our results show that this class itself splits into distinct semipulled and fully pulled regimes with qualitatively different ancestral structures.

\bigskip

\noindent

\subsection{ Effective population size} \label{sect:effecti-pop}

The effective population size is defined as the expected coalescence time of two distinct ancestral lineages. While our earlier analysis is based on deterministic approximations, the genealogical structure of the system retains its stochastic nature in the large population limit and random coalescence times cannot be inferred solely from the hydrodynamic limit. Consequently, the genealogical structure of the population (and thus $N_e$) highly depends on the system fluctuations.

Analyzing system fluctuations is inherently more complex, so we address this challenge using a proxy model inspired by the exponential $K$-branching random walk ($K$-BRW) developed by Brunet, Derrida, and collaborators \cite{Brunet1997,Brunet2007}. In this original model, individuals reproduce an infinite number of offspring distributed according to an exponential Poisson point process centered at the parental value. After reproduction, truncation selection is applied, retaining only the 
$K$ rightmost genotypes. This framework was originally introduced to provide the first analytical approach to studying fluctuations in F-KPP fronts, leveraging the particle system's integrability.

A recent generalization, known as the noisy exponential $K$-BRW, modifies this framework by blurring the effect of selection: offspring are reproduced as before but instead of truncation selection, individuals are randomly selected according to Gibbs sampling \cite{Cortines2018,schertzer2023relative}. For more details, see Section~\ref{sect:k-lineages} of the SM.

Although the noisy exponential $K$-BRW may initially seem distinct from our study, Section \ref{sect:k-lineages} of the SM will demonstrate that the differences are largely superficial. We show that the noisy exponential $K$-BRW retains many key features of the original model, including a similar hydrodynamic limit and the transition between semipulled and fully pulled regimes. This suggests that both models belong to the same universality class and share similar genealogical structures. By leveraging the integrability of the noisy exponential $K$-BRW and results derived in \cite{schertzer2023relative}, we compute the effective population size for this integrable model (see~\eqref{eq:Ne-below-gammac} and~\eqref{eq:Ne-above-gammac} in SM). Comparing universal quantities between the two models allows us to propose an ansatz for the effective population size in the original model:
\begin{eqnarray}
\forall \gamma<\gamma_{c}(\mu), & N_e(\mu,\gamma) \  \approx  \ \frac{1}{\mu} \label{eq:T2-strong} \\
\forall \gamma>\gamma_{c}(\mu), & N_e(\mu,\gamma) \approx \chi(\mu,\gamma) \log(N),
\label{eq:T2-weak}
\end{eqnarray}
where  $\chi(\mu,\gamma)$ is
the size of the ancestral interval and is given by Eq.~\eqref{eq: vminussigma} in Theorem~\ref{x} in SM and is in good qualitative accordance with our numerical simulations; see Fig.~\ref{fig: rateOfCv}. 
In particular, ~\eqref{eq:T2-strong} and~\eqref{eq:T2-weak} explain the non-monotonicity of the function $\mu\mapsto N_{e}(\mu,\gamma)$; indeed, it is straightforward to check that $\mu\mapsto\chi(\mu,\gamma)$ is an increasing function. As in Fig~\ref{fig:2pannelsvne}, the change of monotonicity again occurs at the critical line: in the right panel of Fig.~\ref{fig: rateOfCv}, for $\gamma=1/2$, this is close to $\mu_c = \gamma_c^{-1}(1/2)=1/3$ (see $\gamma_c(1/3)=1/2$ from~\eqref{eq:gamma_c}).
We note that the discontinuity in the theoretical predictions corresponds to a first order phase transition and that this discontinuity is smoothed out in the finite population regime.

Finally, our comparative approach allows us to extract more information on the ancestral structure of the population. Whereas $N_e$ depends only on the coalescence time of two lineages, our comparative analysis describes the random genealogy spanned by any number of lineages. In SM, we show that in the selection of the fittest regime (resp.~luckiest regime)
the genealogy should converge to a Bolthausen--Sznitman coalescent (resp.~Poisson--Dirichlet coalescent) \cite{Pitman1999}. See Section~\ref{sect:k-lineages} for more details.

\subsection{Slow convergence to the deterministic limit.}

Figure \ref{fig: rateOfCv} demonstrates that our deterministic approximations offer a reliable qualitative prediction of the stochastic model. Notably, the finite-size particle system retains a similar phenomenological structure to its deterministic counterpart, including the sharp phase transition between the \emph{selection of the fittest} and \emph{luckiest} regimes, as shown in Fig.~\ref{fig: rateOfCv}. However, the convergence to the hydrodynamic limit is observed to occur at an exceptionally slow rate.

This slow convergence is a well-documented characteristic of branching-selection particle systems~\cite{Derrida2008}. To better understand the deviations from the infinite population limit, we utilize the noisy exponential $K$-BRW introduced in SM Section~\ref{sect:k-lineages} and compute the convergence rate (in terms of $N$, where $K=N^\gamma$) in SM Section~\ref{sec:corrections}. The speed of convergence of the adaptation rate occurs at a very slow $(\log N)^{-1}$ rate in the fully pulled regime, and at an even slower rate $\log(\log(N))/(\log N)$ in the semipulled regime.

Higher order corrections affect the convergence to the limiting genotypic profile as well. Although the limiting traveling wave has a piecewise linear log-profile (as shown in Theorem~\ref{x} in SM and the bottom panels of Fig.~\ref{fig:4pannelsprofiles}), simulations indicate that the log-profile is still well approximated by a quadratic profile due to finite-population effects (see the top right panel of Fig.~\ref{fig:4pannelsprofiles}). A quadratic log-profile corresponds to a Gaussian genotype distribution, which aligns with the typical predictions of quantitative genetics models (see~\cite{RoS} and the references therein). 

\begin{figure}[H]
\begin{center}
    \begin{tabular}{cc}

    \includegraphics[width=0.4\textwidth]{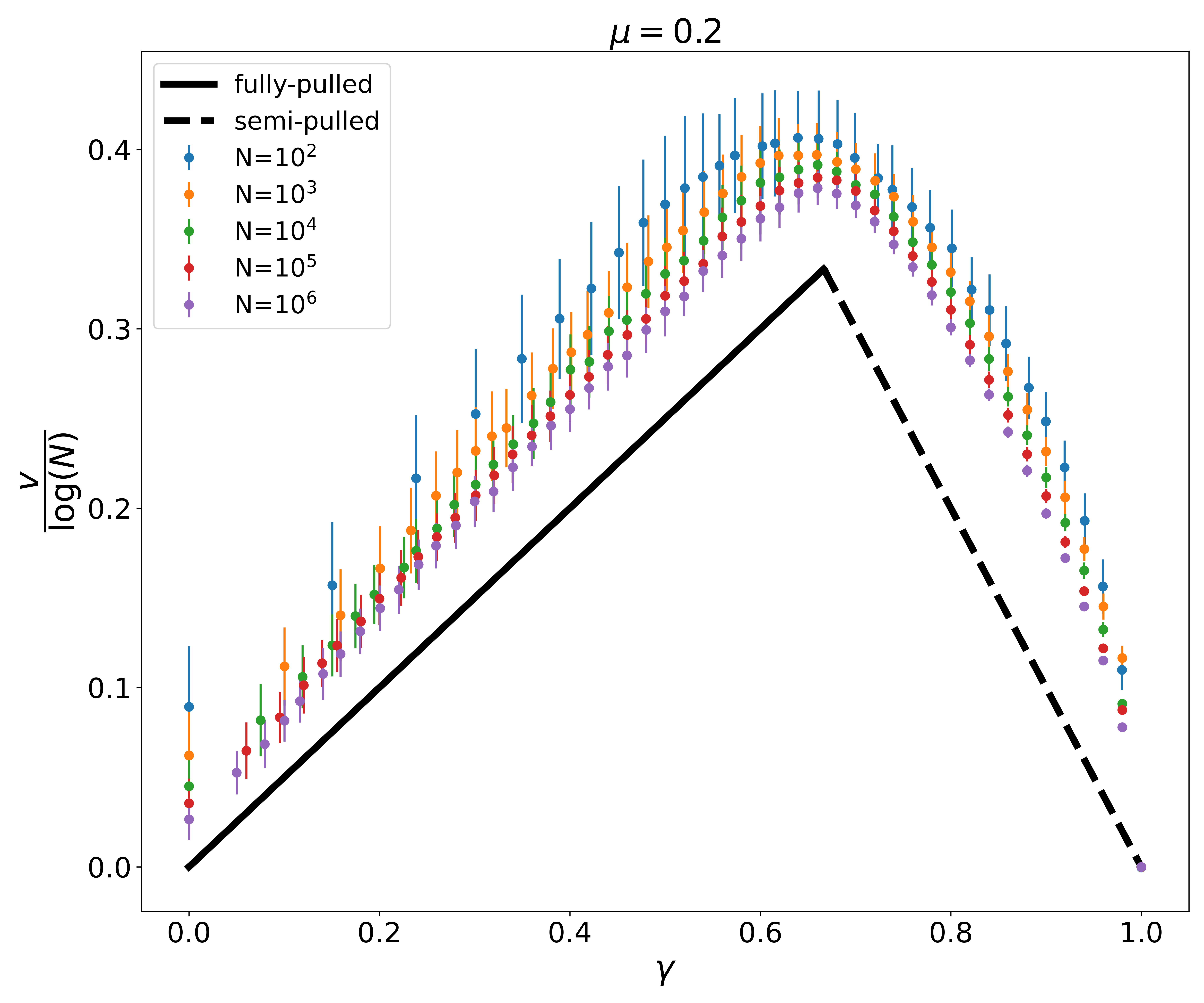}
    
    &
    
    \includegraphics[width=0.4\textwidth]{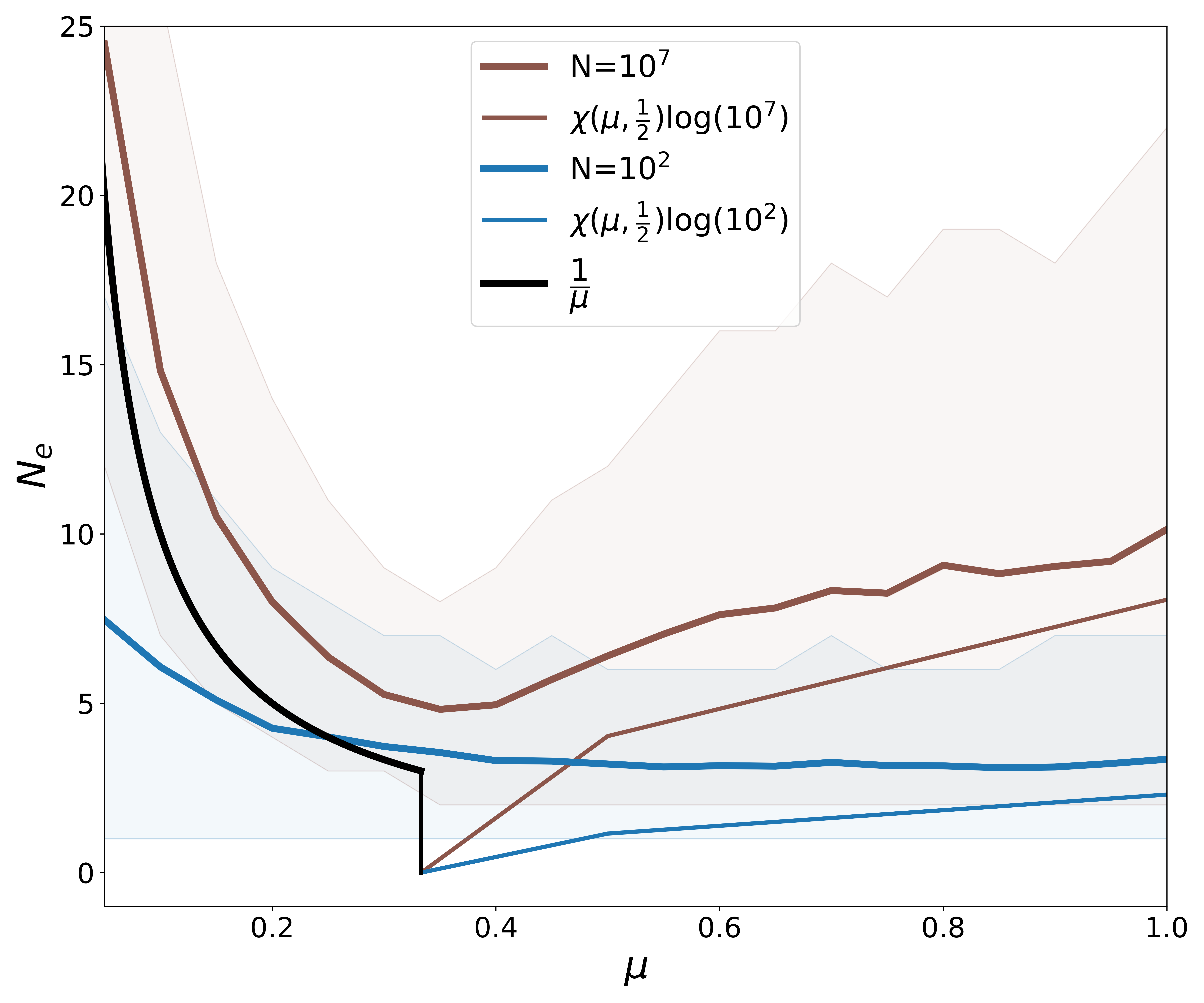} 
    \end{tabular}
\end{center}
    \caption{
    {\it Left Panel:
    } Rate of adaptation for the deterministic limiting model and its finite-size population correction, for the original stochastic model.
    The rate of convergence is notably slow.
    {\it Right Panel.}  Effective population size for $N=10^7$ (thick brown curve, average over 1000 realizations) and $N=10^2$ (thick blue curve, average over 1000 realizations) with $\gamma = 1/2$. The shaded areas show the 10th to 90th percentiles of the distribution for both cases (the middle area corresponds to the overlap). The solid black line is the theoretical approximation~\eqref{eq:T2-strong} in the luckiest regime; the thin brown and blue lines are the theoretical approximations~\eqref{eq:T2-weak} in the selection of the fittest  regime. 
    For large populations ($N=10^7$), a change of monotonicity occurs close to the predicted value, which corresponds to the end point ($\mu_c=\gamma_c^{-1}(1/2)$) of the curve $1/\mu$.}
    \label{fig: rateOfCv}
\end{figure}

\section{Discussion}

In the spirit of Fisher’s observation that ``natural selection is not evolution,'' our analysis shows that selection strength is not a monotonic driver of evolutionary efficiency. Increasing reproductive skew plays a dual role: it initially amplifies genetic differences, but beyond a critical threshold it preferentially amplifies phenotypic noise. This trade-off is not captured by short-term responses to selection \cite{lande1979quantitative,falconer1996introduction}, but only emerges on evolutionary time scales, where adaptation rate and genetic diversity arise endogenously from the interaction between mutation, selection, and phenotypic variability.

A central outcome of this interplay is the emergence of two distinct evolutionary regimes—selection of the fittest and selection of the luckiest—separated by a sharp transition. In the former, selection reliably amplifies genetic signal, and stronger selection accelerates adaptation. In the latter, overly strong selection amplifies stochastic phenotypic fluctuations, so that reproductive success is dominated by chance rather than heritable fitness. In this regime, the luckiest individuals are systematically favored over the genetically fittest, slowing adaptation and collapsing effective population size. As a result, optimal adaptation is attained at an intermediate selection strength that balances signal amplification with robustness to luck. This mechanism provides a biological realization of Goodhart’s law: when a phenotypic proxy becomes an overly optimized target, it ceases to reliably promote genetic improvement.

Although population fitness profiles remain approximately Gaussian, their variance is not imposed but self-organized, emerging from long-term evolutionary dynamics under noisy selection. The transition between the two regimes marks not only a change in adaptation speed, but also a qualitative shift in evolutionary predictability. In the luck-dominated regime, long-term evolutionary outcomes become increasingly sensitive to rare stochastic events affecting a few extreme individuals, rather than to typical genetic differences across the population.

If phenotypic noise itself is evolvable \cite{ito2009selection,vinuelas2012towards,keren2016massively,duveau2018fitness,hill2010genetic,pelabon2010evolution}, our results suggest a potential trade-off between speed and diversity. Reducing noise can accelerate adaptation, but in luck-dominated regimes it may simultaneously reduce effective population size. By contrast, when selection reliably targets genetic signal, lower noise enhances both adaptation rate and genetic diversity.

Importantly, all of the above results are robust across modeling assumptions. We observe the same qualitative behavior for different distributions of genotypic and phenotypic noise, including Gaussian and Laplace cases, and for both asexual and sexual reproduction. The existence of the two regimes and the transition between them do not rely on specific microscopic details, but instead reflect a generic consequence of selection acting on noisy phenotypic proxies. This universality suggests that the trade-off we identify should arise broadly in evolving populations subject to phenotypic variability.

The importance of luck has been widely discussed in evolutionary biology, from historical contingency at the level of species \cite{gould1989wonderful} to variation in individual lifetime reproductive success both theoretically and experimentally. A recent experimental study~\cite{zipple2025} demonstrates that in genetically identical male mice, minor early-life differences can be amplified through competitive social feedback, resulting in large inequalities in reproductive success. In addition, several theoretical studies~\cite{snyder2021time, snyder2018pluck, tuljapurkar2009dynamic, caswell2009stage, van2017lifetime, hartemink2018variance}, illustrated with empirical case studies, found that much of the variation in lifetime reproductive success can be attributed not to individual quality or fitness differences, but to luck, especially in early life. Related ideas have also been influential in the social sciences, in the context of individual financial or scientific success and inequality~\cite{merton68, cook2010winner}.

While previous work has primarily emphasized the role of chance in shaping short-term individual outcomes or macroevolutionary patterns, our results provide a complementary perspective by linking individual-level luck to population-level evolutionary dynamics. In our framework, the stochastic survival or loss of extreme genotypes induces qualitative changes in adaptation rate and effective population size over evolutionary time scales.

We believe that our results can be experimentally tested in Lenski--type experiments \cite{lenski2017experimental}. A possible idea is that the phenotype--based reproduction skewness can be implemented by setting up a challenge (e.g.~swimming through a tube in which a nutrient gradient is maintained) and allowing varying fractions of the population, based on the rank achieved in the challenge, to reproduce.

Finally, from a theoretical perspective, our findings place evolutionary dynamics within a broader classification of stochastic traveling waves. While previous work has established a zoology of fully pushed and semipushed waves \cite{Brunet2001,Berestycki2013,birzu2018fluctuations,tourniaire2021branching}, pulled waves have traditionally been viewed as a single universality class dominated by their leading edge. Our results uncover a new zoology for pulled waves, distinguishing fully pulled and semipulled regimes with sharply different ancestral structures and sensitivities to stochastic extremes. This classification clarifies how luck, selection, and noise jointly shape adaptation and genealogies.

\subsection*{Data and Software Availability}
A software implementation written in C of the model described in Section \ref{sec:theModel}, the datasets produced with that software, as well as the Python scripts used to analyze the dataset and produce the figures are freely available on Zenodo: \url{https://doi.org/10.5281/zenodo.19100553}.

\subsection*{Acknowledgments}
E.S. and Z.T. thank Colin Desmarais for helpful discussions.
F.P. gratefully acknowledges funding from Tamkeen under the NYUAD Research Institute grant CG009 to Mubadala ACCESS, as well as support from the NYUAD HPC center. E.S. gratefully acknowledges support from the FWF project PAT3816823. B.M. was partially supported by the MITI interdisciplinary program entitled 80PRIME GEx-MBB, as well as the ANR MBAP-P (ANR-24-CE40-1833) project. This work was largely completed while Z.T.~was a postdoctoral researcher at the University of Vienna.

\printbibliography

@article{lenski2017experimental,
  title={Experimental evolution and the dynamics of adaptation and genome evolution in microbial populations},
  author={Lenski, Richard E},
  journal={The ISME journal},
  volume={11},
  number={10},
  pages={2181--2194},
  year={2017},
  publisher={Oxford University Press}
}

@inproceedings{
pan2022misalignment,
title={The Effects of Reward Misspecification: Mapping and Mitigating Misaligned Models},
author={Alexander Pan and Kush Bhatia and Jacob Steinhardt},
booktitle={International Conference on Learning Representations},
year={2022},
url={https://openreview.net/forum?id=JYtwGwIL7ye}
}

@article{danielsson2002RiskModeling,
  title={The emperor has no clothes: Limits to risk modelling},
  author={Dan{\i}elsson, J{\'o}n},
  journal={Journal of Banking \& Finance},
  volume={26},
  number={7},
  pages={1273--1296},
  year={2002},
  publisher={Elsevier}
}

@article{lehman2011abandoning,
  title={Abandoning objectives: Evolution through the search for novelty alone},
  author={Lehman, Joel and Stanley, Kenneth O},
  journal={Evolutionary computation},
  volume={19},
  number={2},
  pages={189--223},
  year={2011},
  publisher={MIT Press}
}

@article{van2003front,
  title={Front propagation into unstable states},
  author={Van Saarloos, Wim},
  journal={Physics reports},
  volume={386},
  number={2-6},
  pages={29--222},
  year={2003},
  publisher={Elsevier}
}

@article{hallatschek2008gene,
  title={Gene surfing in expanding populations},
  author={Hallatschek, Oskar and Nelson, David R},
  journal={Theoretical population biology},
  volume={73},
  number={1},
  pages={158--170},
  year={2008},
  publisher={Elsevier}
}

@article{hallatschek2010life,
  title={Life at the front of an expanding population},
  author={Hallatschek, Oskar and Nelson, David R},
  journal={Evolution},
  volume={64},
  number={1},
  pages={193--206},
  year={2010},
  publisher={Blackwell Publishing Inc Malden, USA}
}

@article{kolmogorov1937etude,
  title={{\'E}tude de l’{\'e}quation de la diffusion avec croissance de la quantit{\'e} de mati{\`e}re et son application {\`a} un probl{\`e}me biologigue},
  author={Kolmogorov, Andrei},
  journal={Moscow Univ. Bull. Ser. Internat. Sect. A},
  volume={1},
  pages={1},
  year={1937}
}

@article{lehman2020surprising,
  title={The surprising creativity of digital evolution: A collection of anecdotes from the evolutionary computation and artificial life research communities},
  author={Lehman, Joel and Clune, Jeff and Misevic, Dusan and Adami, Christoph and Altenberg, Lee and Beaulieu, Julie and Bentley, Peter J and Bernard, Samuel and Beslon, Guillaume and Bryson, David M and others},
  journal={Artificial life},
  volume={26},
  number={2},
  pages={274--306},
  year={2020},
  publisher={MIT Press One Rogers Street, Cambridge, MA 02142-1209, USA journals-info~…}
}

@article{russell2022human,
  title={Human-Compatible Artificial Intelligence.},
  author={Russell, Stuart},
  journal={Human-like machine intelligence},
  volume={1},
  pages={3--22},
  year={2022}
}

@article{amodei2016concrete,
  title={Concrete problems in AI safety},
  author={Amodei, Dario and Olah, Chris and Steinhardt, Jacob and Christiano, Paul and Schulman, John and Man{\'e}, Dan},
  journal={arXiv preprint arXiv:1606.06565},
  year={2016}
}

@Article{RoS,
 Author = {Roberts, Matthew I. and Schweinsberg, Jason},
 Title = {A {Gaussian} particle distribution for branching {Brownian} motion with an inhomogeneous branching rate},
 FJournal = {Electronic Journal of Probability},
 Journal = {Electron. J. Probab.},
 ISSN = {1083-6489},
 Volume = {26},
 Pages = {76},
 Note = {Id/No 103},
 Year = {2021},
 Language = {English},
 DOI = {10.1214/21-EJP673},
 Keywords = {60J80,92D15,92D25,60J65},
 zbMATH = {7424802},
 Zbl = {1497.60120}
}

@misc{reinforcedGoodhart,
      title={Goodhart's Law in Reinforcement Learning}, 
      author={J. Karwowski and O. Hayman and X. Bai and K. Kiendlhofer and C. Griffin and J. Skalse},
      year={2023},
      eprint={2310.09144},
      archivePrefix={arXiv},
      primaryClass={cs.LG},
      url={https://arxiv.org/abs/2310.09144}, 
}

@article{charlesworth2009effective,
  title={Effective population size and patterns of molecular evolution and variation},
  author={Charlesworth, Brian},
  journal={Nature Reviews Genetics},
  volume={10},
  number={3},
  pages={195--205},
  year={2009},
  publisher={Nature Publishing Group UK London}
}

@inbook{Goodhart1984,
  title = {Problems of Monetary Management: The UK Experience},
  ISBN = {9781349172955},
  url = {http://dx.doi.org/10.1007/978-1-349-17295-5_4},
  DOI = {10.1007/978-1-349-17295-5_4},
  booktitle = {Monetary Theory and Practice},
  publisher = {Macmillan Education UK},
  author = {Goodhart,  C. A. E.},
  year = {1984},
  pages = {91–121}
}

@article{mlGoodhart,
  title = {Goodhart's {L}aw and {M}achine {L}earning: a {S}tructural {P}erspective},
  volume = {64},
  ISSN = {1468-2354},
  url = {http://dx.doi.org/10.1111/iere.12633},
  DOI = {10.1111/iere.12633},
  number = {3},
  journal = {International Economic Review},
  publisher = {Wiley},
  author = {Hennessy,  C. A. and Goodhart,  C. A. E.},
  year = {2023},
  month = apr,
  pages = {1075–1086}
}

@book{durrett2008probability,
	Author = {Durrett, R.},
	Date-Added = {2024-06-17 09:30:30 +0000},
	Date-Modified = {2024-06-17 09:30:30 +0000},
	Publisher = {Springer},
	Title = {Probability models for DNA sequence evolution},
	Volume = {2},
	Year = {2008}}

@article{smith1974hitch,
	Author = {Smith, J. M. and Haigh, J.},
	Date-Added = {2024-06-17 08:30:34 +0000},
	Date-Modified = {2024-06-17 08:30:34 +0000},
	Journal = {Genetics Research},
	Number = {1},
	Pages = {23--35},
	Publisher = {Cambridge University Press},
	Title = {The hitch-hiking effect of a favourable gene},
	Volume = {23},
	Year = {1974}}

@article{haldane1963polymorphism,
	Author = {Haldane, J. B. S. and Jayakar, S. D.},
	Date-Added = {2024-06-17 08:29:41 +0000},
	Date-Modified = {2024-06-17 08:29:41 +0000},
	Journal = {Journal of Genetics},
	Pages = {237--242},
	Publisher = {Springer},
	Title = {Polymorphism due to selection of varying direction},
	Volume = {58},
	Year = {1963}}

@inproceedings{haldane1927mathematical,
	Author = {Haldane, J. B. S.},
	Booktitle = {Mathematical Proceedings of the Cambridge Philosophical Society},
	Date-Added = {2024-06-17 08:28:42 +0000},
	Date-Modified = {2024-06-17 08:28:42 +0000},
	Number = {7},
	Organization = {Cambridge University Press},
	Pages = {838--844},
	Title = {A mathematical theory of natural and artificial selection, part V: selection and mutation},
	Volume = {23},
	Year = {1927}}

@book{fisher1999genetical,
	Author = {Fisher, R. A.},
	Date-Added = {2024-06-17 08:23:45 +0000},
	Date-Modified = {2024-06-17 08:24:04 +0000},
	Publisher = {Oxford University Press},
	Title = {The genetical theory of natural selection: a complete variorum edition},
	Year = {1923}}

@article{fisher1923xxi,
	Author = {Fisher, R. A},
	Date-Added = {2024-06-17 08:22:51 +0000},
	Date-Modified = {2024-06-17 08:22:51 +0000},
	Journal = {Proceedings of the royal society of Edinburgh},
	Pages = {321--341},
	Publisher = {Royal Society of Edinburgh Scotland Foundation},
	Title = {XXI.---On the dominance ratio},
	Volume = {42},
	Year = {1923}}

@article{tourniaire2021branching,
	title={A branching particle system as a model of semipushed fronts},
  author={Tourniaire, Julie},
  journal={The Annals of Probability},
  volume={52},
  number={6},
  pages={2104--2172},
  year={2024},
  publisher={Institute of Mathematical Statistics}}

@article{birzu2018fluctuations,
	Author = {Birzu, G. and Hallatschek, O. and Korolev, K. S.},
	Date-Added = {2024-06-17 08:12:49 +0000},
	Date-Modified = {2024-06-17 08:12:49 +0000},
	Journal = {Proceedings of the National Academy of Sciences},
	Number = {16},
	Pages = {E3645--E3654},
	Publisher = {National Acad Sciences},
	Title = {Fluctuations uncover a distinct class of traveling waves},
	Volume = {115},
	Year = {2018}}

@article{pelabon2010evolution,
	Author = {Pelabon, C. and Hansen, T. F. and Carter, A. J. R. and Houle, D.},
	Date-Added = {2024-06-17 08:11:45 +0000},
	Date-Modified = {2024-06-17 08:11:45 +0000},
	Journal = {Evolution},
	Number = {7},
	Pages = {1912--1925},
	Publisher = {Blackwell Publishing Inc Malden, USA},
	Title = {Evolution of variation and variability under fluctuating, stabilizing, and disruptive selection},
	Volume = {64},
	Year = {2010}}

@article{hill2010genetic,
	Author = {Hill, W. G. and Mulder, H. A.},
	Date-Added = {2024-06-17 08:10:53 +0000},
	Date-Modified = {2024-06-17 08:10:53 +0000},
	Journal = {Genetics research},
	Number = {5-6},
	Pages = {381--395},
	Publisher = {Cambridge University Press},
	Title = {Genetic analysis of environmental variation},
	Volume = {92},
	Year = {2010}}

@article{duveau2018fitness,
	Author = {Duveau, F. and Hodgins-Davis, A. and Metzger, B. P. H. and Yang, B. and Tryban, S. and Walker, E. A. and Lybrook, T. and Wittkopp, P. J.},
	Date-Added = {2024-06-17 08:09:28 +0000},
	Date-Modified = {2024-06-17 08:09:28 +0000},
	Journal = {Elife},
	Pages = {e37272},
	Publisher = {eLife Sciences Publications, Ltd},
	Title = {Fitness effects of altering gene expression noise in Saccharomyces cerevisiae},
	Volume = {7},
	Year = {2018}}

@article{keren2016massively,
	Author = {Keren, L. and Hausser, J. and Lotan-Pompan, M. and Slutskin, I. V. and Alisar, H. and Kaminski, S. and Weinberger, A. and Alon, U. and Milo, R. and Segal, E.},
	Date-Added = {2024-06-17 08:08:40 +0000},
	Date-Modified = {2024-06-17 08:08:40 +0000},
	Journal = {Cell},
	Number = {5},
	Pages = {1282--1294},
	Publisher = {Elsevier},
	Title = {Massively parallel interrogation of the effects of gene expression levels on fitness},
	Volume = {166},
	Year = {2016}}

@article{vinuelas2012towards,
	Author = {Vi{\~n}uelas, J. and Kaneko, G. and Coulon, A. and Beslon, G. and Gandrillon, O.},
	Date-Added = {2024-06-17 08:07:56 +0000},
	Date-Modified = {2024-06-17 08:07:56 +0000},
	Journal = {Progress in biophysics and molecular biology},
	Number = {1},
	Pages = {44--53},
	Publisher = {Elsevier},
	Title = {Towards experimental manipulation of stochasticity in gene expression},
	Volume = {110},
	Year = {2012}}

@article{ito2009selection,
	Author = {Ito, Y. and Toyota, H. and Kaneko, K. and Yomo, T.},
	Date-Added = {2024-06-17 08:07:18 +0000},
	Date-Modified = {2024-06-17 08:07:18 +0000},
	Journal = {Molecular Systems Biology},
	Number = {1},
	Pages = {264},
	Publisher = {J. Wiley \& Sons, Ltd Chichester, UK},
	Title = {How selection affects phenotypic fluctuation},
	Volume = {5},
	Year = {2009}}

@article{hallatschek2011noisy,
	Author = {Hallatschek, Oskar},
	Date-Added = {2024-06-17 08:00:21 +0000},
	Date-Modified = {2024-06-17 08:00:21 +0000},
	Journal = {Proceedings of the National Academy of Sciences},
	Number = {5},
	Pages = {1783--1787},
	Publisher = {National Acad Sciences},
	Title = {The noisy edge of traveling waves},
	Volume = {108},
	Year = {2011}}

@book{falconer1996introduction,
	Author = {Falconer, D. S.},
	Date-Added = {2024-06-17 07:59:41 +0000},
	Date-Modified = {2024-06-17 07:59:41 +0000},
	Publisher = {Pearson Education India},
	Title = {Introduction to quantitative genetics},
	Year = {1996}}

@article{lande1979quantitative,
	Author = {Lande, R.l},
	Date-Added = {2024-06-17 07:58:24 +0000},
	Date-Modified = {2024-06-17 07:58:24 +0000},
	Journal = {Evolution},
	Pages = {402--416},
	Publisher = {JSTOR},
	Title = {Quantitative genetic analysis of multivariate evolution, applied to brain: body size allometry},
	Year = {1979}}

@article{bruijning2020evolution,
	Author = {Bruijning, M. and Metcalf, C. J. E. and Jongejans, E. and Ayroles, J. F.},
	Date-Added = {2024-06-17 07:57:37 +0000},
	Date-Modified = {2024-06-17 07:57:37 +0000},
	Journal = {Trends in ecology \& evolution},
	Number = {1},
	Pages = {22--33},
	Publisher = {Elsevier},
	Title = {The evolution of variance control},
	Volume = {35},
	Year = {2020}}

@article{eling2019challenges,
	Author = {Eling, N. and Morgan, M. D and Marioni, J. C.},
	Date-Added = {2024-06-17 07:56:41 +0000},
	Date-Modified = {2024-06-17 07:56:41 +0000},
	Journal = {Nature Reviews Genetics},
	Number = {9},
	Pages = {536--548},
	Publisher = {Nature Publishing Group UK London},
	Title = {Challenges in measuring and understanding biological noise},
	Volume = {20},
	Year = {2019}}

@article{hortsch2018characterization,
    	Author = {Hortsch, S. K. and Kremling, A.},
	Date-Added = {2024-06-17 07:55:57 +0000},
	Date-Modified = {2024-06-17 07:55:57 +0000},
	Journal = {PloS one},
	Number = {3},
	Pages = {e0194779},
	Publisher = {Public Library of Science San Francisco, CA USA},
	Title = {Characterization of noise in multistable genetic circuits reveals ways to modulate heterogeneity},
	Volume = {13},
	Year = {2018}}

@article{bodi2017phenotypic,
	Author = {B{\'o}di, Z. and Farkas, Z. and Nevozhay, D. and Kalapis, D. and L{\'a}z{\'a}r, V. and Cs{\"o}rg{\H{o}}, B. and Nyerges, {\'A}. and Szamecz, B. and Fekete, G. and Papp, B. and others},
	Date-Added = {2024-06-17 07:51:00 +0000},
	Date-Modified = {2024-06-17 07:51:00 +0000},
	Journal = {PLoS biology},
	Number = {5},
	Pages = {e2000644},
	Publisher = {Public Library of Science San Francisco, CA USA},
	Title = {Phenotypic heterogeneity promotes adaptive evolution},
	Volume = {15},
	Year = {2017}}

@article{gavrilets1994quantitative,
	Author = {Gavrilets, S. and Hastings, A.},
	Date-Added = {2024-06-17 07:48:58 +0000},
	Date-Modified = {2024-06-17 07:48:58 +0000},
	Journal = {Evolution},
	Number = {5},
	Pages = {1478--1486},
	Publisher = {Blackwell Publishing Inc Malden, USA},
	Title = {A quantitative-genetic model for selection on developmental noise},
	Volume = {48},
	Year = {1994}}

@article{Berard2010,
	Author = {B{\'e}rard, J. and Gou{\'e}r{\'e}, J.-B.},
	Doi = {10.1007/s00220-010-1067-y},
	Issn = {1432-0916},
	Journal = {Communications in Mathematical Physics},
	Number = {2},
	Pages = {323--342},
	Title = {{Brunet-Derrida} Behavior of Branching-Selection Particle Systems on the Line},
	Url = {https://doi.org/10.1007/s00220-010-1067-y},
	Volume = {298},
	Year = {2010},
	Bdsk-Url-1 = {https://doi.org/10.1007/s00220-010-1067-y}}

@article{Mallein2018,
	Author = {Mallein, B.},
	Doi = {10.1137/S0040585X97T988611},
	Fjournal = {Theory of Probability and its Applications},
	Issn = {0040-585X},
	Journal = {Theory Probab. Appl.},
	Keywords = {60J80,82C31},
	Language = {English},
	Number = {2},
	Pages = {295--318},
	Title = {{{\(N\)}}-branching random walk with {{\(\alpha\)}}-stable spine},
	Volume = {62},
	Year = {2018},
	Zbl = {1434.60247},
	Zbmath = {7128525},
	Bdsk-Url-1 = {https://doi.org/10.1137/S0040585X97T988611}}

@article{Cortines2018,
	Author = {Cortines, Aser and Mallein, B.},
	Doi = {10.1214/18-ECP197},
	Fjournal = {Electronic Communications in Probability},
	Issn = {1083-589X},
	Journal = {Electron. Commun. Probab.},
	Keywords = {60K35,92D15},
	Language = {English},
	Note = {Id/No 98},
	Pages = {13},
	Title = {The genealogy of an exactly solvable {Ornstein}-{Uhlenbeck} type branching process with selection},
	Volume = {23},
	Year = {2018},
	Zbl = {1406.60130},
	Zbmath = {7023484},
	Bdsk-Url-1 = {https://doi.org/10.1214/18-ECP197}}

@article{Cortines2017,
	Author = {Cortines, A. and Mallein, B.},
	Fjournal = {ALEA. Latin American Journal of Probability and Mathematical Statistics},
	Issn = {1980-0436},
	Journal = {ALEA, Lat. Am. J. Probab. Math. Stat.},
	Keywords = {60J80,60G50,60K35,92D25},
	Language = {English},
	Number = {1},
	Pages = {117--137},
	Title = {A {{\(N\)}}-branching random walk with random selection},
	Url = {alea.impa.br/articles/v14/14-07.pdf},
	Volume = {14},
	Year = {2017},
	Zbl = {1358.60088},
	Zbmath = {6688257},
	Bdsk-Url-1 = {alea.impa.br/articles/v14/14-07.pdf}}

@article{Brunet1997,
	Author = {Brunet, \'E. and Derrida, B.},
	Doi = {10.1103/PhysRevE.56.2597},
	Issue = {3},
	Journal = {Phys. Rev. E},
	Numpages = {0},
	Pages = {2597--2604},
	Publisher = {American Physical Society},
	Title = {{Shift in the velocity of a front due to a cutoff}},
	Url = {https://link.aps.org/doi/10.1103/PhysRevE.56.2597},
	Volume = {56},
	Year = {1997},
	Bdsk-Url-1 = {https://link.aps.org/doi/10.1103/PhysRevE.56.2597},
	Bdsk-Url-2 = {https://doi.org/10.1103/PhysRevE.56.2597}}

@article{Berestycki2013,
	Author = {Berestycki, J. and Berestycki, N. and Schweinsberg, J.},
	Doi = {10.1214/11-AOP728},
	Fjournal = {The Annals of Probability},
	Issn = {0091-1798},
	Journal = {Ann. Probab.},
	Keywords = {60J65,60J80,60F17,60G15},
	Language = {English},
	Number = {2},
	Pages = {527--618},
	Title = {The genealogy of branching {Brownian} motion with absorption},
	Volume = {41},
	Year = {2013},
	Zbl = {1304.60088},
	Zbmath = {6165630},
	Bdsk-Url-1 = {https://doi.org/10.1214/11-AOP728}}

@article{Kesten1978,
	Author = {Kesten, H.},
	Doi = {10.1016/0304-4149(78)90035-2},
	Fjournal = {Stochastic Processes and their Applications},
	Issn = {0304-4149},
	Journal = {Stochastic Processes Appl.},
	Keywords = {60J80,60J60},
	Language = {English},
	Pages = {9--47},
	Title = {Branching {Brownian} motion with absorption},
	Volume = {7},
	Year = {1978},
	Zbl = {0383.60077},
	Zbmath = {3596021},
	Bdsk-Url-1 = {https://doi.org/10.1016/0304-4149(78)90035-2}}

@article{Brunet2001,
	Author = {Brunet, \'E. and Derrida, B.},
	Doi = {10.1023/A:1004875804376},
	Journal = {Journal of Statistical Physics},
	Month = {04},
	Pages = {269-282},
	Title = {Effect of Microscopic Noise on Front Propagation},
	Volume = {103},
	Year = {2001},
	Bdsk-Url-1 = {https://doi.org/10.1023/A:1004875804376}}

@article{Brunet2006,
	Author = {Brunet, \'E. and Derrida, B. and Mueller, A. H. and Munier, S.},
	Doi = {10.1209/epl/i2006-10224-4},
	Journal = {Europhys. Lett.},
	Number = 1,
	Pages = {1-7},
	Title = {Noisy traveling waves: Effect of selection on genealogies},
	Url = {https://doi.org/10.1209/epl/i2006-10224-4},
	Volume = 76,
	Year = 2006,
	Bdsk-Url-1 = {https://doi.org/10.1209/epl/i2006-10224-4}}

@article{Brunet2007,
	Author = {Brunet, \'E. and Derrida, B. and Mueller, A. H. and Munier, S.},
	Doi = {10.1103/PhysRevE.76.041104},
	Issue = {4},
	Journal = {Phys. Rev. E},
	Numpages = {20},
	Pages = {041104},
	Publisher = {American Physical Society},
	Title = {Effect of selection on ancestry: an exactly soluble case and its phenomenological generalization},
	Url = {https://link.aps.org/doi/10.1103/PhysRevE.76.041104},
	Volume = {76},
	Year = {2007},
	Bdsk-Url-1 = {https://link.aps.org/doi/10.1103/PhysRevE.76.041104},
	Bdsk-Url-2 = {https://doi.org/10.1103/PhysRevE.76.041104}}

@article{Pitman1999,
	Author = {J. Pitman},
	Issn = {00911798},
	Journal = {The Annals of Probability},
	Number = {4},
	Pages = {1870--1902},
	Publisher = {Institute of Mathematical Statistics},
	Title = {Coalescents with Multiple Collisions},
	Url = {http://www.jstor.org/stable/2652847},
	Volume = {27},
	Year = {1999},
	Bdsk-Url-1 = {http://www.jstor.org/stable/2652847}}

@article{Derrida2008,
	Author = {Brunet, \'E. and Derrida, B. and Simon, D.},
	Doi = {10.1103/PhysRevE.78.061102},
	Issue = {6},
	Journal = {Phys. Rev. E},
	Month = {12},
	Numpages = {9},
	Pages = {061102},
	Publisher = {American Physical Society},
	Title = {Universal tree structures in directed polymers and models of evolving populations},
	Url = {https://link.aps.org/doi/10.1103/PhysRevE.78.061102},
	Volume = {78},
	Year = {2008},
	Bdsk-Url-1 = {https://link.aps.org/doi/10.1103/PhysRevE.78.061102},
	Bdsk-Url-2 = {https://doi.org/10.1103/PhysRevE.78.061102}}

@article{schertzer2023relative,
	Author = {Schertzer, E. and Wences, A. H.},
	Journal = {arXiv preprint arXiv:2301.07762},
	Title = {Genealogical transition in the noisy {$ N $}-Branching Random Walk. {H}ow stronger selection may promote genetic diversity
},
	Year = {2023}}

@article{snyder2021time,
  title={Time and chance: using age partitioning to understand how luck drives variation in reproductive success},
  author={Snyder, Robin E and Ellner, Stephen P and Hooker, Giles},
  journal={The American Naturalist},
  volume={197},
  number={4},
  pages={E110--E128},
  year={2021},
  publisher={The University of Chicago Press Chicago, IL}
}

@article{snyder2018pluck,
  title={Pluck or luck: does trait variation or chance drive variation in lifetime reproductive success?},
  author={Snyder, Robin E and Ellner, Stephen P},
  journal={The American Naturalist},
  volume={191},
  number={4},
  pages={E90--E107},
  year={2018},
  publisher={University of Chicago Press Chicago, IL}
}

@article{
zipple2025,
author = {Matthew N. Zipple  and Daniel Chang Kuo  and Xinmiao Meng  and Tess M. Reichard  and Kwynn Guess  and Caleb C. Vogt  and Andrew H. Moeller  and Michael J. Sheehan },
title = {Competitive social feedback amplifies the role of early life contingency in male mice},
journal = {Science},
volume = {387},
number = {6729},
pages = {81-85},
year = {2025},
doi = {10.1126/science.adq0579},
URL = {https://www.science.org/doi/abs/10.1126/science.adq0579},
eprint = {https://www.science.org/doi/pdf/10.1126/science.adq0579},
abstract = {Contingency (or “luck”) in early life plays an important role in shaping individuals’ development. By comparing the developmental trajectories of functionally genetically identical free-living mice who either experienced high levels of resource competition (males) or did not (females), we show that competition magnifies early contingency. Male resource competition results in a feedback loop that magnifies the importance of early contingency and pushes individuals onto divergent, self-reinforcing life trajectories, while the same process appears absent in females. Our results indicate that the strength of sexual selection may be self-limiting, and they highlight the potential for contingency to lead to differences in life outcomes, even in the absence of any underlying differences in ability (“merit”). The world, both natural and human, is full of inequities, and researchers have long desired to understand to what degree differences arise from “luck” (otherwise called contingency). Testing for this effect is extremely challenging across all systems given the variability of environments an organism is born into. Zipple et al. performed an experiment that equalized both the environment and genotype in mice within a semiwild, yet consistent, enclosure. They found that small contingent events were amplified across a mouse’s lifetime, leading to competitive advantage overall. However, this effect only occurred in male mice, who compete for resources, whereas it was not seen in females, who do not. —Sacha Vignieri}}

@article{tuljapurkar2009dynamic,
  title={Dynamic heterogeneity in life histories},
  author={Tuljapurkar, Shripad and Steiner, Ulrich K and Orzack, Steven Hecht},
  journal={Ecology letters},
  volume={12},
  number={1},
  pages={93--106},
  year={2009},
  publisher={Wiley Online Library}
}

@book{gould1989wonderful,
  title={Wonderful life: the Burgess Shale and the nature of history},
  author={Gould, Stephen Jay},
  year={1989},
  publisher={WW Norton \& Company}
}

@article{merton68,
author = {Robert K. Merton },
title = {The {M}atthew {E}ffect in {S}cience},
journal = {Science},
volume = {159},
number = {3810},
pages = {56-63},
year = {1968},
doi = {10.1126/science.159.3810.56},
URL = {https://www.science.org/doi/abs/10.1126/science.159.3810.56},
eprint = {https://www.science.org/doi/pdf/10.1126/science.159.3810.56},
abstract = {This account of the Matthew effect is another small exercise in the psychosociological analysis of the workings of science as a social institution. The initial problem is transformed by a shift in theoretical perspective. As originally identified, the Matthew effect was construed in terms of enhancement of the position of already eminent scientists who are given disproportionate credit in cases of collaboration or of independent multiple discoveries. Its significance was thus confined to its implications for the reward system of science. By shifting the angle of vision, we note other possible kinds of consequences, this time for the communication system of science. The Matthew effect may serve to heighten the visibility of contributions to science by scientists of acknowledged standing and to reduce the visibility of contributions by authors who are less well known. We examine the psychosocial conditions and mechanisms underlying this effect and find a correlation between the redundancy function of multiple discoveries and the focalizing function of eminent men of science—a function which is reinforced by the great value these men place upon finding basic problems and by their self-assurance. This self-assurance, which is partly inherent, partly the result of experiences and associations in creative scientific environments, and partly a result of later social validation of their position, encourages them to search out risky but important problems and to highlight the results of their inquiry. A macrosocial version of the Matthew principle is apparently involved in those processes of social selection that currently lead to the concentration of scientific resources and talent.}}

@book{cook2010winner,
  title={The winner-take-all society: Why the few at the top get so much more than the rest of us},
  author={Cook, Philip J and Frank, Robert H},
  year={2010},
  publisher={Random House}
}

@article{caswell2009stage,
  title={Stage, age and individual stochasticity in demography},
  author={Caswell, Hal},
  journal={Oikos},
  volume={118},
  number={12},
  pages={1763--1782},
  year={2009},
  publisher={Wiley Online Library}
}

@article{desmarais2025k,
  title={{$ K $}-Branching Random Walk with Noisy Selection: Large Population Limits and Phase Transitions},
  author={Desmarais, Colin and Schertzer, Emmanuel and Talyig{\'a}s, Zs{\'o}fia},
  journal={arXiv preprint arXiv:2509.26254},
  year={2025}
}

@article{van2017lifetime,
  title={Lifetime reproductive output: individual stochasticity, variance, and sensitivity analysis},
  author={Van Daalen, Silke F and Caswell, Hal},
  journal={Theoretical Ecology},
  volume={10},
  number={3},
  pages={355--374},
  year={2017},
  publisher={Springer}
}

@article{hartemink2018variance,
  title={Variance in animal longevity: contributions of heterogeneity and stochasticity},
  author={Hartemink, Nienke and Caswell, Hal},
  journal={Population Ecology},
  volume={60},
  number={1},
  pages={89--99},
  year={2018},
  publisher={Springer}
}

@article{champagnat2023filling,
author = {Champagnat, Nicolas and M{\'e}l{\'e}ard, Sylvie and Mirrahimi, Sepideh and Tran, Viet Chi},
 title = {Filling the gap between individual-based evolutionary models and {Hamilton}-{Jacobi} equations},
 fjournal = {Journal de l'{\'E}cole Polytechnique -- Math{\'e}matiques},
 journal = {J. {\'E}c. Polytech., Math.},
 issn = {2429-7100},
 volume = {10},
 pages = {1247--1275},
 year = {2023},
 language = {English},
 doi = {10.5802/jep.244},
 keywords = {92D15,92D25,60J85,35F21,35D40},
 zbMATH = {7760568},
 Zbl = {1526.92035}
}

\newpage

\newpage

\setcounter{section}{0}
\renewcommand{\thesection}{S.\arabic{section}}
\renewcommand{\theHsection}{S.\arabic{section}}

\renewcommand{\theequation}{S.\arabic{equation}}
\renewcommand{\theHequation}{S.\arabic{equation}}
\numberwithin{equation}{section}

\begin{center}
{\Large \textbf{Supplemental Materials for:} Selection of the fittest or selection of the luckiest: the emergence of Goodhart’s law in evolution}
\end{center}

\section*{Outline}

In this Supplementary material, we present the mathematical analysis of the model described in Section~\ref{sec:theModel} of the main text. We first describe in more detail in Section~\ref{sec:mainResults} the properties of the sequence $(g_n,p_n)$ of genotypic and phenotypic profiles that will be proved in this text, both regarding the traveling wave solutions of the dynamic \eqref{eq: deterministicDynamic} and the genealogical properties of the underlying model.

These results are then proved in the next three sections of the Supplementary material. In Section~\ref{sec: properties} we describe some a priori results on the profile dynamic, including regularization properties and preservation of concavity. These observations are used in Section~\ref{sec:travelingWave} to identify the traveling wave solutions to the dynamic, and in Section~\ref{sect: ancestry} to describe the genealogical relationships of individuals in the model.

In Section~\ref{sec:corrections}, we provide heuristics for the rate of convergence of the adaptation rate. This convergence appears to be particularly slow, however, numerical simulations show that the phase transition observed for the deterministic dynamic of fronts is also well-marked for finite size populations. Finally, Section~\ref{sect: sexual} discusses the extensions of our results to population models with sexual reproduction, highlighting the main differences between sexual and asexual models.

\section{Main results}
\label{sec:mainResults}

Let us recall that in the main article, we introduced a finite size branching-selection population model, in which individuals give birth to a large number of children, with a genotype inherited from their parent with a random increment, and a phenotype given by a random increment of their genotype. The selection procedure acts on the phenotypes. We observed that in the large population limit, the evolution of this system is well-described by a family of phenotypic and genotypic profiles $p_n$ and $g_n$, corresponding respectively to the distribution of phenotype among the children of the $(n-1$)st generation and the distribution of genotype among the selected individuals that make the $n$th generation. We recall that the profiles $(p_n,g_n)$ evolve according to the following deterministic recursive dynamic
\begin{equation}
  \label{eq: SdeterministicDynamic}
  \begin{split}
    p_{n} (x) &= \pi\left[1-\gamma + \sup_{y \in \R}\left( g_{n-1}(y) - \min(1,\mu) (x-y)_+\right)\right]\\
    s_{n} &= \sup\{x \in \R: p_{n}(x) \geq \gamma \}\\
    g_{n}(x) &= \pi\left[1 - \gamma + \sup_{y \in \R} \left( g_{n-1}(y) - |x-y| - \mu (s_n - x)_+\right)\right],
  \end{split}
\end{equation}
with $\pi : x \in \R \mapsto x \ind{x \geq 0} - \infty \ind{x < 0}$.

Here, in Section~\ref{subsec:derivation} we formally deduce the expressions \eqref{eq: SdeterministicDynamic}. Then, in Section~\ref{subsec:mainResult} we discuss the existence and properties of traveling wave solutions to this dynamic. We observe that the behavior of these traveling waves sharply depends on the values of $\gamma$ and $\mu$, and exhibits a phase transition between two distinct behaviors, which are called \emph{semipulled} and \emph{fully pulled}, corresponding to the survival of the fittest and luckiest regimes, respectively. Section~\ref{sect: fullysemi} then describes in more detail the differences between these two regimes. In Section~\ref{subsec:GaussianNumerics} we show numerical results analogous to those of Fig.~\ref{fig:2pannelsvne}, but for the case of Gaussian, rather than Laplace distributions. Finally, Section~\ref{subsec:stateOfTheArt} compares our results to the existing state of the art. 

\subsection{Derivation of the recursive dynamics}
\label{subsec:derivation}

We derive here the expression \eqref{eq: deterministicDynamic}, showing its connection with the stochastic population dynamic we introduced. We first prove the tail estimates in \eqref{eq:tail}, before providing an heuristic argument for~\eqref{eq: deterministicDynamic}.

\subsubsection{Tail asymptotics}\label{sect:tails}
We now derive the formulas in~\eqref{eq:tail}.
Consider two independent random variables $X$ and $Y$ described by the following Laplace distributions:
\begin{equation*}
f_X(x) = \frac{1}{2}e^{-|x|};\qquad f_Y(x) = \frac{\mu}{2}e^{-\mu|x|}
\end{equation*}
with $\mu>0$. For $\mu\neq 1$ we have
\begin{align*}
    \mathbb{P}(X+Y>y\log N) = &\frac{\mu}{4}\int_{y\log N}^\infty dz \int_{-\infty}^{\infty} dx\, e^{-|x| - \mu|z-x|} \\
    = & 
    \begin{cases}
    \frac{e^{-\mu y \log N}}{2(1-\mu^2)}\left( 1 - \mu^2 e^{-(1-\mu)y\log N} \right) & y\ge 0 \\
    1 - \frac{e^{\mu y\log N}}{2(1-\mu^2)} \left(1-\mu^2e^{(1-\mu)y\log N}\right) & y < 0.
    \end{cases}
\end{align*}
Taking the logarithm, in the limit $N\to\infty$, we obtain
\begin{equation}
    \label{eqn:tail1}
    \frac{\log\left(\mathbb{P}(X+Y>y\log N)\right)}{\log N} \to -\min(\mu,1)y_+ 
\end{equation}
with the notation $a_+ = \max(a,0)$.
Strictly speaking, this derivation is valid for $\mu\neq 1$. In computing the probability on the left-hand side for $\mu=1$, one finds that the expression \eqref{eqn:tail1} also holds in this case.

To derive the second formula in~\eqref{eq:tail}, we let $x \neq 0$ and $y \neq x$, and set $\epsilon \in (0,\min(|x|,|y-x|)$. We have
\begin{align*}
 &   \mathbb{P}(|X - x\log N| < \epsilon \log N, X+Y>y\log N) &
    \\
 &   =  \frac{\mu}{4} \log N \int_{x-\epsilon}^{x + \epsilon} e^{-|a|\log N} \int_{(y- a)\log N}^\infty e^{-\mu|b|}\, db \, da &\\
    =
    &\quad 
  \begin{cases}
    \frac{1}{4} \log N e^{-\mu y \log N} \int_{x-\epsilon}^{x + \epsilon} e^{-|a|\log N + \mu a \log N} da\,
    & y-x\ge 0;\\
    \frac{1}{4} \log N\int_{x-\epsilon}^{x + \epsilon}  e^{-|a|\log N}\left(2-e^{\mu(y-a)\log N}\right)\,da & y-x<0, 
    \end{cases}
\end{align*}
where we use that $y-x-\epsilon$ and $y-x+\epsilon$ have the same sign as $y-x$. We can then compute, when $x > 0$, $y-x>0$ and $\mu \neq 1$,
\begin{equation*}
    \mathbb{P}(|X - x\log N| < \epsilon \log N, X+Y>y\log N) = \frac{e^{-(x + \mu(y-x))\log N}}{4} \frac{e^{|1-\mu|\epsilon \log N} - e^{-|1 - \mu| \epsilon \log N}}{|1-\mu|},
\end{equation*}
from which we deduce the expected asymptotic behaviour, with a similar computation holding when $\mu = 1$. Other cases, when $x < 0$ or $y-x < 0$ can be treated in the same fashion, leading to
\begin{equation}
    \label{eqn:tail2}
    \lim_{\epsilon \to 0} \lim_{N \to \infty} \frac{1}{\log N}  \log\left(\mathbb{P}( |X- x\log N| < \epsilon \log N, X+Y>y\log N)\right) =-\left(|x|+\mu(y-x)_+\right).
\end{equation}

\subsubsection{Recursive expressions for \texorpdfstring{$p_n$}{pn} and \texorpdfstring{$g_n$}{gn}}\label{sect:RecursiveExpr}

A rigorous statement and proof saying that the dynamical system in~\eqref{eq: deterministicDynamic} describes the scaling limit as $N \to \infty$ of the stochastic model we introduced is beyond the scope of this article. We give here the heuristics for this connection without going into technical details. A rigorous argument for a similar model is provided in~\cite{desmarais2025k}.

Assume that the population at time $n$ is distributed according to the profile $g_n$, that is, for a fixed location $y\in\R$, there are roughly $N^{g_n(y)}dy$ individuals whose genotypes belong to the interval $[y\log N,(y + dy)\log N]$. Each of these individuals will create $N^{1-\gamma}$ children. Thus, by independence of the offspring generation, the number of children with a phenotype greater than $z\log N$ will be given by a binomial random variable with parameters $N^{1-\gamma + g_n(y)}dy$ and $\P((X+Y) > (z-y) \log N)$. Applying~\eqref{eqn:tail1}, its expectation is close to
\[
N^{1-\gamma + g_n(y)- \min(\mu,1)(z - y)_+}dy.
\]
To find the total expected number of such children, we integrate over all locations to obtain
\[
\int N^{1-\gamma + g_n(y)- \min(\mu,1)(z - y)_+}dy \approx N^{1 - \gamma + \sup_{y \in \R} (g_n(y) - \min(1,\mu)(z-y)_+)},
\]
where the informal approximation comes from Laplace's principle. If the exponent on the right-hand side is negative, then there are no phenotypes greater than $z\log N$ among the children with high probability, explaining the presence of $\pi$ in the definition~\eqref{eq: deterministicDynamic} of $p_{n+1}$. Therefore, the total number of children of generation-$n$ parents with a phenotype larger than $z\log N$ can be approximated by
\[
N^{p_{n+1}(z)},
\]
for large values of $N$. 

Let us write $S^N_{n+1}$ for the position of the $N^\gamma$th largest child of generation-$n$ parents.
As a consequence of the previous argument and since the function $z \mapsto p_{n+1}(z)$ is non-increasing, we approximate the phenotypic threshold by
\begin{equation}\label{eq:Sntosn}
\frac{S^N_{n+1}}{\log N}\approx\sup\{z \in \R : p_{n+1}(z) \geq \gamma\} = s_{n+1}
\end{equation}
(see also~\eqref{eq: deterministicDynamic}).

Finally, the formula for the function $g_{n+1}$ is obtained in the same fashion as $p_{n+1}$. We observe that the individuals alive at time $n+1$ with a genotype close to $x\log N$ are descendants of parents in generation $n$ with a genotype close to $y\log N$ for some $y$, and all have a phenotype larger than $S^N_{n+1}$. Then, using~\eqref{eq:Sntosn} and \eqref{eqn:tail2}, for each genotype around the location $y\log N$, the expected number of children around $x\log N$ will be close to
\[
  N^{ 1 - \gamma + g_{n}(y) - |x-y| - \mu(s_n -x)_+ }dydx.
\]
Similarly to the phenotypic profile, integrating over all locations and using Laplace's principle we find that the number of genotypes around location $x\log N$ at time $n+1$ is approximately
\[
N^{g_{n+1}(x)}dx,
\]
where $g_{n+1}$ is given by~\eqref{eq: deterministicDynamic}.

\subsection{The traveling wave}
\label{subsec:mainResult}

We say that a function $g$ is a \textit{traveling wave} for the dynamic \eqref{eq: SdeterministicDynamic} with speed $v$ if, starting from $g_0 = g$, for all $n\in\N$ we have
\begin{equation}
  \label{eqn:StravelingWave}
  g_n : x \mapsto g(x-nv).
\end{equation}
In other words, the dynamic has the effect of shifting the genotypic profile by $v$, without deformation. Given the definition of profiles, we refer to the \emph{support} of the traveling wave $g$ as $g^{-1}(\R_+)$, the set of points $x$ such that $g(x) \geq 0$. That is, $x$ is in the support of $g$, if there will be individuals in the neighbourhood of $x \log N$ with high probability, when $N$ is large enough.

Our main result is stated in the theorem below, which shows existence and uniqueness of a traveling wave solution to~\eqref{eq: SdeterministicDynamic}. The proof of this theorem can be found at the end of Section~\ref{sec:travelingWave}.
\begin{thm}
\label{thm:main}
For all $\gamma \in (0,1)$ and $\mu>0$ with $\mu\neq 1$, we set $k = \floor{1/\mu}$. For the dynamic described in~\eqref{eq: SdeterministicDynamic}, there exists a unique up to translation, concave traveling wave $g$ with compact support. This traveling wave has speed $v(\gamma,\mu)$ given by
\[
  v(\gamma,\mu) = \begin{cases}
    \frac{2\gamma}{k(2 - (k+1)\mu)} & \text{ if } \gamma < \gamma_c(\mu)\\
    1-\gamma & \text{ if } \gamma > \gamma_c(\mu)
    \end{cases},
    \quad \text{ where } \gamma_c(\mu) = \frac{k}{k+1} \left( 1 - \frac{\mu}{2 - k\mu}\right).
\]
In particular, the function $\gamma \mapsto v(\gamma,\mu)$ is increasing on the interval $(0,\gamma_c]$ and decreasing on $[\gamma_c,1)$, whereas the function $\mu \mapsto v(\gamma,\mu)$ is increasing.
\end{thm}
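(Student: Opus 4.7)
My approach combines the qualitative a priori results from Section~\ref{sec: properties} (preservation of concavity and of compactness of support under the dynamic) with an explicit piecewise linear ansatz dictated by the structure of~\eqref{eq: SdeterministicDynamic}. Each supremum in the dynamic is an infimal convolution against a piecewise linear kernel ($|x-y|$, $\min(1,\mu)(x-y)_+$, or $\mu(s-x)_+$), and concave piecewise linear functions are stable under such operations. Any concave compactly supported traveling wave must therefore itself be piecewise linear, and the problem reduces to a finite-dimensional matching of slopes and break points.

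I would start from the free-boundary reformulation~\eqref{eq: deterministicDynamic2} to normalize $\sup g = \gamma$. Writing $g$ as a concave piecewise linear function with support $[a,b]$ and maximum at some $m\in(a,b)$, a direct calculation (using concavity to evaluate each supremum via a Legendre--Fenchel duality) yields the phenotypic profile $p$ and the threshold $s$ as explicit affine functions of the slopes and break points of $g$. The key dichotomy hinges on the position of $s$ relative to the support of $g$. If $s$ lies inside $[a,b]$, the dynamic simply truncates the right flank, and the translation condition forces a two-piece wave with slopes $+1$ and $-\mu$ moving at speed $v=1-\gamma$ (weak regime). If instead $s$ lies strictly to the right of $[a,b]$, only the Laplace tail of the reproductive profile survives selection; iterating the dynamic then generates successive pieces of slope $1-j\mu$ on the left flank, which stabilizes after exactly $k+1=\floor{1/\mu}+1$ pieces, once $1-k\mu$ drops below $\mu$ and concavity forbids adding more. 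The integer $k$ appearing in the theorem arises from here. Matching slopes, continuity at the break points, and the normalization $\sup g=\gamma$ then yields a linear system whose explicit solution is $v=\frac{2\gamma}{k(2-(k+1)\mu)}$ (strong regime).

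The critical curve $\gamma_c(\mu)$ is then obtained as the value of $\gamma$ at which the two cases agree, i.e., at which $s$ coincides with the right endpoint of the support of $g$; a direct calculation confirms that the two speed formulas match there, so $v$ is continuous across the transition. Monotonicity in $\gamma$ follows from differentiating each branch separately, and monotonicity in $\mu$ is checked piecewise, the discontinuities of $k=\floor{1/\mu}$ at points $\mu=1/n$ being absorbed by the continuity of $v$ across them.

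The main obstacle is uniqueness, in particular ruling out other concave compactly supported fixed points. I would show that (i)~the rightmost slope of any concave traveling wave must equal $-1$, forced by the $-|x-y|$ term in~\eqref{eq: SdeterministicDynamic}; (ii)~consecutive slopes immediately to the left of the maximum must differ by exactly $\mu$, forced by the interaction between the $-\mu(s-x)_+$ term and the concavity of $g$; and (iii)~the leftmost slope cannot exceed $+1$, since $x\mapsto\sup_y(g(y)-|x-y|)$ is globally $1$-Lipschitz. Together with $\sup g=\gamma$ and the free-boundary condition $p(s)=\gamma$, these constraints pin down the slope sequence uniquely, and the translation condition then fixes the break points up to a single global translation, yielding uniqueness.
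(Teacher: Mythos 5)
Your overall strategy---exploit preservation of concavity, reduce to a concave piecewise linear ansatz, and dichotomize on the position of the selection threshold $s$ relative to the tip of the shifted wave---is essentially the route the paper takes (via Corollary~\ref{cor:formula} and Propositions~\ref{prop:ful} and~\ref{prop:sem}). However, several of your concrete claims about the shape of the wave are wrong, and they are not cosmetic: they would break the uniqueness argument. First, the semi-pulled wave is not a two-piece profile with slopes $+1$ and $-\mu$. By \eqref{eq: slopeSemi} it has slope $-1$ on $(s-v,0)$, then slopes $j\mu-1$ for $j=1,\dots,K+1$ on successive intervals of length $v$, and finally slope $1+\mu$ near the left edge; only the speed $v=1-\gamma$ in your statement is correct. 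Second, your uniqueness constraint (i), that the rightmost slope of any concave traveling wave must equal $-1$, is false precisely in the fully-pulled regime: there $s>v$, so the term $-\mu(s-x)_+$ is active up to the right edge of the support of $g_1$ and the rightmost slope is $\mu-1$, not $-1$. The dichotomy between rightmost slope $-1$ (when $v>s$) and $\mu-1$ (when $v\le s$) is the very content of the phase transition, so an argument that forces slope $-1$ at the front in all cases cannot produce the fully-pulled wave and would wrongly exclude it. Third, constraint (iii) is also false: the map $x\mapsto\sup_y(g(y)-|x-y|)$ is indeed $1$-Lipschitz, but the full update includes the additional term $-\mu(s-x)_+$, which raises the slope by $\mu$ to the left of $s$; the semi-pulled wave genuinely has leftmost slope $1+\mu>1$ (see \eqref{eq: slopeSemi} and Lemma~\ref{lem:slope}, where the point $d$ with $g'>1$ on $[L,d]$ is explicitly accommodated). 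Your piece count in the strong regime is also off: the slopes $j\mu-1$ continue past the maximizer at $-kv$ until the profile returns to zero, roughly $K=\floor{2/\mu}$ pieces in total.

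The repair is to follow the structure the paper uses: first establish $v>0$ together with the relation $v=\frac{1-\gamma-\mu s}{1-\mu}$ if $v\le s$ and $v=1-\gamma$ if $v>s$ (Lemma~\ref{lem: speed}); then propagate the fixed-point identity \eqref{eqn:travelingWaveV2} interval by interval of length $v$ starting from the right edge using \eqref{eqn:iterate}, which \emph{derives} the slope sequence $j\mu-1$ rather than postulating a priori constraints on it; and finally close the system with $\sup g=\gamma$ evaluated at the maximizer ($-kv$ in the fully-pulled case, $s-(k+1)v$ in the semi-pulled case), which supplies the second equation between $(v,s)$ and $(\gamma,\mu)$. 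Uniqueness then follows not from slope constraints but from the observation that exactly one of the two candidate families is admissible for each $(\gamma,\mu)$, because $v_{\mathrm f}\le s_{\mathrm f}\iff\gamma\le\gamma_c(\mu)$ while $v_{\mathrm s}>s_{\mathrm s}\iff\gamma>\gamma_c(\mu)$.
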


It is worth mentioning that in the above theorem, if $\mu > 1$ then $k =0$. Therefore, $\gamma_c(\mu) = 0$ for all $\mu > 1$. This indicates that if the tail of the phenotypic contribution is light enough, then the phase transition between the selection of the fittest and luckiest regimes does not occur. More precisely, the population is always in the selection of the fittest regime, and decreasing the value of $\gamma$ increases the rate of adaptation of the population. This behavior is a consequence of the fact that in this regime, the phenotypic value of an individual is very close to its genotypic value, so the phenotypic and genotypic selection procedures become essentially undistinguishable.

\begin{rmk}
If $\mu = 1$, then the law of $X/(c\log N)$ conditionally on $X + Y > c \log N$ converges in distribution to a uniform random variable on $[0,1]$, therefore the relationship between phenotype and genotype of an individual ceases to be well-concentrated around a deterministic value. We do not treat this limiting case in the present paper although the formula we obtained can be prolonged by continuity at $\mu = 1$.
\end{rmk}

The proof of Theorem~\ref{thm:main} is based on the explicit construction of the traveling waves associated to the parameters $(\gamma,\mu)$. The explicit solution is stated in Theorem~\ref{x}. The traveling wave will typically be formed as a concave, piecewise linear function whose maximum is $\gamma$. The slope near the right edge of the traveling wave is either $-1$ or $\mu - 1$ depending on whether $\gamma > \gamma_c$ or $\gamma <\gamma_c$. This slope relates to the exponential growth of the size of the population with a genotype at distance smaller than $x \log N$ from the largest genotype of a given generation, for $x$ small enough. 

As stated in the theorem, for a fixed value of $\mu$, the speed of the traveling wave (corresponding to the rate of adaptation of the population) takes its maximum at a critical value of $\gamma = \gamma_c(\mu)$ (corresponding to an optimal choice of the selection pressure). The function $\mu \mapsto \gamma_c(\mu)$ is drawn on the right panel of Fig.~\ref{fig: rateOfCvSM} in Section~\ref{sec:corrections}, together with empirical estimates of the optimal choice of $\gamma$ for various finite size population models.

In particular, if $\mu> 1$, that is, if a large increment of phenotype in an individual is primarily explained by a large increment of its genotype, the optimal selection pressure is obtained as $\gamma = 0$, i.e. a maximal selection pressure. In this situation, the optimal dynamic for the population is to select at every step a constant (independent of $N$) number of individuals with the largest phenotype at each generation to maximize the rate of adaptation of the population. The maximal genotype in the population will increase by $\log N$ in each generation, which is similar to the Brunet-Derrida behavior obtained for the exponential model of branching random walk with selection~\cite{Brunet2006,Mallein2018} introduced in Section~\ref{subsec:stateOfTheArt}.

On the other hand, if $\mu < 1$, that is, if individuals can have a very high phenotype without having a large genotype, then there exists a non-trivial optimum for the selection pressure at $\gamma = \gamma_c$. This selection pressure gives an optimal rate of adaptation of the population, by ensuring that the selection step keeps individuals with high genotypic value and makes them create a large enough number of children so that their characteristics are transmitted to their descendants.

\subsubsection*{Convergence to the traveling wave}
\label{subsec:waves}

Theorem~\ref{thm:main} shows the existence and uniqueness of a traveling wave solution to the dynamic~\eqref{eq: SdeterministicDynamic}. Proving that, starting from an arbitrary  genotypic profile (satisfying some conditions), the dynamic converges to the traveling wave solution is out of the scope of this paper. However,  simulations in the top right panel of Fig.~\ref{fig:4pannelsprofiles} in the main text indicate that convergence should indeed occur, and therefore a detailed analysis of the traveling waves is needed to describe the long term behavior of the dynamical system~\eqref{eq: SdeterministicDynamic}. 

\subsection{Fully pulled and semipulled waves}\label{sect: fullysemi}

Assuming that $\mu < 1$, the properties of the traveling wave solution are markedly different depending on whether $\gamma > \gamma_c$ or $\gamma < \gamma_c$. These differences can be explained through the genealogical relationships and selection properties that appear at the front of the population. When we refer to the front of the population, we mean the genotypes within a positive but not too large distance from the rightmost genotype (on the logarithmic scale). By the tip of the profile we understand particles located at or very near the rightmost position (on the logarithmic scale). Our terminology for the traveling waves refers to the fact that the front of the fully pulled wave is generated by the tip (i.e.~genotypes at the front have parents at the tip), and in the semipulled case the front is generated by parents at the front but not necessarily at the tip.

\subsubsection*{Ancestry}

Our method of finding the most likely location of the parent of a genotype at a given location (for
example at the front) is to study $A$, the ancestral function of the process.
Suppose that $g$ is a traveling wave solution to~\eqref{eq: SdeterministicDynamic} with speed $v$ and with $g^{-1}(\R_+) = [L,0]$ for some $L<0$. We define
\[
  A : x \in [L,0] \mapsto \mathrm{argmax}_{y \in \R} \{g(y) - |x+v-y|\},
\]
where for any concave function $u : \R\to \R \cup \{-\infty\}$, $\mathrm{argmax}_{y\in\R} \{u(y)\}$ returns the \emph{smallest} real number $y$ such that $u(y) = \max_{z \in \R} u(z)$. Given the heuristics behind the definition of the dynamic \eqref{eq: SdeterministicDynamic}, we see that $A(x)$ corresponds to the distance from the tip of the genotypic value of a typical parent of an individual at distance $x$ from the tip. Note that $A^j(x)$ then describes the distance from the tip for a typical ancestor $j$ generations in the past, in a population distributed according to the traveling wave.

For a better illustration of the difference between fully pulled and semipulled regimes, we also introduce the function $A^+$, as 
\[
A^+ : x \in [L,0] \mapsto \mathrm{Argmax}_{y \in \R} \{g(y) - |x+v-y|\},
\]
where for any concave function $u : \R\to \R \cup \{-\infty\}$, $\mathrm{Argmax}_{y\in\R}\{u(y)\}$ returns the \emph{largest} real number $y$ such that $u(y) = \max_{z \in \R} u(z)$. Now the theorem below says that, when $\gamma<\gamma_c$ (fully pulled case), then a typical ancestor of an individual finitely many generations in the past, will be at the very tip of the population. On the other hand, if $\gamma>\gamma_c$ (semipulled case), a typical ancestor will be located in an interval at the front. In this case $A^j(x)$ and $(A^+)^j(x)$ will not agree as $j$ gets large, and the length of the interval will be given by $(A^+)^j(x)-A^j(x)$ for some large $j\in\N$. We will prove this theorem in Section~\ref{sect: ancestry}.

\begin{thm}
\label{thm:ancestralLineages}
The functions $x \mapsto A(x)$ and $x \mapsto A^+(x)$ are non-decreasing. Moreover:
\begin{itemize}
    \item If $\gamma < \gamma_c$, then $A^j(x) = {A^+}^j(x) = 0$ for all $j$ large enough;
    \item If $\gamma > \gamma_c$, then there exists $c_1 > 0$ such that $A^j(x) = -c_1$ and ${A^+}^j(x) = 0$ for all $j$ large enough. 
\end{itemize}
\end{thm}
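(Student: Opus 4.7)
The plan is to prove the theorem in three stages: first to establish monotonicity of $A$ and $A^+$ by a Topkis-type argument, then to identify the iterates' fixed points at the tip $x=0$ using the explicit structure of the traveling wave, and finally to deduce finite-time convergence from the piecewise-linear nature of $g$.

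For monotonicity, I would check that the map $(x,y)\mapsto -|x+v-y|$ is supermodular on $\R\times\R$: for $x_1<x_2$ and $y_1<y_2$, a direct case analysis on the orderings of $\{x_1+v,x_2+v,y_1,y_2\}$ yields
\[
  -|x_1+v-y_1| - |x_2+v-y_2| \;\geq\; -|x_1+v-y_2| - |x_2+v-y_1|.
\]
Since $g(y)$ does not depend on $x$, the objective $F(x,y):=g(y)-|x+v-y|$ has increasing differences in $(x,y)$. Topkis's monotone comparative statics theorem then yields that $\mathrm{argmax}_y F(x,y)$ is nondecreasing in $x$ in the strong set order, so both the smallest and largest selection, $A(x)$ and $A^+(x)$, are nondecreasing.

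For the fixed points I specialise to $x=0$. Since $v>0$ and the support of $g$ lies in $(-\infty,0]$, we have $|v-y|=v-y$ for every $y$ in the support, and the argmax of $F(0,\cdot)$ coincides with that of $h(y):=g(y)+y$. By Theorem~\ref{thm:main}, $g$ is concave and piecewise linear, and its fine structure near the tip is qualitatively different in the two regimes. In the fully-pulled regime $\gamma\leq\gamma_c$, the explicit construction produces a wave for which $h$ has a unique maximiser at $y=0$, so $A(0)=A^+(0)=0$. In the semi-pulled regime $\gamma>\gamma_c$, the wave features a segment attached to the tip on which the slope of $g$ equals $-1$ exactly; on this segment $h$ is constant, and by concavity $h$ is strictly smaller elsewhere in the support, so $\mathrm{argmax}\,h=[-c_1,0]$ for some $c_1>0$. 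This yields $A(0)=-c_1$ and $A^+(0)=0$, with $c_1$ equal to the length of that slope-$-1$ plateau.

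For the iteration, piecewise linearity of $g$ makes both $A$ and $A^+$ piecewise constant on $[L,0]$ with only finitely many distinct values. For any $x_0\in[L,0]$, the sign of $A(x_0)-x_0$ combined with monotonicity of $A$ forces the orbit $(A^j(x_0))_{j\geq 0}$ to be monotone in $j$; a monotone sequence taking finitely many values stabilises in finitely many steps, and monotonicity rules out nontrivial cycles, so the limit is a fixed point of $A$. Matching with the values computed at $x=0$ identifies the limits as claimed, and the analogous argument for $A^+$ gives limit $0$ in both regimes. The main obstacle is the second stage: ensuring that the piecewise-linear wave of Theorem~\ref{thm:main} is understood well enough near the tip to conclude that $h$ has a unique maximiser (fully-pulled) or a slope-$-1$ plateau of length $c_1$ (semi-pulled). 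Once this structural input is in hand, the monotonicity and finite-step arguments reduce to bookkeeping.
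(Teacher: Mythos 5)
Your stages 1 and 2 are sound. The Topkis/supermodularity argument for monotonicity is a clean alternative to what the paper does (the paper instead computes $A$ and $A^+$ explicitly on all of $[L,0]$ in Lemma~\ref{lem: Ax} and reads off monotonicity), and your analysis of $h(y)=g(y)+y$ at the tip correctly identifies the fixed points: unique maximiser at $0$ in the fully-pulled case because all slopes of $g$ are $j\mu-1>-1$, and a plateau $[s-v,0]$ of length $c_1=v-s=\chi(\mu,\gamma)$ in the semi-pulled case coming from the slope-$(-1)$ segment, matching \eqref{eq: vminussigma}.

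Stage 3, however, contains a genuine error. You assert that piecewise linearity of $g$ makes $A$ and $A^+$ \emph{piecewise constant} with finitely many values, and you use this to get stabilisation in finitely many steps. This is false: from the explicit formulas (Lemma~\ref{lem: Ax}), $A(x)=\min(x+v,0)$ in the fully-pulled case and $A(x)=\min(s-v,\max(x+v,d))$ in the semi-pulled case, so $A$ has a piece of slope $1$ on most of the support and takes a continuum of values. Without piecewise constancy, your argument only yields that the orbit $(A^j(x))_j$ is monotone and bounded, hence convergent --- not that it stabilises after finitely many iterations, which is what the theorem claims. The missing ingredient is a quantitative lower bound on the displacement per step: one must show $A(x)\geq x+v$ (and likewise for $A^+$) until the fixed point is reached, so that the orbit traverses the finite-length support $[L,0]$ in at most $\lceil |L|/v\rceil$ steps; this is exactly how the paper concludes. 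Obtaining that bound requires computing the argmax for \emph{all} $x\in[L,0]$, not just at $x=0$, so your stage 2 needs to be upgraded to a global description of $A$ and $A^+$ on the whole support (which is precisely the content of Lemma~\ref{lem: Ax}, relying on the full list of slopes in Theorem~\ref{x}, including the slope-$(1+\mu)$ segment to the left of $d$ in the semi-pulled case).
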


\begin{rmk}
Let $\gamma > \gamma_c$, we write $g$ for the associated traveling wave. We observe that the parameter $c_1$ defined above, corresponding to the lowest fitness ancestor that can give birth to one individual at the front of the population, verifies $L<-c_1$ and, when $\mu<1$ then $g(-c_1) < \gamma$ (these follow from the later statements Theorem~\ref{x} and Lemma~\ref{lem: Ax}). This means that, with sufficient phenotypic noise, the ancestors of the fittest individuals come neither from the bulk of the process (which are the majority of individuals, whose fitness is very close to $\sup\{x \in \R : g(x) = \gamma\}$) nor from the tip.
\end{rmk}

In other words, in the selection of the luckiest regime (fully pulled wave, $\gamma <\gamma_c$), the whole population is generated by the tip of the process a finite number of generations backward in time with high probability. In this regime, the individuals with highest genotype at a given generation will generate the whole rest of the population within a finite number of generations. On the other hand, in the selection of the fittest regime (semipulled wave, $\gamma > \gamma_c$), the population is generated by a group of ancestors consisting of the $N^{c_1}$ individuals with the highest genotypes. Coalescences within this group of ancestors then occur at a much slower, logarithmic rate, leading to the age of the most recent common ancestor of the population being of order $\log N$, compared to a constant in the other regime.

\subsubsection*{Selection properties}

The next important property, which distinguishes the fully pulled and semipulled traveling waves is the following. Let us consider the set of genotypes at the moment when reproduction has already happened, but selection has not. Then in the fully pulled regime ($\gamma<\gamma_c$), the best (largest) genotypes do not survive the selection step, whereas in the semipulled case ($\gamma>\gamma_c$) they do. 

In order to state this result precisely, we need to introduce the reproduction profile. Let $g$ be a traveling wave solution with speed $v$, and assume $g^{-1}(\R_+) = [L,0]$ for some $L<0$. Then, since $g$ is a traveling wave, if $g_0=g$ then $g_1(x)=g(x-v)$, and $g_1^{-1}(\R_+) = [L+v,v]$. Now similarly to the heuristics given for the dynamics~\eqref{eq: SdeterministicDynamic}, we define the reproduction profile
\[
r(x) := \pi\left[1 - \gamma + \sup_{y \in \R} \left( g(y) - |x-y|\right)\right],
\]
which describes the log-density of genotypes after reproduction and before selection. Note that we have $g_1(x)\leq r(x)$ for all $x\in\R$. The result below says that, if $\gamma<\gamma_c$, then the reproduction profile is strictly above the genotypic profile $g_1$ everywhere in the support of $g_1$. In particular, the right edge of the support of the function $r$ is to the right of $v$, which is the right edge of the support of the function $g_1$. That is, the largest genotypes after reproduction (the ones to the right of $v$) do not survive selection. On the other hand, when $\gamma>\gamma_c$ then the functions $g_1$ and $r$ agree on an interval at the front of these profiles (the best genotypes are among the selected ones). We prove this theorem at the end of Section~\ref{sec:travelingWave}.

\begin{thm}\label{thm:reprod}
Recall that $r$ is the log-density profile of genotypes of children of a population with genotypic profile $g$, and the genotypic profile of the surviving children is $g_1$. Then:
\begin{itemize}
\item 
  If $\gamma<\gamma_c$ then $r(x) > g_1(x)$ for all $x$ in the support of $g_1$.
\item
If $\gamma>\gamma_c$ then $r(x)=g_1(x)$ for all $x$ close enough to $v$, the right edge of the support of $g_1$.
\end{itemize}
\end{thm}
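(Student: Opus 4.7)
The plan is to reduce both claims to an analysis of the identity
\begin{equation*}
g_1(x) = r(x) - \mu(s_1 - x)_+,
\end{equation*}
which holds throughout the support of $g_1$. This identity is obtained by unwrapping the projector $\pi$ in the dynamic~\eqref{eq: SdeterministicDynamic}: on the support of $g_1$ both $r(x)$ and $r(x) - \mu(s_1 - x)_+$ are nonnegative, so $\pi$ acts as the identity on each. Granting this identity, the location of the selection threshold $s_1$ relative to the right edge $v$ of the support of $g_1$ is what distinguishes the two regimes.

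First I would compute $r(v)$. Normalize so that the support of $g$ is $[L, 0]$. For any $x \geq 0$, the supremum defining $r(x)$ collapses to $\sup_{y \leq 0}(g(y) + y)$, because $|x - y| = x - y$ on $[L, 0]$. Concavity of $g$ makes $g'(y) + 1$ non-increasing in $y$, and Theorem~\ref{thm:main} tells us that $g'(0^-) = \mu - 1 > -1$ in the strong regime, while $g'(0^-) = -1$ on an interval of positive length in the weak regime. In either case $g'(y) + 1 \geq 0$ on $[L, 0]$, so the supremum equals $g(0) + 0 = 0$, and $r(x) = \pi[1 - \gamma - x]$ for all $x \geq 0$.

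For the strong regime ($\gamma < \gamma_c$), the explicit formula $v = 2\gamma/[k(2-(k+1)\mu)]$ from Theorem~\ref{thm:main} reduces, via a short algebraic check, to the inequality $v < 1 - \gamma$ (in fact this inequality is equivalent to $\gamma < \gamma_c$). Hence $r(v) = 1 - \gamma - v > 0$. Since $g_1(v) = g(0) = 0$, the identity forces $\mu(s_1 - v)_+ = r(v) > 0$, so $s_1 > v$. Then for every $x \leq v$ in the support of $g_1$ we have $\mu(s_1 - x) > 0$, and consequently $g_1(x) = r(x) - \mu(s_1 - x) < r(x)$.

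For the weak regime ($\gamma > \gamma_c$), Theorem~\ref{thm:main} yields $v = 1 - \gamma$, and, crucially, a slope-$(-1)$ piece $[-a, 0]$ of positive length $a > 0$ at the tip of $g$, on which $g(y) = -y$. Translating gives $g_1(x) = g(x - v) = v - x$ for $x \in [v - a, v]$, which matches $r(x) = 1 - \gamma - x = v - x$ (from the previous step) on the same interval. Hence $g_1 = r$ on the nondegenerate interval $(\max(0, v - a), v]$. The main obstacle of the whole argument is this last structural input: the existence and positive length of the slope-$(-1)$ piece in the weak regime is not a routine verification but is built into the explicit piecewise-linear construction of the traveling wave carried out in Theorem~\ref{thm:main}; once it is available, both assertions of the theorem follow from the two short calculations above, the strong and weak cases being distinguished purely by whether $v < 1 - \gamma$ or $v = 1 - \gamma$.
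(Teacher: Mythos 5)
Your proposal is correct and follows essentially the same route as the paper: both rest on the identity $r(x)=g_1(x)+\mu(s_1-x)_+$ on the support of $g_1$ (the paper's relation $\Psi[g]=\Phi_s[g]+\mu(s-\cdot)_+$) together with the dichotomy $s_1>v$ versus $s_1<v$ across $\gamma_c$. The only cosmetic difference is that you re-derive $s_1>v$ in the strong regime from the explicit value $r(v)=1-\gamma-v>0$, whereas the paper cites the inequality $v<s$ directly from the traveling-wave classification in Theorem~\ref{x}.
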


\subsubsection*{Summary}
We now provide illustrations and a summary of the properties, which are stated in the theorems above, and which distinguish the fully pulled and semipulled waves. In order to do so, let $g$ be a traveling wave solution to~\eqref{eq: SdeterministicDynamic} with speed $v$, and let $s:=s_1$ be the phenotypic threshold at generation~$1$. Note that $s$ is also the distance of the phenotypic threshold from the rightmost genotype in any generation, because of the stationarity of the profile. Let us first return to the bottom panels of Fig.~\ref{fig:4pannelsprofiles} in the main text, which we describe in the context of the  results above. 

{\it Left Panel.} Fully pulled wave. The fact that the phenotypic threshold (red segment) $s$ is to the right of the reproduction profile (orange) means that any genotype that survives selection needs to perform a large phenotypic jump (of order $\log N$) to get to the right of $s$. Since the probability of such jumps is small, and the number of genotypes near the tip of the reproduction profile is not large enough, the very best genotypes do not survive selection. That is, the genotypic profile in the next generation (green) is to the left of the reproduction profile (orange). 

Notice furthermore, that the right edge of the green curve is at $v$ (the speed of the traveling wave) and the rightmost ``break point" of the blue curve, where there is a change of slope, is at $-v$.
The purple dots illustrate Theorem~\ref{thm:ancestralLineages}: if we sample a genotype from a location $x$, then the most likely location of its parent (relative to the tip) is $A(x)>x$. In the example in the figure, $A(x)\in(-v,0)$, and the ancestral line reaches the tip in $2$ generations: $A^2(x)=0$. It is true in general, that if we sample a genotype uniformly at random from the front of the population (more precisely from the interval $[-v,0]$), then the most likely location of its parent will be at $0$ (i.e.~at the tip of the previous generation).

{\it Right Panel.} Semipulled wave. 
The right edge of the green curve is at $v$ and the rightmost ``break point" of the blue curve is at $s-v$.
After reproduction (orange), all extreme genotypes to the right of the phenotypic threshold $s$ (red segment) are selected to the next generation: the orange and green profiles coincide on the interval $[s,v]$. 

The purple dots illustrate the second case of Theorem~\ref{thm:ancestralLineages}: This is again an example, where the parent of a genotype at location $x$ has relative location $A(x)\in(-v,0)$ from the tip. However, in contrast to the fully pulled case, the relative location of the grandparent from the tip can be anywhere on the interval $(s-v,0)$; that is, $(A^+)^2(x)-A^2(x) = v-s > 0$.
In general, if we sample a genotype uniformly at random from the interval $[-v,0]$ (relative to the tip), then the most likely location of its parent will be on the interval $[s-v,0]$. The log-density of this interval is given by the first linear segment of the blue curve, whose slope is equal to $-1$.

In the table below, we give a summary of the main properties of the investigated traveling waves. The facts that in the fully pulled case ($\gamma<\gamma_c$) we have $v < s$, and in the semipulled case ($\gamma>\gamma_c$) we have $v>s$ are stated and proved in Section~\ref{sec:travelingWave}. 
\begin{table}[ht]
\caption{Comparison of the phenomenological picture between the fully pulled and semipulled regimes.}
\begin{center}	
\begin{tabular}{|m{4.5cm} | m{5.5cm} | m{5.5cm}|}
\hline
& \textbf{Fully pulled wave} & \textbf{Semipulled wave} \\
\hline
& $\gamma < \gamma_c$, $v < s$ & $\gamma > \gamma_c$, $v > s$  \\
\hline
\textbf{Stationary profile} & & \\
\hline
Speed & increasing function of $\gamma$ & decreasing function of $\gamma$ \\
\hline
Slope of $g$ at the front & $\mu-1$ & $-1$ \\
\hline
\textbf{Ancestry} & & \\
\hline
Ancestral line & located near the tip &  located away from both the tip and the bulk \\
\hline
Number of potential parents of the tip & $N^{o(1)}$ & $N^{v-s + o(1)}$ \\
\hline
\textbf{Selection} & & \\
\hline 
Phenotypic threshold & To the right of the rightmost selected genotype ($s> v$). &To the left of the rightmost selected genotype ($s<v$). \\
\hline
Probability of selection of the largest genotype & converges to 0 & converges to 1\\
\hline
\end{tabular}	
\end{center}
\label{tab:phenomenology}
\end{table}

\subsection{Numerical results for Gaussian distributions}
\label{subsec:GaussianNumerics}

We now present our numerical results for the case $\alpha=2$; that is, when both the genetic noise distribution $f_X$ and the phenotypic noise distribution $f_Y$ are Gaussian. Our simulations indicate that the two distinct regimes discussed above also appear in the Gaussian case, supporting our claims regarding universality.

Figure~\ref{fig:2pannelsvne_Gaussian} shows the normalized speed of evolution $V/\sqrt{\log(N)}$ and the effective population $N_e$. 
As in Fig.~\ref{fig:2pannelsvne}, both quantities are estimated through numerical simulations using a population size $N=10^5$.

The qualitative shapes of $V$ and $N_e$ as a function of $\mu$ and $\gamma$ remain the same as in the case $\alpha=1$ discussed in the main text. In particular, the rate of adaptation as a function of the selection pressure $\gamma$ at a fixed phenotypic noise level $\mu$ is non--monotonic, and there exists a clear functional dependence $\mu \mapsto \gamma_c(\mu)$ of the selection pressure $\gamma_c$ that maximizes the adaptation rate at any given value of $\mu$ (Fig.~\ref{fig:2pannelsvne_Gaussian}, left panel). Likewise, the effective population $N_e$, which measures the population's genetic diversity, has a non--monotonic dependence on the phenotypic noise level $\mu$ for a fixed selection pressure $\gamma$: at high noise levels (small $\mu$) the diversity decreases with decreasing noise, but on further reducing the noise the diversity increases again. Contrary to the $\alpha=1$ case, $\gamma\mapsto\gamma_c^{-1}(\gamma)$ does not perfectly track the minimal diversity.

\begin{figure}[h]
\begin{center}
    \begin{tabular}{cc}
    \includegraphics[width=0.45\textwidth]{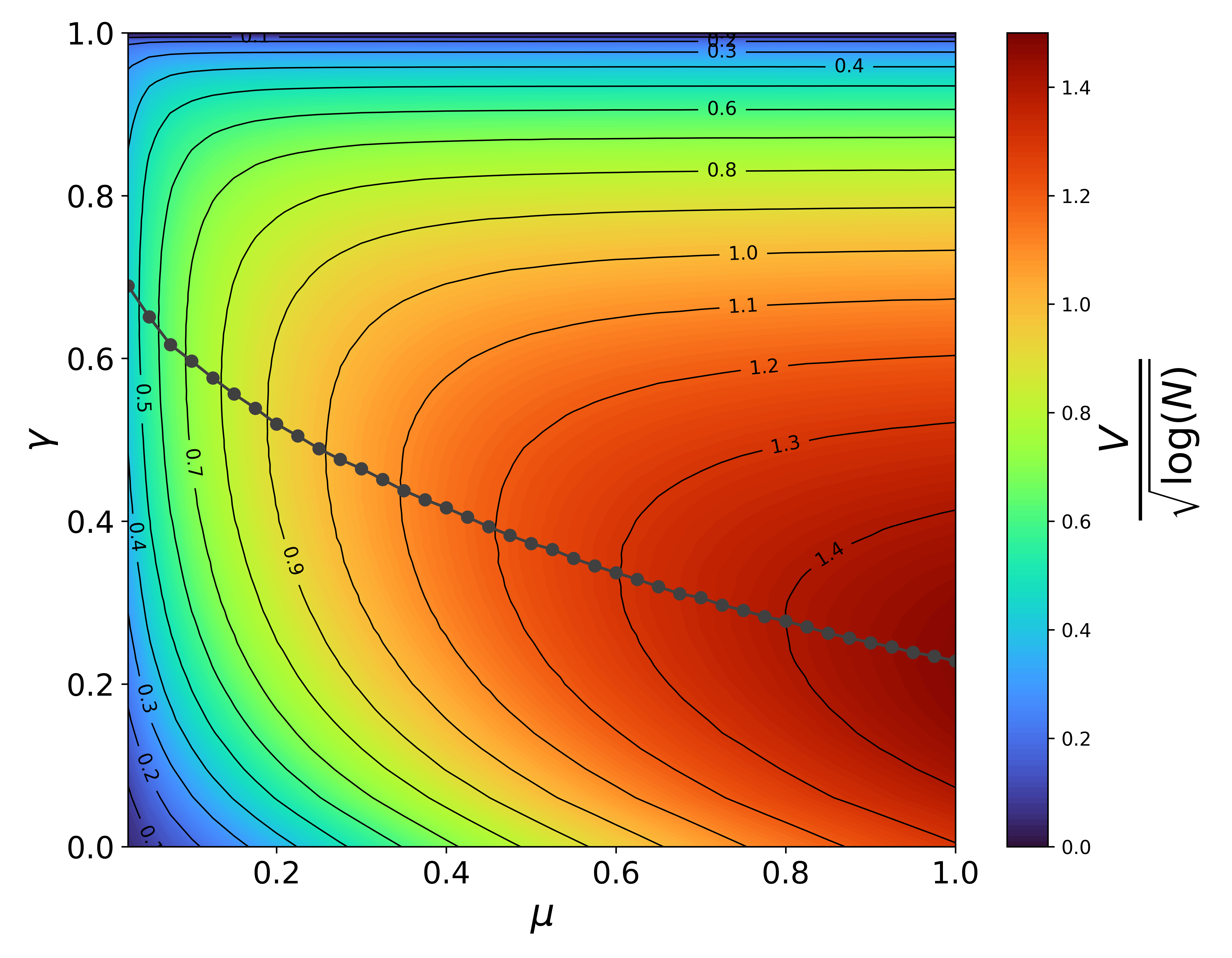}  & \includegraphics[width=0.45\textwidth]{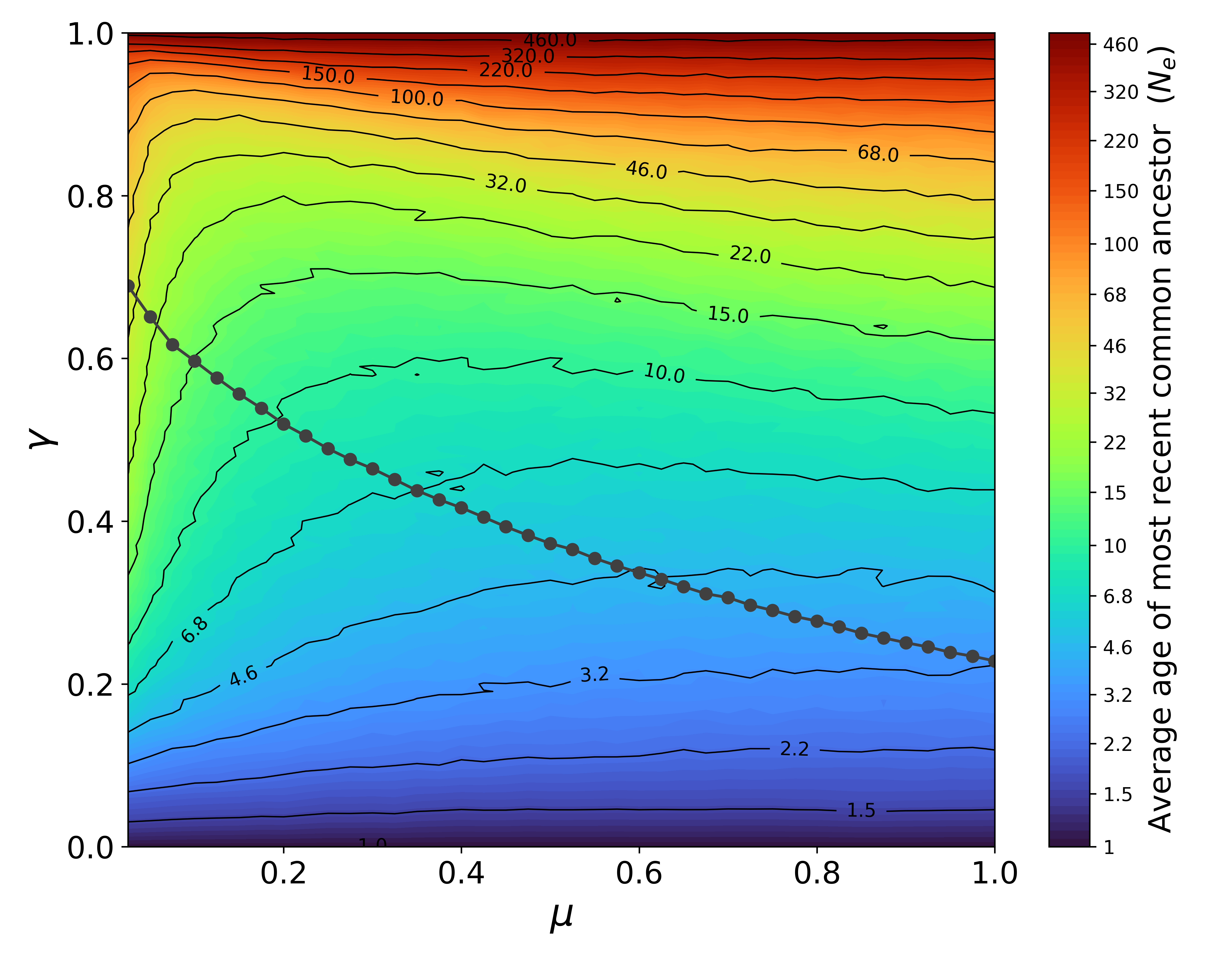}
    \end{tabular}
\end{center}
\caption{As in Fig.\ref{fig:2pannelsvne} but for the case $\alpha =2$ (Gaussian distribution for phenotypic and genotypic distributions). The general patterns appear to be very similar to the case $\alpha=1$ (Laplace distribution).}
\label{fig:2pannelsvne_Gaussian}
\end{figure}

\subsection{Related mathematical literature}
\label{subsec:stateOfTheArt}

A large number of population models for natural selection have been introduced over the years. These population models usually present a \emph{fitness}, identified as a real value, that measures the ability of an individual to survive and produce a large offspring. Among these models one can find the class of \emph{branching-selection particle systems}. These processes are defined as Markovian particle systems, in which each individual gives birth, independently to one another, to children with a fitness obtained as a random modification of their own. An external operation is then undertaken to keep the total number of individuals at each generation of roughly the same size.

One of the first of these models to be introduced was the branching random walk with absorption. In this model, individuals reproduce creating children with fitness given by an i.i.d.~copy of a point process shifted by their own. Children are then killed if their fitness falls below a given threshold. Kesten \cite{Kesten1978} showed this process to be either subcritical, critical or supercritical, i.e.~the population either grows extinct exponentially fast, polynomially fast, or survives and grows exponentially fast with positive probability.

To control the size of the population more accurately, Brunet and Derrida \cite{Brunet1997} introduced the so-called \emph{$N$-branching random walk}. In this model, the reproduction law is identical to the previous one, each individual reproduce independently by creating an identically distributed cloud of children around their position, but at each step only the $N$ rightmost children are selected to reproduce in the next generation. As a result, the total size of the population remains constant. By comparison to the KPP equation with a cutoff and the study of exactly solvable models by Brunet, Derrida, Mueller and Munier \cite{Brunet2006,Brunet2007}, they conjectured the so-called \emph{Brunet-Derrida behaviour} of the $N$-branching random walk, i.e.~that the speed $v_N$ of adaptation of the population with $N$ individuals converges as $N \to \infty$ to the speed of adaptation of the population without selection at the slow rate
\[
  v_N - v_\infty \sim \frac{-\theta}{(\log N)^2}, \text{ where $\theta$ is an explicit positive constant.}
\]
This result was later proved by Bérard and Gouéré \cite{Berard2010} for binary branching, then extended by Mallein~\cite{Mallein2018} to general branching.

The population model we introduced is analogue to an $N$-branching random walk --and more specifically to the exponential exactly solvable model introduced in \cite{Brunet2007}-- when $\gamma \to 0$. In this case, the population is capped to a value $c N$ for some $c < 1/2$, and each individual creates $\floor{1/c}$ children independently of one another. If the selection only keeps the individuals with the largest genotypic value (i.e. in the limit $\mu \to \infty$), we recover an $N$-branching random walk in which the displacements are given by i.i.d. Laplace steps. The case $\mu > 1$ corresponds to a situation when individuals are not necessarily selected if they are among the $N$ rightmost, but their probability of being selected decays exponentially fast if they are far from the maximal fitness. A solvable model in this universality class was studied by Cortines and Mallein \cite{Cortines2017}, that showed no noticeable difference in the evolution of the population due to this randomization.

For $\gamma > 0$, using extreme value theory, the evolution of the population we consider becomes closer to the solvable models studied in \cite{Brunet2006,Cortines2017,Cortines2018}, corresponding respectively to the cases $\mu = \infty$, $\mu > 1$ and $\mu < 1$. For $\gamma > 0$ and $\mu < 1$, Schertzer and Wences \cite{schertzer2023relative} studied a model close to the one we consider.
We define and analyze this exactly solvable model in detail in Section~\ref{sect:k-lineages}. We find that it exhibits the same phenomenological picture as our original model, allowing for a comparative approach to conjecture genealogical and convergence rate properties.

\section{Some a priori properties of the deterministic dynamics}
\label{sec: properties}

In this section, we present some properties of the dynamic of $(g_n,p_n)$ described in \eqref{eq: SdeterministicDynamic}. We first remark that this dynamic preserves concavity, and that $\sup g_n = \gamma$ for any $n \geq 1$. We then use this latter observation to give an alternative description of the dynamic of the genotypic profile, defining $s_n$ as the unique real number such that $\sup g_n = \gamma$.

We begin with the following straightforward observation.
\begin{lem}
\label{lem:concave}
Let $f, g$ be two concave functions $\R \to \R \cup \{-\infty\}$. The function
\[
  x \mapsto \sup_{z \in \R} f(z) + g(x-z)
\]
is concave.
\end{lem}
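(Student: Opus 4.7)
The plan is a direct proof from the definition of concavity via an $\varepsilon$-argument on the supremum, together with the concavity of $f$ and $g$ applied to suitably chosen convex combinations of near-optimizers. Writing $h(x) := \sup_{z \in \R} f(z) + g(x-z)$, the goal is to show, for every $x_1, x_2 \in \R$ and every $t \in [0,1]$,
\[
h(tx_1 + (1-t)x_2) \;\geq\; t\,h(x_1) + (1-t)\,h(x_2).
\]

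First I would dispose of the degenerate cases. If $h(x_1) = -\infty$ or $h(x_2) = -\infty$, the right-hand side is $-\infty$ and there is nothing to prove. Otherwise, fix $\varepsilon > 0$ and, by definition of the supremum, choose $z_1, z_2 \in \R$ satisfying $f(z_i) + g(x_i - z_i) \geq h(x_i) - \varepsilon$ for $i=1,2$ (if $h(x_i) = +\infty$ I would replace $h(x_i)-\varepsilon$ by an arbitrary large constant $M$; the computation below then forces $h$ to take the value $+\infty$ at the convex combination, and so the inequality still holds).

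The core step is to test the supremum defining $h(tx_1+(1-t)x_2)$ at the point $z := t z_1 + (1-t) z_2$. By concavity of $f$ one has $f(z) \geq t f(z_1) + (1-t) f(z_2)$, while by concavity of $g$, using the identity $tx_1 + (1-t)x_2 - z = t(x_1 - z_1) + (1-t)(x_2 - z_2)$,
\[
g\bigl(tx_1 + (1-t)x_2 - z\bigr) \;\geq\; t\,g(x_1 - z_1) + (1-t)\,g(x_2 - z_2).
\]
Adding these two inequalities and using the choice of $z_1, z_2$ yields
\[
h\bigl(tx_1+(1-t)x_2\bigr) \;\geq\; f(z) + g\bigl(tx_1+(1-t)x_2 - z\bigr) \;\geq\; t\,h(x_1) + (1-t)\,h(x_2) - \varepsilon,
\]
and letting $\varepsilon \downarrow 0$ gives the claim.

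I do not anticipate any real obstacle here: once the near-optimizers $z_1, z_2$ are fixed, the argument is a two-line computation. The only bookkeeping concerns the $\pm\infty$ values allowed by the statement, which is already addressed in the first step. An equivalent, perhaps more conceptual, alternative would be to note that the hypograph of $h$ is the Minkowski sum of the hypographs of $f$ and $g$, and that Minkowski sums of convex sets are convex, but the elementary argument above is shorter and sufficient for the use made of this lemma in the sequel.
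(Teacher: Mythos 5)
Your proof is correct and follows essentially the same route as the paper's: both arguments evaluate the supremum defining $h(tx_1+(1-t)x_2)$ at a convex combination of (near-)optimizers for $h(x_1)$ and $h(x_2)$ and then invoke the concavity of $f$ and $g$ termwise. Your version merely makes the $\varepsilon$-bookkeeping and the $\pm\infty$ cases explicit, whereas the paper packages the same idea as a reparametrization of the supremum over pairs $(z,z')$.
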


This result can be viewed as a property of a tropicalized version of the convolution. Similarly to the fact that log-concavity is preserved by the convolution operation, concavity is preserved by the tropicalized convolution operation.

\begin{proof}
Let $t \in (0,1)$. Using the concavity of $f$ and $g$, for $x, x' \in \R$ we have
\begin{align*}
  \sup_{z \in \R} f(z) + g(tx + (1-t)x'  - z) &= \sup_{(z,z') \in \R^2} f(tz+(1-t)z') + g(t(x-z) + (1-t)(x'-z'))\\
  &\leq t \sup_{z \in \R} f(z) + g(x-z) + (1-t) \sup_{z' \in \R} f(z') + g(x'-z'),
\end{align*}
which shows the result.
\end{proof}

By Lemma \ref{lem:concave}, the concavity of $g_n$ is preserved by the dynamic.
\begin{prop}
\label{prop:regularityG}
Let $g_0$ be a concave function, and $(g_n,p_n)$ defined recursively by \eqref{eq: SdeterministicDynamic} for all $n \geq 1$. For all $n \geq 1$, $p_n$ and $g_n$ are concave.
\end{prop}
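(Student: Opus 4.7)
The plan is to proceed by induction on $n$, with Lemma~\ref{lem:concave} as the main engine, complemented by two routine auxiliary checks: (i) adding any concave function of $x$ to a concave function preserves concavity, and (ii) composing with the projection $\pi$ preserves concavity. The base case is the assumption on $g_0$.

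For the inductive step, assume $g_{n-1}$ is concave. I would rewrite the bracketed expression in the definition of $p_n$ as $1 - \gamma + \sup_{y \in \R}\bigl(g_{n-1}(y) + h_1(x-y)\bigr)$, where $h_1(u) := -\min(1,\mu)\, u_+$ is piecewise linear with slopes $0$ on $(-\infty,0]$ and $-\min(1,\mu)$ on $[0,\infty)$, hence concave. Lemma~\ref{lem:concave} then makes the supremum concave in $x$, and adding the constant $1-\gamma$ preserves concavity. Applying $\pi$ preserves concavity as well: for any concave $f : \R \to \R \cup \{-\infty\}$, the super-level set $\{f \geq 0\}$ is an interval $I$ (super-level sets of concave functions being convex), and $\pi \circ f$ agrees with $f$ on $I$ and equals $-\infty$ outside. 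The concavity inequality for $\pi \circ f$ then holds on $I$ by concavity of $f$ and the convexity of $I$, and is trivial (under the convention $-\infty + t = -\infty$) as soon as one of the two evaluation points leaves $I$. This makes $p_n$ concave, and $s_n$ is well-defined as the right endpoint of $\{p_n \geq \gamma\}$.

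The treatment of $g_n$ is analogous. I would pull the $y$-independent term out of the supremum,
$$\sup_{y \in \R}\bigl(g_{n-1}(y) - |x-y| - \mu(s_n - x)_+\bigr) = -\mu(s_n - x)_+ + \sup_{y \in \R}\bigl(g_{n-1}(y) + h_2(x-y)\bigr),$$
with $h_2(u) := -|u|$ concave. Lemma~\ref{lem:concave} again makes the supremum concave in $x$, while $x \mapsto -\mu(s_n - x)_+$ is concave (slope $\mu$ on $(-\infty,s_n]$ and slope $0$ on $[s_n,\infty)$), so their sum is a concave function of $x$. Adding $1-\gamma$ and applying $\pi$ as in the previous step yields concavity of $g_n$, closing the induction.

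I do not anticipate a substantive obstacle: the statement is essentially a packaging of Lemma~\ref{lem:concave} with two preservation properties. The only mildly delicate point is maintaining the extended-real-valued convention consistently, so that the projection $\pi$ interacts correctly with the concavity inequality on the boundary of the support.
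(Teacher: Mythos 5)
Your proposal is correct and follows essentially the same route as the paper: induction on $n$, with Lemma~\ref{lem:concave} handling the sup-convolution, the concavity of the piecewise-linear penalty terms, and the fact that post-composition with $\pi$ preserves concavity (the paper justifies this last point by noting that $\pi$ is concave and increasing, whereas you verify it directly via the convexity of the super-level set; both are fine). Your explicit step of pulling the $y$-independent term $-\mu(s_n-x)_+$ out of the supremum before applying the lemma is only a more pedantic rendering of what the paper does implicitly.
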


\begin{rmk}
This result justifies the choice of only considering concave traveling waves. If the population starts from a log-concave initial distribution of genotypes, then the distribution of genotypes remains log-concave at all steps.
\end{rmk}

\begin{proof}
Take $n \geq 1$. Assuming that $g_{n-1}$ is concave, then by Lemma~\ref{lem:concave}, the function
\[
  \bar{p}_n : x \mapsto 1-\gamma + \sup_{y \in \R} g_{n-1}(y) - \min(1,\mu)|x-y|
\]
is concave, using the concavity of $z \mapsto - \min(1,\mu)|z|$. Then, using that $\pi$ is concave and increasing, we conclude that $p_n = \pi \circ \bar p_n$ is concave. Similarly,
\[
  \bar{g}_n : x \mapsto 1 - \gamma + \sup_{y \in \R} g_{n-1}(y) - |x - y| - \mu(s_n - x)_+,
\]
is concave, using Lemma~\ref{lem:concave} again and the fact that $x \mapsto - \mu (s_n-x)_+$ is concave. As a result, $g_n = \pi \circ \bar{g}_n$ is concave.
\end{proof}

For any $g : \R \to \R \cup \{-\infty\}$ and $\xi \in \R$, we define the functional
\begin{equation}\label{eq: Phis}
\Phi_\xi[g] : x \mapsto  1 - \gamma + \sup_{y \in \R} g(y) - |x-y| - \mu (\xi-x)_+.
\end{equation}
We remark that for all $n \in \N$, we have $\pi \circ \Phi_{s_{n}} [g_{n-1}] = g_n$. Therefore, defining the value of $s_n$ from $g_{n-1}$ would allow us to rewrite the dynamic \eqref{eq: SdeterministicDynamic} without reference to the phenotypic profile. This is the objective of the following lemma.

\begin{lem}\label{lem: s}
Let $g$ be a function $\R \to \R \cup \{-\infty\}$. The function
\[
F : \xi \mapsto \sup_{x \in \R} \Phi_\xi[g](x)
\]
is non-increasing.
\end{lem}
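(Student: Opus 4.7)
The statement is essentially a monotonicity fact that follows from inspecting how $\xi$ enters the definition of $\Phi_\xi[g]$. The plan is to observe that $\xi$ appears in $\Phi_\xi[g](x)$ only through the term $-\mu(\xi - x)_+$, which is trivially non-increasing in $\xi$ for every fixed $x$, then propagate this monotonicity through the two suprema.

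More concretely, I would first rewrite
\[
  \Phi_\xi[g](x) \;=\; 1 - \gamma - \mu(\xi - x)_+ \;+\; \sup_{y \in \R}\bigl(g(y) - |x-y|\bigr),
\]
since the factor $-\mu(\xi-x)_+$ does not depend on $y$ and can be pulled out of the inner supremum. Fixing $x \in \R$ and taking $\xi_1 \leq \xi_2$, the map $\xi \mapsto (\xi - x)_+ = \max(\xi - x, 0)$ is non-decreasing, hence $-\mu(\xi_1 - x)_+ \geq -\mu(\xi_2 - x)_+$, which gives pointwise $\Phi_{\xi_1}[g](x) \geq \Phi_{\xi_2}[g](x)$.

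Finally, taking the supremum over $x \in \R$ preserves the inequality, yielding $F(\xi_1) \geq F(\xi_2)$ and completing the proof. There is no real obstacle here: the result is purely a consequence of the monotonicity of the penalty $\xi \mapsto \mu(\xi-x)_+$ together with the elementary fact that pointwise domination of two functions implies domination of their suprema. The only point that deserves a sentence of comment is that the inner supremum over $y$ is taken \emph{before} the outer one over $x$, so one must be careful to factor out the $\xi$-dependent term first; once this is done the monotonicity chain is immediate.
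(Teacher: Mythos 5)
Your proof is correct, and it is actually a more elementary route than the one taken in the paper. You argue by pointwise domination: since $\xi$ enters $\Phi_\xi[g](x)$ only through the additive penalty $-\mu(\xi-x)_+$, which is non-increasing in $\xi$ for each fixed $x$, you get $\Phi_{\xi_1}[g] \geq \Phi_{\xi_2}[g]$ pointwise for $\xi_1 \leq \xi_2$, and then pass to the supremum over $x$. The paper instead exchanges the two suprema and computes the inner one explicitly, using $\sup_{x}\left(-|x-y| - \mu(\xi-x)_+\right) = -\min(\mu,1)(\xi-y)_+$, to arrive at the closed form $F(\xi) = 1-\gamma + \sup_{y}\left(g(y) - \min(\mu,1)(\xi-y)_+\right)$, from which monotonicity is read off the same way. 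Your argument is shorter and requires no case distinction on $\mu$; what the paper's extra computation buys is the identity $F(\xi) = p_{n+1}(\xi)$ relating $F$ to the phenotypic profile, which is explicitly reused in the proof of the subsequent proposition characterizing $s_n$. So your proof fully establishes the lemma as stated, but if one follows your route one must supply that identification separately later.
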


\begin{proof}
By immediate computations, we have
\[  \sup_{x \in \R} -|x-y| - \mu (\xi-x)_+ = \begin{cases}
    - \mu (\xi-y)_+ & \text{ if $\mu < 1$}\\
    - (\xi-y)_+ & \text{ if $\mu > 1$}.
  \end{cases}
\]
Therefore, for $\xi \in \R$, we have
\[
  F(\xi) = 1 - \gamma + \sup_{(x,y) \in \R^2} g(y) - |x-y|-\mu(\xi-x)_+ = 1 - \gamma + \sup_{y \in \R} g(y) - \min(\mu,1)(\xi-y)_+.
\]
Using that $\xi \mapsto - \min(\mu,1)(\xi-y)_+$ is non-increasing for all $y \in \R$, we conclude that $F$ is non-increasing as well.
\end{proof}

We are now able to state the alternative construction of the deterministic dynamic of the genotypic profile.
\begin{prop} \label{prop: sngn}
Let $g_0 : \R\to \R \cup \{-\infty\}$ be such that $\sup g_0 = \gamma$. Let $(s_n)$ and $(g_n)$ be the quantities defined recursively in~\eqref{eq: SdeterministicDynamic}. For all $n \geq 1$, we have
\begin{equation}\label{eq: SdeterministicDynamic2}
\begin{split}
  &s_n = \sup\{\xi \in \R : \sup_{x \in \R} \Phi_\xi [g_{n-1}](x) \geq \gamma\},\\
  &g_n = \pi \circ \Phi_{s_n}(g_{n-1}).
\end{split}
\end{equation}
\end{prop}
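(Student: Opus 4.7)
The plan is to recognise that the auxiliary function from Lemma~\ref{lem: s} is, up to the projector $\pi$, exactly the phenotypic profile $p_n$ appearing in~\eqref{eq: SdeterministicDynamic}. Once this identification is made, the alternative recursion in~\eqref{eq: SdeterministicDynamic2} follows almost mechanically, with only a short induction needed to maintain the consistency condition $\sup g_n = \gamma$ throughout.

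First, I would recall from the proof of Lemma~\ref{lem: s} the explicit formula
\[
  F_n(\xi) \;:=\; \sup_{x \in \R} \Phi_\xi[g_{n-1}](x) \;=\; 1 - \gamma + \sup_{y \in \R} \bigl( g_{n-1}(y) - \min(\mu,1)(\xi - y)_+ \bigr),
\]
which coincides exactly with the argument of the projector $\pi$ in the first line of~\eqref{eq: SdeterministicDynamic}. Hence $p_n(\xi) = \pi(F_n(\xi))$ for every $\xi \in \R$. Since $\gamma > 0$ and $\pi(t) \geq \gamma$ is equivalent to $t \geq \gamma$, the two level sets coincide,
\[
  \{\xi \in \R : p_n(\xi) \geq \gamma\} \;=\; \{\xi \in \R : F_n(\xi) \geq \gamma\},
\]
and taking the supremum of both sides yields the proposed characterisation of $s_n$. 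The second equation, $g_n = \pi \circ \Phi_{s_n}[g_{n-1}]$, is then just the third line of~\eqref{eq: SdeterministicDynamic} rewritten in the notation of~\eqref{eq: Phis}.

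To close the induction I would check that $\sup g_n = \gamma$ is preserved by the dynamic, so that the supremum appearing in the new definition of $s_{n+1}$ remains a well-defined finite number and the argument can be iterated. Under the inductive hypothesis $\sup g_{n-1} = \gamma$, the explicit formula for $F_n$ shows that it is continuous and non-increasing, with $F_n(-\infty) = 1 > \gamma$ (using $\gamma < 1$) and $F_n(+\infty) = -\infty$; hence $\{\xi : F_n(\xi) \geq \gamma\}$ is a non-empty bounded-above half-line, its supremum $s_n$ is finite, and $F_n(s_n) = \gamma$ by continuity. Consequently,
\[
  \sup g_n \;=\; \sup_{x \in \R} \pi\bigl(\Phi_{s_n}[g_{n-1}](x)\bigr) \;=\; \pi\bigl(F_n(s_n)\bigr) \;=\; \pi(\gamma) \;=\; \gamma,
\]
which completes the inductive step.

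I do not anticipate a genuine obstacle here: the proposition amounts to bookkeeping once one observes that $p_n = \pi \circ F_n$. The only mildly subtle point is to verify carefully that the projector $\pi$ has no effect on the level sets at height $\gamma > 0$, which is precisely what legitimises passing back and forth between $p_n$ and its unprojected counterpart $F_n$.
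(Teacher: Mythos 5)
Your proposal is correct and follows essentially the same route as the paper: both identify $\sup_{x}\Phi_\xi[g_{n-1}](x)$ with the (unprojected) phenotypic profile via the computation in the proof of Lemma~\ref{lem: s}, and then read off both lines of \eqref{eq: SdeterministicDynamic2} directly from \eqref{eq: SdeterministicDynamic}. Your explicit handling of the projector $\pi$ on the level set at height $\gamma>0$, and the check that $\sup g_n=\gamma$ propagates, are slightly more careful than the paper's one-line argument but do not change the substance.
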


\begin{proof}
We remark from the proof of Lemma \ref{lem: s} that for any $n \geq 0$, we have
\[
   \sup_{x \in \R} \Phi_\xi[g_n](x) = p_{n+1}(\xi),
\]
therefore we have $s_{n+1} = \sup\{\xi \in \R : \sup_{x \in \R} \Phi_\xi[g_{n}](x) \geq \gamma\}$ by~\eqref{eq: SdeterministicDynamic}. The formula for $g_{n+1}$ follows immediately by~\eqref{eq: SdeterministicDynamic}, and we complete the proof.
\end{proof}

We complete this section by remarking that for all $n \geq 1$, the slope of $g_n$ is bounded below by $-1$ on $\{x \in \R : g_n(x) \geq 0\}$. More precisely, we prove the following result.
\begin{lem}\label{lem: lambda}
Let $g:\R\to\R\cup\{-\infty\}$ and $\xi\in\R$. Then the function $x\mapsto \Phi_\xi[g](x)+ x$ is non-decreasing.
\end{lem}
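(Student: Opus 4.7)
The claim is that for each $\xi \in \R$ and each function $g$, the map $x \mapsto \Phi_\xi[g](x) + x$ is non-decreasing. My plan is to push the $+x$ term inside the supremum over $y$ in the definition \eqref{eq: Phis}, and then show that the resulting $x$-dependent integrand is non-decreasing for each fixed $y$. Since a pointwise supremum of non-decreasing functions is non-decreasing, this would conclude the proof.

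More precisely, since the constants $1 - \gamma$ and $g(y)$ do not depend on $x$, I would write
\[
  \Phi_\xi[g](x) + x \;=\; 1 - \gamma + \sup_{y \in \R}\Bigl( g(y) + \psi_y(x) \Bigr), \qquad \psi_y(x) := x - |x-y| - \mu(\xi - x)_+,
\]
and argue that for every fixed $y$, the function $\psi_y$ is non-decreasing in $x$. The cleanest way I see to check this is to rewrite the first two terms as $x - |x-y| = \min(y,\, 2x-y)$: indeed if $x \geq y$ one has $-|x-y|+x = y$ while if $x < y$ one has $-|x-y|+x = 2x-y \leq y$. The function $\min(y, 2x-y)$ is the minimum of a constant and a strictly increasing function of $x$, hence non-decreasing. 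The remaining term $-\mu(\xi-x)_+$ is also non-decreasing in $x$ since $\mu > 0$ (if $\mu = 0$ it is constant). Summing two non-decreasing functions gives that $\psi_y$ is non-decreasing, as required.

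The alternative (and equivalent) approach would be a straightforward four-case analysis according to the signs of $x - y$ and $x - \xi$; this computes $\psi_y$ explicitly as an affine function of $x$ with non-negative slope ($2+\mu$, $\mu$, $2$, or $0$) on each of the four regions, which also suffices. Either way, once $\psi_y$ is shown to be non-decreasing for each $y$, the supremum $y \mapsto g(y) + \psi_y(x)$ taken pointwise in $x$ inherits monotonicity in $x$, and adding back the constant $1 - \gamma$ yields the claim.

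I do not anticipate any real obstacle: the statement is essentially a regularity property of the variational expression defining $\Phi_\xi[g]$, and no assumption on $g$ (not even measurability) is needed beyond the fact that the supremum is well-defined in $\R \cup \{-\infty, +\infty\}$. The only tiny point to flag is that the rewrite $x - |x-y| = \min(y, 2x-y)$ depends on the sign convention for $|x-y|$ and should be verified in the write-up; after that, the rest is immediate.
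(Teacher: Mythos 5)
Your proof is correct and follows essentially the same route as the paper: both absorb the $+x$ into the supremum and observe that $x - |x-y| - \mu(\xi-x)_+$ is non-decreasing in $x$ for each fixed $y$ (the paper writes this quantity as $y - 2(y-x)_+ - \mu(\xi-x)_+$, which is identical to your $\min(y,\,2x-y) - \mu(\xi-x)_+$), then use that a pointwise supremum of non-decreasing functions is non-decreasing.
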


\begin{proof}
For all $x \in \R$, we have
\begin{align*}
  \Phi_\xi[g](x) +  x &= 1-\gamma + \sup_{y \in \R} \left( g(y) - |y-x| + x \right) - \mu(\xi-x)_+\\
  &= 1 - \gamma + \sup_{y \in \R} \left( g(y) + y - 2 (y-x)_+ \right) - \mu (\xi-x)_+.
\end{align*}
As $x \mapsto - \mu(\xi-x)_+ - 2 (y-x)_+$ is non-decreasing for all $y \in \R$, we conclude that $x \mapsto \Phi_\xi[g] + x$ is also non-decreasing.
\end{proof}

\section{Traveling wave genotypic profiles}
\label{sec:travelingWave}

The main objective of the present section is to describe the concave traveling waves with compact support for the profile of our population. As we will see, these traveling waves are constructed as piecewise linear functions. Observing that there is a one-to-one map between $(v,s)$ and $(\gamma,\mu)$ will then allow us to complete the proof of Theorem~\ref{thm:main}. 

Throughout this section, $g$ will denote a concave traveling wave with speed $v$. Without loss of generality, up to translation of $g$ we can assume that
\begin{equation}\label{eq: rightedge0}
\sup\{x \in \R : g(x) > 0\} = 0.
\end{equation}
We denote by $L$ the left edge of the profile,
\begin{equation}\label{eq: L}
L:=  \inf\{x \in \R : g(x) > 0\},
\end{equation}
so that the support of $g$ is $[L,0]$, and (by concavity) we have $g(x) = -\infty$ if $x \not \in [L,0]$. In particular, we refer to $|L|$ as the width of $g$.

Recalling the definition of the functional $\Phi$ from~\eqref{eq: Phis}, we also define
\begin{equation}
  \label{eqn:defSigma}
  s := \sup\{ \xi \in \R: \sup_{x \in \R} \Phi_\xi[g](x) \geq \gamma\},
\end{equation}
which plays the role of the phenotypic threshold in a population starting with a genotypic profile $g$. In particular, $g$ being a traveling wave, we have
\[
  \forall x \in \R, \ g(x) = \pi\left(\Phi_s[g](x+v)\right).
\]
Observe that $x \mapsto \Phi_s[g](x+v)$ is a concave function $\R\to \R$ by Lemma~\ref{lem:concave}. It is in particular continuous, therefore $g(0) = g(L) = 0$, using that $\pi$ is upper semi-continuous. Therefore, the above equation can be rewritten as
\begin{equation}
  \label{eqn:travelingWaveV2}
  \forall x \in [L,0], \ g(x) = \Phi_s[g](x+v).
\end{equation}

Now we state the main result of this section, which is a more precise version of Theorem~\ref{thm:main}. We describe the fully pulled ($\gamma<\gamma_c$) and semipulled ($\gamma>\gamma_c$) traveling wave solutions to~\eqref{eq: SdeterministicDynamic} as piecewise linear functions. 

\begin{thm}
\label{x}
Let $\gamma \in (0,1)$ and $\mu>0$ with $\mu\neq 1$. 
We set $k:=\lfloor 1/\mu \rfloor$ and $K:= \lfloor 2/\mu \rfloor$. Then there exists a unique up to translation, concave, compactly supported traveling wave $g$ to the dynamic~\eqref{eq: SdeterministicDynamic}. Writing $[L,0]$ for the support of $g$, the function is differentiable everywhere on $[L,0]$ except for a finite number of points, and we have
\begin{equation}\label{eq: gint}
  g : x \in \R \mapsto \pi \left( \int_x^0 g'(y) \mathrm dy \right).
\end{equation}
Moreover, we can identify $g'$ on the interval $[L,0]$ as follows. Let $v:=v(\gamma,\mu)$ and $s$ as defined in~\eqref{eqn:defSigma}.
\begin{itemize}
\item If $\gamma < \gamma_c$, then for all $x \in [L,0]$ we have 
\begin{equation}\label{eq: slopeFully}
g'(x) = (j\mu-1), \quad 
\text{if $-jv < x < -(j-1)v$ for some $j \geq 1$,}
\end{equation}
where
\[
v=\frac{2\gamma}{k(2 - (k+1)\mu)} < s.
\]
\item If $\gamma > \gamma_c$, then for all $x \in [L,0]$, we have
\begin{equation}\label{eq: slopeSemi}
g'(x) = \begin{cases}
-1, & \text{if } x  > s - v\\
(j \mu  - 1), & \text{if } s - (j+1)v < x < s - j v \text{ for some $1 \leq j \leq K+1$} \\
1+\mu, & \text{if } x<s - (K+2)v,
\end{cases}
\end{equation}
where
\begin{equation}\label{eq: vminussigma}
v=1-\gamma \quad \text{and} \quad 
v-s = 1-(1-\gamma)\left(k\left(1-\frac{k+1}{2}\mu \right) +1 \right) =: \chi(\mu,\gamma) > 0.
\end{equation}
\end{itemize}
In particular, $\gamma\mapsto v(\mu,\gamma)$ changes monotonicity at $\gamma_{c}(\mu)$ and is maximized at that point.
\end{thm}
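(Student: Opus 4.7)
The plan is to construct the unique concave, compactly supported traveling wave $g$ explicitly as a piecewise linear profile from right to left, and then verify it solves the fixed point equation. By Proposition~\ref{prop: sngn}, a traveling wave satisfies $g = \pi \circ \Phi_s[g](\cdot + v)$, with $s$ determined by $\sup g = \gamma$ and $v$ the propagation speed. Normalizing the right edge of the support to $0$, continuity of the concave function $\Phi_s[g]$ (Lemma~\ref{lem:concave}) forces $g(0) = 0$ and turns the fixed point relation into the pointwise equation $g(x) = \Phi_s[g](x+v)$ on $[L,0]$. Concavity of $g$ is preserved by Proposition~\ref{prop:regularityG}, and Lemma~\ref{lem: lambda} forces $g'(x) \geq -1$ in the interior of the support.

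The central tool is the analysis of $h(z) := \sup_y \{g(y) - |z-y|\}$ through its argmax $y^{*}(z)$. Since $g$ is concave with slope $\geq -1$, the function $g + \mathrm{Id}$ is non-decreasing, so for $z$ larger than the rightmost point where $g$ attains slope $-1$ the argmax sits exactly there, and $h'(z)=-1$. Differentiating the traveling wave equation yields
\[
g'(x) \;=\; h'(x+v) \;-\; \mu\,\ind{x+v < s},
\]
so the slope of $g$ drops by exactly $\mu$ every time the target point $x+v$ crosses the phenotypic threshold $s$. This is the mechanism that produces the arithmetic progression of slopes $j\mu - 1$.

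The second step is an inductive sweep from right to left. In the semi-pulled regime $v > s$, the interval $x\in(s-v,0)$ gives $x+v>s$, so the $\mu$-penalty is inactive and $y^{*}(x+v)=0$, yielding $g'=-1$ on $(s-v,0)$; combined with $g(0)=0$ this forces $v=1-\gamma$. Once $x$ drops below $s-v$, the argmax still lies in the slope-$(-1)$ layer but the $\mu$-penalty switches on, so $g'=\mu-1$ on the next segment of length $v$, and so on, producing slopes $j\mu-1$ on $(s-(j+1)v,s-jv)$. The fully-pulled case $v\leq s$ is similar except the penalty is active from the right edge, yielding slopes $j\mu-1$ on $(-jv,-(j-1)v)$ and the slower speed $v=\tfrac{2\gamma}{k(2-(k+1)\mu)}$ obtained by summing the slope contributions up to the point where $\sup g = \gamma$ is attained. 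The leftmost segment (slope $1+\mu$ in the semi-pulled formula) is reached once the argmax $y^{*}$ has swept through the slope-$(-1)$ layer and is about to exit the support, at which moment the active derivative identity changes and the slope jumps to $1+\mu$.

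The final step closes the system using three global constraints: $g(L)=0$ (i.e.\ $\int_L^0 g'=0$), $\sup g=\gamma$, and the self-consistency of $s$ via~\eqref{eqn:defSigma}. Summing slopes times segment lengths determines $v$, hence $s$, in closed form, and the critical curve $\gamma_c(\mu)$ is located exactly where the semi-pulled and fully-pulled branches agree ($v=s$); the algebra recovers $\gamma_c(\mu) = \tfrac{k}{k+1}\bigl(1-\tfrac{\mu}{2-k\mu}\bigr)$. Monotonicity of $\gamma\mapsto v(\gamma,\mu)$ on each side of $\gamma_c$ and its maximality at $\gamma_c$ follow by inspection of the explicit formulas. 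The principal obstacles are (i) carefully tracking the position of $y^{*}$ at each inductive step, particularly at the transition producing the terminal slope $1+\mu$, where $y^{*}$ jumps to a new layer of the profile, and (ii) uniqueness: showing that no exotic concave traveling wave exists. The latter follows because, once the rightmost layer and the identification of $s$ from~\eqref{eqn:defSigma} are fixed, the derivative is rigidly determined segment by segment by the identity above, and the three global constraints determine the remaining free parameters uniquely.
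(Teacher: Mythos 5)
Your proposal is correct in substance and follows essentially the same route as the paper: a right-to-left inductive identification of the piecewise-linear profile using the explicit form of $\Phi_s[g]$ (the paper's Corollary~\ref{cor:formula}), then closing the system with $\sup g=\gamma$ and the root equation for $v$, and locating $\gamma_c$ as the locus where the fully-pulled and semi-pulled branches exchange consistency. One small slip worth fixing: the displayed identity should read $g'(x)=h'(x+v)+\mu\,\ind{x+v<s}$ (the derivative of $-\mu(s-z)_+$ in $z$ is $+\mu$ for $z<s$), and the repeated increments of $\mu$ in the slope come from iterating the layer-to-layer recursion $g'(x)=g'(x+v)+\mu$ once the penalty is active (with the argmax pinned at the point $d$ where $g'$ first exceeds $1$, which produces the terminal slope $1+\mu$), not from the target point crossing $s$ more than once; your subsequent inductive sweep uses the correct sign, so nothing downstream breaks.
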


We first collect specific properties of the traveling wave $g$, which will be used in the proof of Theorem~\ref{x}.
We will write
\[
  m = \mathrm{argmax}(g) \in (L,0)
\]
for the leftmost point at which $g$ attains its maximum. By Lemma~\ref{lem: s} and \eqref{eqn:StravelingWave}, we remark that $g(m) = \sup g = \gamma$, so $m$ corresponds to the genotypic value shared by the largest portion of the population. Moreover, by concavity of $g$, we conclude that $g$ is increasing on $(-\infty,m]$ and non-increasing on $[m,\infty)$.

The following notation will also be useful. If there exists $x\in(L,0)$ such that $g'(x)>1$, then let $d$ denote the unique negative real number such that $g'>1$ on the interval $(L,d)$ and $g'\leq 1$ on $(d,\infty)$. Furthermore, if $g'\leq 1$ on the interval $(L,0)$, then we define $d:=L$. Now $d<0$ is well-defined because of the concavity of $g$.

\begin{lem}
\label{lem:slope}
The function $y \mapsto g(y) + y$ is non-decreasing on the interval $(-\infty,0]$, and the function $y \mapsto g(y) - y$ is non-increasing on the interval $[d,\infty)$ and increasing on $[L,d]$.
\end{lem}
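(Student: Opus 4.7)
The plan is to reduce both monotonicity claims to facts already established: the first will follow from Lemma~\ref{lem: lambda} via the traveling wave identity~\eqref{eqn:travelingWaveV2}, and the second is essentially a restatement of the definition of $d$ together with a little care at the boundaries.

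For the first claim, I would apply Lemma~\ref{lem: lambda} to the traveling wave $g$ with $\xi = s$, which asserts that $u \mapsto \Phi_s[g](u) + u$ is non-decreasing on $\R$. By~\eqref{eqn:travelingWaveV2}, $g(y) = \Phi_s[g](y+v)$ for every $y \in [L,0]$ (the projection $\pi$ acts as the identity here since $g \geq 0$ on its support, so $\Phi_s[g](y+v) \geq 0$ on $[L,0]$). Substituting $u = y+v$ into the conclusion of Lemma~\ref{lem: lambda} immediately yields that $y \mapsto g(y) + y$ is non-decreasing on $[L,0]$. For $y < L$ we have $g(y) = -\infty$, while $g(L) = 0$ by continuity of the concave function $g$ on its compact support, so the map extends non-decreasingly to $(-\infty,0]$ with the standard convention that $-\infty$ is smaller than any real number.

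For the second claim, concavity of $g$ guarantees that the right derivative $g'_+$ exists everywhere on $(L,0)$ and is non-increasing. By the very definition of $d$, we have $g' > 1$ on $(L,d)$ and $g' \leq 1$ on $(d,0)$. Integrating the first inequality, for $L \leq y_1 < y_2 \leq d$ one obtains $g(y_2) - g(y_1) > y_2 - y_1$, so that $y \mapsto g(y) - y$ is strictly increasing on $[L,d]$; integrating the second shows that $y \mapsto g(y) - y$ is non-increasing on $[d,0]$. To extend the latter to $[d,\infty)$, one notes that $g(y) = -\infty$ for $y > 0$, so the jump from $g(0) - 0 = 0$ down to $-\infty$ at $y = 0$ preserves non-increasingness, and the monotonicity extends to $[d,\infty)$.

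There is no substantive obstacle in the argument; the only point requiring minor verification is that the projector $\pi$ acts trivially in the traveling wave identity of step one, which is immediate from the definition of the support $[L,0]$ as the set $\{g \geq 0\}$.
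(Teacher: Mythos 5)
Your proof is correct and follows essentially the same route as the paper: the first claim is deduced from Lemma~\ref{lem: lambda} together with the traveling wave identity~\eqref{eqn:travelingWaveV2} (and the convention $g=-\infty$ off $[L,0]$), and the second claim is read off directly from the definition of $d$ and concavity. You simply spell out the details that the paper's two-line proof leaves implicit.
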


\begin{proof}
As $g(x) = \pi \circ \Phi_{s}[g](x+v)$, the fact that $y \mapsto g(y) + y$ is non-decreasing on $(-\infty,0]$ follows from Lemma \ref{lem: lambda} and our convention that $g$ is non-negative on $[L,0]$ only. The second part follows from the definition of $d$.
\end{proof}

Using the definition of $\Phi_s$  immediately implies the following corollary.
\begin{cor}
\label{cor:formula}
For all $x \geq L$, we have
\begin{equation}
\label{eqn:iterate}
\Phi_s[g](x) = \begin{cases}
1-\gamma - x - \mu(s-x)_+ & \text{if } x > 0 \\
1-\gamma + g(x) - \mu(s-x)_+ & \text{if } d\leq x\leq 0 \\
1-\gamma + g(d) - (d-x) -  \mu(s - x)_+ & \text{if } x<d.
\end{cases}
\end{equation}
\end{cor}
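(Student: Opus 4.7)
\bigskip

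\noindent\textbf{Proof proposal.}
The plan is to start from the definition of $\Phi_s[g]$, namely
\[
\Phi_s[g](x) = 1 - \gamma - \mu(s-x)_+ + \sup_{y \in \R}\bigl( g(y) - |x-y|\bigr),
\]
so the whole task reduces to evaluating the supremum $H(x) := \sup_{y \in \R}\bigl(g(y) - |x-y|\bigr)$ and then rearranging. Since $g \equiv -\infty$ outside $[L,0]$, the supremum is attained on $y \in [L,0]$; in particular $g(y) + y$ and $g(y) - y$ are well-defined finite quantities there, and Lemma~\ref{lem:slope} applies.

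The first step is to split the supremum into the two one-sided pieces $y \leq x$ and $y \geq x$, using
\[
g(y) - |x-y| = \begin{cases} g(y) + y - x & \text{if } y \leq x,\\ g(y) - y + x & \text{if } y \geq x.\end{cases}
\]
By Lemma~\ref{lem:slope}, the function $y \mapsto g(y) + y$ is non-decreasing on $(-\infty,0]$, so the first piece is maximized at $y = \min(x,0)$. Similarly, $y \mapsto g(y) - y$ is increasing on $[L,d]$ and non-increasing on $[d,\infty)$, so the second piece is maximized at $y = \max(x,d)$ (restricted to $[L,0]$, but the behaviour at the right edge is automatic because $g(y) - y$ keeps decreasing).

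The three cases then fall out by bookkeeping. If $x > 0$, the first piece is maximized at $y = 0$ giving $g(0) - x = -x$, while the second piece also achieves at most $-x$ (since $g(y) - y \le g(0) - 0 = 0$ for $y \ge 0$ in the extended sense), so $H(x) = -x$. If $d \le x \le 0$, both pieces are maximized at $y = x$, yielding $H(x) = g(x)$ (continuity of $g$ at $x \in [L,0]$ makes the two one-sided maxima coincide). If $x < d$, the first piece is still maximized at $y = x$ with value $g(x)$, while the second piece is maximized at $y = d$ with value $g(d) - d + x$; since $g(y) - y$ is strictly increasing on $[L,d]$ we have $g(x) - x < g(d) - d$, i.e.\ $g(x) < g(d) - d + x$, so the second piece wins and $H(x) = g(d) - (d-x)$.

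Substituting these three expressions for $H(x)$ back into $\Phi_s[g](x) = 1 - \gamma + H(x) - \mu(s-x)_+$ gives exactly the piecewise formula~\eqref{eqn:iterate}. There is no real obstacle here: the only subtlety is ensuring that the maximizers of the two one-sided pieces lie in $[L,0]$ (so that Lemma~\ref{lem:slope} may be invoked), and that the strict comparison $g(x) < g(d) - d + x$ in the case $x < d$ follows from the fact that $g' > 1$ on $(L,d)$ which is built into the definition of $d$.
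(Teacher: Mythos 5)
Your proposal is correct and follows essentially the same route as the paper: both evaluate $\sup_{y}\bigl(g(y)-|x-y|\bigr)$ by splitting into the one-sided pieces $y\leq x$ and $y\geq x$ and invoking the monotonicity of $y\mapsto g(y)+y$ and $y\mapsto g(y)-y$ from Lemma~\ref{lem:slope} to locate the maximizers, then treating the three ranges $x>0$, $d\leq x\leq 0$, and $x<d$ separately. The only cosmetic difference is that for $x>0$ the paper simply notes the $y\geq x$ branch is irrelevant because $g\equiv-\infty$ there, whereas you bound it by $-x$; both are fine.
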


\begin{proof}
We recall that
\[
  \Phi_s[g](x)  = 1-\gamma - \mu (s - x)_+ + \sup_{y \in \R} g(y) - |x-y|.
\]
Let $x \geq d$. By Lemma~\ref{lem:slope}, and since $g\equiv -\infty$ on $\R\setminus [L,0]$, we have that the function $y \mapsto g(y) - (x-y)$ is non-decreasing on the interval $(-\infty,x \wedge 0)$, and $y \mapsto g(y) + (x-y)$ is non-increasing on $(x,\infty)$. We conclude that
\[
  \sup_{y \in \R} g(y) - |x-y| = \begin{cases}
    g(0) - |x| &\text{if } x > 0\\
    g(x) &\text{if } x \leq 0,
  \end{cases}
\]
which completes the proof for $x\geq d$.

Now let $L\leq x<d$. Using Lemma~\ref{lem:slope} again, we see that the functions $y \mapsto g(y) - (x-y)$ and $y \mapsto g(y) + (x-y)$ are non-decreasing on the intervals $(-\infty,x]$ and $(x,d)$ respectively; and we also have that the function $y \mapsto g(y) + (x-y)$ is non-increasing on the interval $[d,\infty)$. Therefore, we obtain
\[
\sup_{y \in \R} g(y) - |x-y| = g(d) - (d-x),
\]
and the result follows.
\end{proof}

Observe that from \eqref{eqn:travelingWaveV2}, we have
\[
  g(x-v) = \pi \circ \Phi_s[g](x)
\]
for all $x \in \R$. In particular, as the right edge of the support of $g$ is $0$, we conclude that
\[
  v = \sup\{x \in \R : \Phi_s[g](x) \geq 0\}.
\]

We finish by showing that there is no concave compactly supported stationary wave, i.e.~all traveling waves have positive speed. In view of the previous corollary, we obtain a noteworthy relationship between $v$ and $s$.

\begin{lem}\label{lem: speed}
We have $v > 0$, moreover
\[
  v =
  \begin{cases}
    \frac{1 - \gamma - \mu s}{1-\mu} & \text{ if $v < s$}\\
    1-\gamma & \text{ if $v > s$}.
  \end{cases}
\]
\end{lem}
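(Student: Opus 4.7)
My plan is to characterise $v$ as the rightmost zero of $\Phi_s[g]$, and then to extract an explicit value by solving the piecewise-linear equation supplied by Corollary~\ref{cor:formula}. Concretely, the traveling-wave identity~\eqref{eqn:travelingWaveV2} reads $g(x-v) = \pi(\Phi_s[g](x))$. Since the right edge of the support of $g$ is $0$, so that $g(0) = 0$ while $g(x) = -\infty$ for $x > 0$, and since $\Phi_s[g]$ is a real-valued concave (hence continuous) function by Lemma~\ref{lem:concave}, these constraints together force $\Phi_s[g](v) = 0$ with $\Phi_s[g](x) < 0$ for every $x > v$. That is, $v$ is exactly the rightmost zero of $\Phi_s[g]$.

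Next, by the first branch of Corollary~\ref{cor:formula}, $\Phi_s[g](x) = 1 - \gamma - x - \mu(s-x)_+$ for every $x \geq 0$. I would then split the analysis according to the sign of $v - s$. If $v > s$, then $(s-v)_+ = 0$ and the equation $\Phi_s[g](v) = 0$ reduces to $v = 1 - \gamma$. If $v \leq s$, then $s \geq 0$ and $(s-v)_+ = s - v$, so $\Phi_s[g](v) = 0$ reads $(1-\gamma-\mu s) - (1-\mu) v = 0$, yielding $v = (1-\gamma-\mu s)/(1-\mu)$ (well-defined since $\mu \neq 1$). This recovers the dichotomy displayed in the lemma.

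It remains to prove $v > 0$. In the first subcase this is immediate from $\gamma \in (0,1)$. For the second subcase I would argue by contradiction: suppose $v = 0$, so that the traveling-wave identity collapses to $g(x) = \Phi_s[g](x)$ on $[L,0]$. The threshold $d$ defined before Corollary~\ref{cor:formula} satisfies $d < 0$ (it is either a negative real or equal to $L < 0$), so $[d,0]$ is a nontrivial subinterval on which the middle branch of that corollary applies, giving $g(x) = 1 - \gamma + g(x) - \mu(s - x)_+$, equivalently
\[
\mu(s-x)_+ = 1 - \gamma \quad \text{for all } x \in [d, 0].
\]
This is impossible, because the left-hand side is either identically $0$ on $[d,0]$ (when $s \leq d$) or strictly non-constant in $x$ (when $s > d$, since then $(s-x)_+$ varies linearly on some non-trivial subinterval of $[d, 0]$), while the right-hand side is the positive constant $1 - \gamma$. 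The main obstacle of the proof is precisely this last step ruling out $v = 0$; the remainder reduces to an elementary piecewise-linear computation.
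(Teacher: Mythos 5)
Your second paragraph (the derivation of the two formulas from $\Phi_s[g](v)=0$ via Corollary~\ref{cor:formula}) coincides with the paper's final step, but the proof of $v>0$ --- which is where the actual content of the lemma lies --- is both circular and incomplete. To write $\Phi_s[g](v)=1-\gamma-v-\mu(s-v)_+$ you evaluate the branch of Corollary~\ref{cor:formula} valid for $x>0$ at the point $x=v$, which presupposes $v\geq 0$; the implication ``$v\leq s\Rightarrow s\geq 0$'' presupposes the same thing. You then read off $v=1-\gamma>0$ in the first subcase, so positivity is being deduced from a formula whose derivation already assumed it. Independently of that, your contradiction argument excludes only $v=0$, never $v<0$, and $v<0$ is precisely the scenario that requires work: since $v$ is the rightmost zero of the concave function $\Phi_s[g]$, and $\Phi_s[g](0)=1-\gamma-\mu s_+$, a negative $v$ is only excluded once one knows that $s$ cannot be too large (if $s\geq(1-\gamma)/\mu$ with $\mu<1$, the value at $0$ is nonpositive and the rightmost zero could a priori sit at a negative abscissa).

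Ruling this out is the heart of the paper's proof: it invokes the defining property of the phenotypic threshold, $p(s)=\gamma$ from~\eqref{eqn:defSigma}, to show $s<(1-\gamma)/\mu$ when $\mu<1$, and treats $\mu>1$ separately according to the sign of $s$ (using $g(s-v)=\gamma$ when $s\geq0$, and a direct lower bound $\Phi_s[g](x)\geq 1-\gamma-x$ for small $x>0$ when $s<0$). In each case one gets $\Phi_s[g](x)>0$ for some $x>0$, hence $v>0$, and only then is the piecewise-linear equation solved. Your proposal never uses the definition of $s$ at any point, so this bound --- and with it the positivity of $v$ --- is not actually established. The fix is to insert the paper's case analysis (or an equivalent argument exploiting $p(s)=\gamma$) before the computation of the explicit formulas.
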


\begin{proof}
We write
\[
  p(x) = 1- \gamma + \sup_{y \in \R} g(y) - \min(\mu,1)(x-y)_+
\]
for the phenotypic profile of a population starting with a genotypic profile $g$. From \eqref{eq: SdeterministicDynamic}, we recall that $p(s) = \gamma$. Recall also the definition of the functional $\Phi_s$ from~\eqref{eq: Phis}.

We first assume that $\mu > 1$ and $s\geq 0$. Then, by the definitions of the functions $p$ and $\Phi_s[g]$, and since the support of $g$ is the interval $[L,0]$, we have $\Phi_s[g](s) = p(s) = \gamma \geq 0$. Thus,~\eqref{eqn:travelingWaveV2} implies that $g(s - v) =\gamma$. Using that $g(x) > 0$ if and only if $x \in (L,0)$, we conclude that $0<s < v$.

Next, assume that $\mu > 1$ and $s < 0$. In this case, for all $x \in (0,|s|)$, we have
\[
  \Phi_s[g](x) = 1 - \gamma + \sup_{y \in \R} g(y) - |x-y| \geq 1 - \gamma - x.
\]
Therefore $\Phi_s[g](x) > 0$ for $x > 0$ small enough, in particular $v$ is positive.

We now assume that $\mu < 1$, and first show that in this case, we have $s < \frac{1-\gamma}{\mu}$. Indeed, if this is not the case then $s \geq \frac{1-\gamma}{\mu} > 0$, and we have
\[
  \gamma = p_1(s) = 1 - \gamma + \sup_{y \in \R} g(y) - \mu (s-y)_+ =  1 - \gamma - \mu s + \sup_{y \leq 0} g(y) + \mu y < \sup_{y \leq 0} g(y),
\]
which leads to a contradiction as $\sup_{y \leq 0} g(y) = \gamma$. However, since $s < \frac{1-\gamma}{\mu}$, then by \eqref{eqn:iterate}, we have
\[
  \Phi_s[g](0) = 1-\gamma - \mu(s)_+ > 0,
\]
therefore $\Phi_s[g](x)>0$ for $x$ small enough, proving again that $v > 0$.

Now, as $v > 0$, it is defined as the unique positive root of the equation $\Phi_s[g](x) = 0$. By \eqref{eqn:iterate}, this equation can be rewritten
\[
  1 - \gamma - v - \mu(s-v)_+ = 0.
\]
Therefore, if $v > s$, we have $v = 1 - \gamma$, while if $v < s$ then $v = \frac{1-\gamma - \mu s}{1-\mu}$, completing the proof.
\end{proof}

In the rest of the section, we identify the function $g$ in terms of $v$ and $s$. More precisely, we first assume that $v < s$, which we call the \emph{fully pulled regime}, and we identify the function $g$ in this case, as well as the associated parameters $\mu$ and $\gamma$ in terms of $(v,s)$. We then work under the assumption that $v > s$, that we refer to as the \emph{semipulled regime}. We identify once again $g, \gamma$ and $\mu$ in terms of $v$ and $s$. Showing that the correspondence between $(v,s)$ and $(\gamma,\mu)$ is bijective, this is enough to prove Theorem~\ref{x}.

\begin{prop}[Characterization of fully pulled traveling waves]
\label{prop:ful}
Let $g$ be a traveling wave with speed $v$. Suppose that $v < s$. Then $g$ is given by~\eqref{eq: gint} and~\eqref{eq: slopeFully}; and we have
\[
  \mu < 1, \quad
  v = \frac{2 \gamma}{k(2 - (k+1)\mu)} \quad \text{ and } \quad s = \frac{1-\gamma - (1-\mu)v}{\mu},
\]
where $k=\lfloor1/\mu\rfloor$.
\end{prop}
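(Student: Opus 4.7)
My plan is to assume that $g$ is as given and derive its shape by combining the fixed-point identity $g(x) = \Phi_s[g](x+v)$ from~\eqref{eqn:travelingWaveV2} with the explicit case decomposition of Corollary~\ref{cor:formula}. I will first exclude $\mu > 1$. Revisiting the case analysis used inside the proof of Lemma~\ref{lem: speed}, if $\mu > 1$ and $s \geq 0$ then evaluating the phenotypic threshold relation $p_1(s) = \gamma$ via the traveling wave formula forces $s < v$, while if $\mu > 1$ and $s < 0$ the positivity of $v$ established in the same lemma gives $v > 0 > s$; either alternative contradicts the standing hypothesis $v \leq s$. Therefore $\mu < 1$, and the formula $v = (1 - \gamma - \mu s)/(1 - \mu)$, equivalently $s = (1 - \gamma - (1 - \mu) v)/\mu$, is then exactly the first branch of Lemma~\ref{lem: speed}.

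I will then determine the piecewise linear structure of $g$ on $[L,0]$ by induction on the intervals $I_j := (-jv, -(j-1)v) \cap (L, 0)$. For $x \in I_1$, $x + v$ lies in $(0, v)$, so Case~1 of Corollary~\ref{cor:formula} applies; since $v \leq s$ and $x \leq 0$ give $(s - x - v)_+ = s - x - v$, I obtain
\[
  g(x) = 1 - \gamma - (x + v) - \mu(s - x - v),
\]
which simplifies via the speed-threshold identity to $g(x) = (\mu - 1) x$, a linear function with slope $\mu - 1$ on $I_1$. Assuming inductively that the slope on $I_i$ equals $i\mu - 1$ for every $1 \leq i \leq j$, then for $x \in I_{j+1}$ we have $x + v \in I_j \subset (d,0)$, Case~2 of Corollary~\ref{cor:formula} applies, and
\[
  g(x) = 1 - \gamma + g(x + v) - \mu(s - x - v),
\]
so that $g'(x) = g'(x + v) + \mu = (j+1)\mu - 1$.

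To keep the induction in Case~2 rather than Case~3, I need to confirm that $x + v \geq d$ at every step, equivalently that the slopes produced so far remain $\leq 1$. Cumulating the slopes from $g(0) = 0$ yields the closed form $g(-jv) = v(j - \mu j(j+1)/2)$, which turns strictly negative as soon as $j \geq K := \lfloor 2/\mu \rfloor$ (since $\mu(K+1) > 2$ by definition of $K$). Hence the support must satisfy $L > -Kv$, every index $j$ with $I_j \neq \emptyset$ obeys $j \leq K$, and the associated slope $j\mu - 1 \leq K\mu - 1 \leq 1$; in particular $d = L$, so the induction is self-consistent all the way down to the left edge.

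Finally, since $j\mu - 1$ changes sign at $j = 1/\mu$, the concave function $g$ increases on the intervals indexed by $j \geq k+1$ and decreases on those indexed by $j \leq k$, where $k = \lfloor 1/\mu \rfloor$, so its maximum is attained at $m = -kv$. Using $g(m) = \sup g = \gamma$ I compute
\[
  \gamma = g(-kv) = \sum_{j=1}^{k} v(1 - j\mu) = v\left( k - \mu \frac{k(k+1)}{2} \right),
\]
from which the announced formula $v = 2\gamma / (k(2 - (k+1)\mu))$ follows, and substituting back into the speed-threshold relation yields the formula for $s$. The main obstacle is the bookkeeping of the inductive step: one must track the correct branch of Corollary~\ref{cor:formula} at each iteration, justify the $(s-x-v)_+$ simplification via $v\leq s$, and show the support terminates before any slope exceeds $1$; the closed form for $g(-jv)$ is the key ingredient that closes the loop.
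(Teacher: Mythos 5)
Your proposal is correct and follows essentially the same route as the paper's proof: an interval-by-interval induction on $[-jv,-(j-1)v]$ driven by Corollary~\ref{cor:formula}, the closed form $g(-jv)=v\bigl(j-\mu j(j+1)/2\bigr)$ to show the support ends before $-Kv$ (hence $d=L$ and the induction never falls into the third branch), and the condition $g(-kv)=\gamma$ to extract the formula for $v$. The only cosmetic difference is that you rule out $\mu>1$ up front by re-running the case analysis of Lemma~\ref{lem: speed} against the hypothesis $v\leq s$, whereas the paper deduces $\mu<1$ afterwards from the fact that $k=0$ would force $\gamma=g(0)=0$; both arguments are valid.
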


\begin{proof}
The proof mainly relies on successively describing the values of $g$ on each interval $[-v,0]$, $[-2v,-v]$, ..., using~\eqref{eqn:iterate}. First observe that, since $v < s$, for all $x \in [0,v]$ we have
\[
   g(x-v) = \Phi_s[g](x) = 1 - \gamma - x - \mu (s-x) = 1 - \gamma - \mu s - (1 - \mu) x.
\]
In particular, $g$ is linear on $[-v,0]$ with slope $-(1-\mu)$ (recall that by hypothesis, $g(0)=0$). Recalling the notation $d$ from Lemma~\ref{lem:slope}, this also shows that $d\leq -v$.

We now assume that $g$ is affine on the interval $[-j v, -(j-1)v]$ with slope $-(1 - j\mu)$ for some $j \geq 1$ with $-jv\geq d$. We use again \eqref{eqn:iterate} with $x \in [-jv, -(j-1)v]$ to obtain
\[
 g(x-v) = \pi \circ \Phi_s [g] (x) = \pi\left(1 - \gamma + g(x) - \mu (s - x)\right) = \pi(1 - \gamma - \mu s + g(x) + \mu x).
\]
In other words, $g$ is affine on $[-(j+1)v,-jv] \cap [L,0]$ with slope $-(1-(j+1)\mu)$. Using finite induction, and the fact that $g$ is continuous on $[0,L]$, we conclude that $g$ is given by~\eqref{eq: gint} and~\eqref{eq: slopeFully} on $[L \vee (d-v),0]$. Recall the notation $K =\lfloor 2/\mu \rfloor$. Then the inductive argument and the fact that $-(1-K\mu)\leq 1$ also show that $d\leq L \vee -Kv$. 

Furthermore, it can also be checked that the piecewise linear function, which takes $0$ at $0$ and has slopes $-(1 - j\mu)$ on the intervals $[-j v, -(j-1)v]$, is positive at $-(K-1)v$ and non-positive at $-Kv$. Together with the fact that $d\leq L \vee -Kv$, this shows that $d=L$ and that $g$ is given by~\eqref{eq: gint} and~\eqref{eq: slopeFully} on the interval $[L,0]$.

We recall from Lemma~\ref{lem: speed} that, since $v < s$, we have
\begin{equation}
  \label{eqn:firstRelationshipFully}
  v = \frac{1-\gamma - \mu s}{1-\mu},
\end{equation}
giving a first equation between the parameters $(v,s)$ and $(\mu,\gamma)$. To find the other one, we recall that
$\sup_{y \in \R} g(y) = g(m) = \gamma$, which, given the formulas~\eqref{eq: gint} and~\eqref{eq: slopeFully}, provides a second relationship between the parameters.

More precisely, let us recall that $k= \floor{1/\mu}$, thus
\[
  k \mu \leq 1 < (k+1) \mu.
\]
As a result, the slope of $g$ on $[-kv,-(k-1)v]$ is $-(1 - k\mu) \leq 0$, while the slope of $g$ on $[-(k+1)v,-kv]$ is $-(1 -(k+1)\mu) > 0$. Since $g$ is concave, we conclude that $m = \mathrm{argmax}(g) = -kv$, and that $g(-kv) = \gamma$. In particular, this condition implies that $k > 0$, because otherwise we would have $\gamma = g(0)= 0$. Hence, we conclude $\mu < 1$. Simple computation from~\eqref{eq: gint} and~\eqref{eq: slopeFully} then yields 
\[
  \gamma = g(-kv) = \sum_{j=1}^{k} v (1 - j \mu) = v \left(k - \frac{k(k+1)}{2}\mu\right),
\]
therefore we obtain the second relationship
\begin{equation}
  \label{eqn:secondRelationshipFully}
  \gamma = \frac{kv}{2}(2 - (k+1)\mu).
\end{equation}

From \eqref{eqn:firstRelationshipFully} and \eqref{eqn:secondRelationshipFully}, we deduce the conditions
\begin{equation}
  \label{eqn:formula}
  \begin{cases}
    & v = \frac{2 \gamma}{k(2 - (k+1)\mu)}\\
    & s = \frac{1 - \gamma - (1-\mu) v}{\mu},
  \end{cases}
\end{equation}
completing the proof.
\end{proof}

We now turn to the study of the traveling waves associated to the semipulled regime.
\begin{prop}[Characterization of the semipulled traveling waves]
\label{prop:sem}
Let $g$ be a traveling wave with speed $v$. Suppose that $v > s$. Then, $g$ is given by \eqref{eq: gint} and~\eqref{eq: slopeSemi}, and we have
\[
  v = 1 - \gamma \quad \text{ and } \quad  s = \frac{(k+1)(1-\gamma)}{2}(2 -k \mu) - \gamma.
\]
\end{prop}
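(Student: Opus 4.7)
The plan is to mirror the proof of Proposition~\ref{prop:ful}: starting from the traveling-wave identity~\eqref{eqn:travelingWaveV2}, I iteratively apply Corollary~\ref{cor:formula} to identify the slopes of $g$ on successive intervals, moving leftward from the right edge $x = 0$. The hypothesis $v > s$ immediately puts us in the second branch of Lemma~\ref{lem: speed}, yielding $v = 1 - \gamma$, which is one of the two claimed identities.

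The first step exploits that for $x \in [-v, 0]$ we have $x + v \in [0, v]$, so only the first case of~\eqref{eqn:iterate} is used. On the sub-interval $x + v \in (s \vee 0, v)$, the term $(s - (x+v))_+$ vanishes and we obtain $g(x) = -x$, giving slope $-1$ on $(s-v, 0)$; on $x + v \in (0, s)$ (non-empty when $s > 0$) the extra contribution $-\mu(s - (x+v))$ yields slope $\mu - 1$ on $(-v, s-v)$. I then close an induction: assuming $g$ has slope $(j\mu - 1)$ on $(s - (j+1)v, s - jv)$, I apply the second case of~\eqref{eqn:iterate} on the shift by $-v$ of that interval, noting that $s - (x+v) > 0$ throughout; the derivative in $x$ of $1 - \gamma + g(x+v) - \mu(s - (x+v))$ equals $(j\mu - 1) + \mu = (j+1)\mu - 1$. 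This proceeds as long as the slope stays below $1$, which by Lemma~\ref{lem:slope} is precisely the condition $x > d$. Since $K = \lfloor 2/\mu\rfloor$ is the largest $j$ with $j\mu - 1 \leq 1$, I conclude $d = s - (K+1)v$; for $x + v < d$ the third case of~\eqref{eqn:iterate} takes over, producing the constant slope $1 + \mu$ on the leftmost segment, matching~\eqref{eq: slopeSemi}. The left endpoint $L$ is then fixed by the requirement $g(L) = 0$.

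The formula for $s$ emerges from the normalization $\sup g = \gamma$. The slopes $(j\mu - 1)$ change sign between $j = k$ and $j = k+1$, where $k = \lfloor 1/\mu\rfloor$, so the leftmost maximizer of $g$ is $m = s - (k+1)v$, and summing the linear contributions on $(s-v, 0)$ and on $(s-(j+1)v, s - jv)$ for $j = 1, \dots, k$ gives
\[
\gamma \;=\; g(m) \;=\; (v - s) + v\sum_{j=1}^{k}(1 - j\mu) \;=\; v(k+1)\bigl(1 - \tfrac{k\mu}{2}\bigr) - s.
\]
Substituting $v = 1 - \gamma$ and solving for $s$ yields the claimed expression $s = \tfrac{(k+1)(1-\gamma)}{2}(2 - k\mu) - \gamma$; the identity $v - s = \chi(\mu, \gamma)$ follows by rearrangement, and the positivity of $\chi$ can be checked to be equivalent to $\gamma > \gamma_c(\mu)$, consistent with the semi-pulled regime. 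The main technical obstacle I anticipate is bookkeeping when $s \leq 0$: some of the sub-intervals such as $(-v, s-v)$ become empty or reverse orientation, and the first few iterations must be carried out with the first case of~\eqref{eqn:iterate} while $(s - (x+v))_+ = 0$; a careful case split should show that once $x + v$ descends below $s$ the same inductive step (slope increment by $\mu$ per iteration) resumes, leaving the final piecewise-linear form of $g$ and the formulas for $v$ and $s$ unchanged.
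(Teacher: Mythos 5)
Your proposal is correct and follows essentially the same route as the paper's proof: deduce $v=1-\gamma$ from Lemma~\ref{lem: speed}, iterate Corollary~\ref{cor:formula} interval by interval to obtain the slopes $-1$, $(j\mu-1)$, and $1+\mu$ of~\eqref{eq: slopeSemi}, and then read off $s$ from $\gamma=g(s-(k+1)v)=(v-s)+v\sum_{j=1}^{k}(1-j\mu)$. The only difference is organizational — the paper handles the $s<0$ bookkeeping you flag at the end via an explicit sub-recursion showing $g(x)=-x$ on all of $[s-v,0]$ (using the second case of~\eqref{eqn:iterate} with the $\mu$-term vanishing, rather than the first case as you suggest), but this does not change the argument or the conclusion.
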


\begin{proof}
We use once again \eqref{eqn:iterate}, first to give the value of $g$ on the interval $[s - v, 0]$, then on each interval $[s - (j+1)v, s - j v]$ by recursion. As a first step, we show that
\begin{equation}
  \label{eqn:aim1}
  g(x) = - x \quad \text{ for all $x \in [s-v, 0]$.}
\end{equation}

Using \eqref{eqn:iterate}, we observe that for $x \in [s_+,v]$, we have
\[
  g(x-v) = \Phi_s[g](x) = 1 - \gamma - x.
\]
Thus, as $v = 1 - \gamma$ by Lemma~\ref{lem: speed}, we have $g(x) = -x$ for all $x \in [s_+-v,0]$. Therefore, if $s \geq 0$, then the proof of~\eqref{eqn:aim1} is now complete.

If $s < 0$, we prove recursively that for all $i\geq 1$ and $x \in [s_+ - i v, s_+ - (i-1)v] \cap [s-v,0]$, we have $g(x) = -x$. This result being proved for $i=1$, we assume it to hold for some $i \geq 1$. Then by~\eqref{eqn:iterate}, for all $x \in [s_+- iv, s_+- (i-1)v]$, if $x > s - v$, we have
\[
  g(x) = \Phi_s[g](x + v) = 1 - \gamma + g(x+v) = 1 - \gamma - (x + v) = -x,
\]
proving~\eqref{eqn:aim1} by induction. The argument also shows that $d\leq s-v$ (recall the notation $d$ from Lemma~\ref{lem:slope}).

We now turn to the description of the profile of $g$ on the interval $[s - 2v, s - v]$. For all $x \in [s - v,s]$, we have
\begin{align*}
  g(x-v) &= \Phi_s[g](x) = \begin{cases}
    1 - \gamma + g(x) - \mu(s - x)_+ & \text{ if } x \leq 0\\
    1 - \gamma -x - \mu(s - x)_+ &\text{ if } x > 0
  \end{cases}\\
  &= 1 - \gamma - x - \mu(s - x),
\end{align*}
showing that $g$ has slope $-(1-\mu)$ on $[s - 2v, s - v]$. 

Using~\eqref{eqn:iterate} and the same method as in the proof of Proposition~\ref{prop:ful}, we can again prove by induction that $g$ has slope $-(1-j\mu)$ on $[s - (j+1)v,s-jv]\cap [L,0]$, for all $j\leq K+1$ with $K =\lfloor 2/\mu \rfloor$. If $L<s - (K+2)v$, then we have $d=s-(K+1)v$, and by the third line of~\eqref{eqn:iterate}, the slope of $g$ will be $1+\mu$ on the interval $[L,s - (K+2)v]$. We therefore conclude that the traveling wave $g$ is given by~\eqref{eq: gint} and~\eqref{eq: slopeSemi}.

Finally, we determine the relationship between $(v,s)$ and $(\mu,\gamma)$. From \eqref{eq: gint} and~\eqref{eq: slopeSemi} we see that the slope of $g$ changes sign at $s - (k+1)v$, and so we have $m = \mathrm{argmax}(g) = s-(k+1)v$ and
\[
  \gamma = \sup g = g(s-(k+1)v).
\]
Hence, a simple computation leads to
\begin{align}
  \gamma = v - s + \sum_{j=1}^{k} v (1 -j\mu) 
  = \frac{(k+1)v}{2}(2 - k\mu) - s. \label{eqn:secondEqualitySemi}
\end{align}
This last equality, together with $v = 1 - \gamma$ allows us to write
\[
  s = \frac{(k+1)(1-\gamma)}{2}(2 -k \mu) - \gamma,
\]
completing the proof.
\end{proof}

We now have all the preliminary results needed to prove Theorem~\ref{x}.

\begin{proof}[Proof of Theorem~\ref{x}]
Let $\gamma \in (0,1)$ and $\mu > 0$. We set
\[
  v_\mathrm{f} = \frac{2 \gamma}{k(2 - (k+1)\mu)} \quad \text{and} \quad v_\mathrm{s} = 1 - \gamma,
\]
as well as
\[
  s_\mathrm{f} =  \frac{1-\gamma - (1-\mu)v_\mathrm{f}}{\mu} \quad \text{and} \quad s_\mathrm{s} = \frac{(k+1)(1-\gamma)}{2}(2 - k\mu) - \gamma,
\]
which are respectively the speed and phenotypic threshold of traveling waves in the fully pulled and semipulled regimes. We observe that
\begin{align*}
   v_\mathrm{f} < s_\mathrm{f}
   &\iff \mu v_\mathrm{f} < 1 - \gamma - (1-\mu)v_\mathrm{f} \iff v_\mathrm{f} < 1-\gamma\\
   &\iff \frac{2\gamma}{k(2-(k+1)\mu)} < 1-\gamma \iff \gamma \left( \frac{2}{k(2-(k+1)\mu)} +1 \right)< 1 \\
   &\iff \frac{\gamma}{\gamma_c(\mu)} < 1,
\end{align*}
and we note that these inequalities hold only if $\mu < 1$, in which case $\gamma_c(\mu) > 0$. Similarly, we have
\begin{align*}
  v_\mathrm{s} > s_\mathrm{s} &\iff 1 - \gamma > \frac{(k+1)(1-\gamma)}{2}(2 - k\mu) -\gamma \iff \gamma (k+1)(2 - k\mu) > (k+1)(2 - k \mu) - 2\\
  &\iff \gamma > \frac{(k+1)(2 - k \mu) - 2}{(k+1)(2 - k\mu)} = \frac{k(2-k\mu) - k \mu}{(k+1)(2-k\mu)} = \gamma_c(\mu).
\end{align*}
Moreover these inequalities hold for any $\mu \neq 1$.

If $\gamma < \gamma_c(\mu)$ then we have $\mu<1$, and so we also have $v_\mathrm{f}  < s_\mathrm{f}$. Therefore, the traveling wave $g$ described in Proposition \ref{prop:ful} is a solution to \eqref{eqn:StravelingWave} with speed $v = v_\mathrm{f}$. On the other hand, $\gamma < \gamma_c(\mu)$ also implies $v_\mathrm{s} < s_\mathrm{s}$; therefore there is no solution of the form described by Proposition \ref{prop:sem}. We conclude that if $\gamma < \gamma_c(\mu)$ then the function given by~\eqref{eq: gint} and~\eqref{eq: slopeFully} is the unique traveling wave solution to the dynamic~\eqref{eq: SdeterministicDynamic} with $v=v_\mathrm{f}< s$.

Similarly, if $\gamma > \gamma_c(\mu)$, we have $v_\mathrm{s} > s_\mathrm{s}$, thus the traveling wave $g$ described in Proposition~\ref{prop:sem} is a solution to~\eqref{eqn:StravelingWave} with speed $v=v_\mathrm{s}$. Furthermore, if $\gamma>\gamma_c$ and $\mu<1$, then $v_\mathrm{f}>s_\mathrm{f}$, and when $\mu>1$, then we must have $v>s$ by Proposition~\ref{prop:ful}. Thus, Proposition~\ref{prop:ful} does not provide any solution in this case. We conclude that, if $\gamma > \gamma_c(\mu)$, then the unique traveling wave solution to~\eqref{eqn:StravelingWave} is given by~\eqref{eq: gint} and~\eqref{eq: slopeSemi} with speed $v=v_\mathrm{s}>s_s=s$, and $v-s=\chi(\mu,\gamma)$ from~\eqref{eq: vminussigma}.
\end{proof}

\begin{proof}[Proof of Theorem~\ref{thm:main}]
This result is an immediate consequence of the more precise Theorem~\ref{x}.
\end{proof}

\begin{proof}[Proof of Theorem~\ref{thm:reprod}]
Recall that
\[
r: x\in\R \mapsto \pi\left[1 - \gamma + \sup_{y \in \R} \left( g(y) - |x-y|\right)\right]
\]
denotes the reproduction profile when we start from the genotypic profile $g$. For any $g:\R\to\R\cup\{-\infty\}$, let us define the functional $\Psi$ by
\[
\Psi[g]: x \mapsto 1 - \gamma + \sup_{y \in \R} \left( g(y) - |x-y|\right),
\]
so that $r(x) = \pi(\Psi[g](x))$. Now notice that 
\begin{equation}\label{eq: PsiPhi}
   \Psi[g](x) = \Phi_s[g](x) + \mu(s-x)_+ 
\end{equation}
for all $x\in\R$, and since the function $\Phi_s[g]$ is continuous, so is $\Psi[g]$.

If $\gamma<\gamma_c$, then we have $v<s$ by Theorem~\ref{x}. Then for any $x$ in the support of $g_1$, we have $x\leq v < s$, and thus, by~\eqref{eq: PsiPhi} we find 
\[
\Psi[g](x) = g_1(x) + \mu(s-x) > g_1(x),
\]
which concludes the first part of the theorem. 

If $\gamma<\gamma_c$, then we have $v<s$ by Theorem~\ref{x}. Thus, since $\Phi_s[g](v) = 0$,~\eqref{eq: PsiPhi} implies $\Psi[g](v) > 0$. Then since $\Psi[g]$ is continuous, there exists $x\in(v,s)$ such that $\Psi[g](x)>0$ and therefore $r(x)>0$, which concludes the first part of the theorem.

If $\gamma>\gamma_c$, then $v>s$ by Theorem~\ref{x}. Therefore, using~\eqref{eq: PsiPhi} and the fact that $\Phi_s[g](x)\geq 0$ for all $x\in[s,v]$, we obtain
\[
0\leq g_1(x)= \Phi_s[g](x) = \Psi[g](x) = r(x), 
\]
for all $x\in[s,v]$, which finishes the proof.
\end{proof}

\section{Ancestral structure of the model}\label{sect: ancestry}
We now turn to the study of the ancestral structure of the population,  leading to a proof of Theorem~\ref{thm:ancestralLineages}. The main question of the section is the following: If we sample $k$ individuals at a given time horizon, what can be inferred about the genealogical tree formed by tracing the $k$ ancestral lineages of those individuals? As we will see, our hydrodynamic limit already offers insights into the behavior of a single ancestral lineage ($k=1$), which will be discussed in Section \ref{sect:1-lineage}.

For multiple lineages ($k>1$), the coalescence times are primarily influenced by the stochastic fluctuations in the system. To gain further insight, we examine an alternative individual-based model with noisy selection. Although the specifics of this model differ, this model is fully integrable and shares certain universal properties with our original selection model as the population size approaches infinity. This comparative approach will allow us to derive the ansatz for the effective population size $N_e$ introduced in Section \ref{sect:effecti-pop}, and will be the focus of Section \ref{sect:k-lineages}.

\subsection{Parental lineage}\label{sect:1-lineage}
 Let $g$ be a traveling wave of speed $v$, and recall that the ancestral map describes the most likely location (relative to the rightmost genotype) of the parent  of a uniformly chosen genotype around location $x$ (in the log scale) $1$ generation in the past:
\begin{equation*}
	A(x) :=  \mathrm{argmax}_{y\in\R} \{g(y) - |x+v-y|\},
\end{equation*}
where $\mathrm{argmax}_{y \in \R} \{g(y)-|x+v-y|\}$ is defined as the smallest $y$ at which the maximum is attained, and
\begin{equation*}
	A^+(x) :=  \mathrm{Argmax}_{y\in\R} \{g(y) - |x+v-y|\}
\end{equation*}
is the largest $y$ at which the maximum is attained. As the main step towards the proof of Theorem~\ref{thm:ancestralLineages}, we show the following result.
\begin{lem}
\label{lem: Ax}
Let $g$ be a traveling wave of speed $v$ with phenotypic threshold $s$.  Then for all $x \in [L,0]$ the following holds.
\begin{itemize}
\item 
If $\gamma<\gamma_c$ then
\begin{equation*}
  A(x) = A^+(x) = \min(x+v,0).
\end{equation*}
\item 
If $\gamma>\gamma_c$, then
\begin{equation*}
A(x) = \min(s-v, \max(x+v,d)) \quad \text{and} \quad A^+(x) = \min(0, \max(x+v,d)).
\end{equation*}
\end{itemize}
\end{lem}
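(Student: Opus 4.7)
The plan is to analyse the one-variable function $h_x(y):=g(y)-|x+v-y|$ (set to $-\infty$ outside $[L,0]$) using the explicit piecewise-linear description of $g$ given by Theorem~\ref{x}. As a sum of concave functions, $h_x$ is concave on $[L,0]$, so its argmax set is a closed interval whose smallest and largest elements are precisely $A(x)$ and $A^+(x)$. Away from the kink at $y=x+v$, the function $h_x$ is piecewise linear with slope $g'(y)+1$ on the left of $x+v$ and $g'(y)-1$ on the right, so the whole proof reduces to locating where these two piecewise-constant slopes change sign, using the explicit formulas~\eqref{eq: slopeFully} and~\eqref{eq: slopeSemi}.

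In the fully-pulled case $\gamma\leq\gamma_c$, the proof of Proposition~\ref{prop:ful} establishes $d=L$, which is equivalent to saying that every slope of $g$ on $(L,0)$ is at most $K\mu-1\leq 1$, while the rightmost slope $\mu-1$ is strictly greater than $-1$. Consequently $h_x'(y)=g'(y)+1>0$ throughout $(L,0)\cap(-\infty,x+v)$ and $h_x'(y)=g'(y)-1\leq 0$ throughout $(L,0)\cap(x+v,\infty)$, so $h_x$ strictly increases up to $\min(x+v,0)$ and does not increase thereafter. The unique maximizer is $\min(x+v,0)$, giving $A(x)=A^+(x)=\min(x+v,0)$.

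The semi-pulled case $\gamma>\gamma_c$ requires a four-way split based on the position of $x+v$ relative to $d=s-(K+1)v$, $s-v$ and $0$. Reading off~\eqref{eq: slopeSemi}, I would verify that $h_x'(y)=g'(y)+1$ is strictly positive for $y<s-v$ and identically zero on $(s-v,0)$, whereas $h_x'(y)=g'(y)-1$ is strictly positive for $y<d$, non-positive on $(d,s-v)$, and equals $-2$ on $(s-v,0)$. Case by case: when $x+v<d$ the function strictly increases up to $d$ and weakly decreases afterwards, so the argmax is $\{d\}$; when $d\leq x+v\leq s-v$ it strictly increases on $(L,x+v)$ and weakly decreases on $(x+v,0)$, so the argmax is $\{x+v\}$; when $s-v<x+v\leq 0$ it is flat on $[s-v,x+v]$ and strictly decreases on $(x+v,0)$, giving argmax $[s-v,x+v]$; and when $x+v>0$ it is flat on $[s-v,0]$, giving argmax $[s-v,0]$. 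In each regime the endpoints coincide with $\min(s-v,\max(x+v,d))$ and $\min(0,\max(x+v,d))$.

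The main obstacle is purely the bookkeeping of signs on the many piecewise-linear pieces; no deeper idea is required. The only subtlety I anticipate is the non-generic parameter value $K\mu=2$, where the slope $K\mu-1$ equals $1$ exactly and can create an additional plateau of maximizers on $(d,s-Kv)$; this either calls for a continuity argument or a minor adjustment of the boundary conventions used to define $d$. Since the claimed formulas are continuous in $(\mu,\gamma)$ outside the excluded line $\mu=1$, this degeneracy does not affect the conclusion and the four cases above cover the generic picture.
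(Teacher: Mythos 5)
Your proof is correct and follows essentially the same route as the paper: both arguments read off the piecewise-linear slopes of $g$ from Theorem~\ref{x}, track the sign of the slope of $y\mapsto g(y)-|x+v-y|$ on either side of the kink, and split into cases according to the position of $x+v$ relative to $d$, $s-v$ and $0$. Your explicit flagging of the degenerate case where $2/\mu$ is an integer (so that $g'$ can equal $1$ on an interval) is a point the paper only mentions in passing for the fully-pulled regime, and your observation that it is harmless is consistent with the paper's treatment.
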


\begin{proof}
Let $x\in[L,0]$ be the location of a genotype relative to the rightmost genotype at generation $1$, that is, relative to $v$. Let $c\in[L+v,v]$ be the actual location of this genotype, so that $c=x+v$. 

If $\gamma < \gamma_c$, then Theorem~\ref{x} shows that $g'\in (-1,1]$ (we can only have $g'=1$ on the interval $[L,-(K-1)v]$ when $2/\mu$ is an integer). Therefore, it is not hard to see that the maximum of $y \mapsto g(y) - |c-y|$ is unique for all $c\in[L+v,v]$, and it is attained at $c$ for $c\in[L+v,0]$, and at $0$ for $c\in[0,v]$. With the change of variables $x=c-v$ we see that $A(x) = A^+(x) = \min(x+v,0)$. 

We now turn to the case $\gamma>\gamma_c$. We first take $c\in[s-v,v]$. We use Theorem~\ref{x} again to determine where the maximum of the function $y \mapsto g(y) - |c-y|$ is attained. Using the slopes given by the theorem, in particular noting that $g'=-1$ on the interval $(s-v,0)$, we find that the maximum is attained everywhere on the interval $[s-v, 0 \wedge c ]$. 

Similarly, using Theorem~\ref{x} one can see that the maximum of the function $y \mapsto g(y) - |c-y|$ is attained at $c$, if $c\in[(L+v)\vee d,s-v]$, and it is attained at $d$, if $c\in[L+v,d]$. 

With the change of variables $x=c-v$ and using the definitions of the functions $A$ and $A^+$, we obtain
\begin{equation*}
A(x) =
\begin{cases}
s - v, & \text{if } x\in [s-2v,0] \\
x+v, & \text{if } x\in [L\vee(d-v),s-2v] \\
d, & \text{if } x\in[L,d-v],
\end{cases}
\end{equation*}
and
\begin{equation*}
A^+(x) = 
\begin{cases}
\min(x+v,0), & \text{if } x\in [s-2v,0] \\
A(x), & \text{if } x\in [L,s-2v].
\end{cases}
\end{equation*}
The two relations above imply the second part of the lemma.
\end{proof}

\begin{proof}[Proof of Theorem~\ref{thm:ancestralLineages}]
Lemma~\ref{lem: Ax} implies $A(x) = A^+(x) = 0$ for all $x\in[-v,0]$, if $\gamma<\gamma_c$; and also $A(x) = s-v$ and $A^+(x) = \min(x+v,0)$ for all $x\in[s-2v]$, if $\gamma>\gamma_c$.

Lemma~\ref{lem: Ax} also implies that for any $x\in[L,0]$, both $A^j(x)$ and $(A^+)^j(x)$ increase by at least $v$ until they reach $0$ or $s-v$. Since $L<\infty$ and $v>0$ in both regimes ($\gamma<\gamma_c$ and $\gamma>\gamma_c$), the statements of the theorem about $A^j(x)$ and $(A^+)^j(x)$ follow.	
\end{proof}

\medskip

\subsection{Genealogical structure.}\label{sect:k-lineages}

We are now interested in the genealogical tree generated by tracing backward in time $k$
distinct ancestral lineages. Unfortunately, the genealogical structure is inherently stochastic, and random coalescence times can not be read from the hydrodynamic limit. Thus, we need to go back to the stochastic model and understand the fluctuations of the system. 

Rather than analyzing the original model directly, we consider a fully integrable variation. As we will demonstrate, this integrable model retains many key properties, suggesting that both the original and modified versions belong to the same universality class and share an identical genealogical structure. Given that the ancestral structure of the integrable model is known \cite{schertzer2023relative}, we can use a comparative approach to derive the ansatz for $N_e$ presented in the main text (see equations~\eqref{eq:T2-strong} and~\eqref{eq:T2-weak}. The following paragraphs will elaborate on this. 
\medskip

{\it The noisy exponential $K$-BRW model.} We consider an extension of the exponential $K$-branching random walk ($K$-BRW) model of Brunet and Derrida~\cite{Brunet2007}. It was first introduced in \cite{Cortines2017} by Cortines and Mallein, and further analyzed in Schertzer and Wences \cite{schertzer2023relative}.

As in the original model, at every generation, the population is made of $K=N^\gamma$ individuals and evolves in two steps at every generation.

\medskip

{\it Reproduction.} An individual with genetic value $x$ produces an infinite number of offspring whose genotypes are distributed according to an independent exponential Poisson point process (PPP) with intensity measure $e^{-(y-x)}dy$. Note that the exponential PPP is shifted in such a way that the distribution is centered at $x$. In particular, there are only finitely many offspring to the right of $x$, and infinitely ``unfit" children to its left.

\medskip

{\it Selection.} After reproduction,  infinitely many children are present. 
We then select the $N^\gamma$ individuals using a sampling scheme interpolating between truncation selection and Gibbs sampling as follows. 
Let $\mu>0$. Because the intensity measure vanishes exponentially fast at $\infty$, it can be shown that children after reproduction can be ranked in decreasing order. We first select the $N$ rightmost genotypes (truncation selection), and then sample $N^\gamma$ individuals without replacement according to the sampling weights $e^{\mu x}$ (Gibbs sampling).  
When $\mu=\infty$, this amounts to selecting the $N^\gamma$ rightmost individuals. For $\mu=0$, $N^{\gamma}$ individuals are selected uniformly at random 
from the $N$ rightmost children. Thus, 
as in the original model, the $\mu$ parameter  also captures the level of noise in the selection scheme.

\bigskip
\noindent
{\bf Universal traveling wave.} 
The noisy exponential $K$-BRW and our original model share the same phenomenology  summarized in Table \ref{tab:exponential}. For the sake of presentation, we will restrict ourselves to the ``shape" and speed of the traveling wave solution presented in the next table. 

\begin{table}[h!]
\caption{Universal phenomenological picture of the fully pulled and semipulled traveling waves.}
\begin{center}	
\begin{tabular}{|m{3.5cm} | m{6cm} | m{6cm}|}
\hline
& \textbf{Fully pulled wave} & \textbf{Semipulled wave} \\
\hline
& $\gamma < \gamma_c(\mu)$ & $\gamma > \gamma_c(\mu)$   \\
\hline
\textbf{Stationary profile} & & \\
\hline
Speed & non-decreasing function of $\gamma$ & non-increasing function of $\gamma$ (flat in the noisy exponential $K$-BRW) \\
\hline
Slope of $g$ at the front & $\mu-1$ & $-1$ and then $\mu-1$ after some $-\chi>0$ \\
\hline
\end{tabular}	
\end{center}
\label{tab:exponential}
\end{table}

In the following, we justify the content of Table~\ref{tab:exponential} for the noisy exponential $K$-BRW. The justification for our original model was presented in Table 
\ref{tab:phenomenology}.
Let us consider an initial configuration of particles $(x^i_0)_{i=1}^{N^\gamma}$ with a limiting log-profile $g_0$, so that the number of particles around a location $x\log N$ is roughly $N^{g_0(x)}dx$. Formally, we assume the existence of a function $g_0$ valued in $\R_+\cup \{-\infty\}$ such that 
for every $a<b$, as $N\to\infty$,
$$
\frac{\log(\#\{i : x_0^i \in (a\log(N), b\log(N)) \})}{\log(N)}  \to \max_{(a,b)} g_0 \ \ \mbox{in probability.} 
$$

The key observation about the noisy exponential $K$-BRW is that the superposition of shifted exponential PPPs is again a shifted exponential PPP.  More precisely, 
if the ${\cal P}_{i}$'s are independent exponential PPPs with respective intensity $e^{-(x-x_0^i)}dx$ (describing the position of the offspring after the reproduction step), then
$$
\sum_{i} {\cal P}_i \ = \ {\cal P} \ \ \ \mbox{in law,}
$$
where ${\cal P}$ is again a shifted exponential PPP with intensity $e^{-(y-X_{eq}^N)}dy$ with a shift
$$
X_{eq}^N \equiv X_{eq}^N(x_0^i) \ := \ \log\left( \sum_{i=1}^{N^{\gamma}} e^{x_0^i}\right).
$$
 We emphasize that this simple but crucial  observation by Brunet and Derrida \cite{Brunet2007} makes the model fully integrable. We now make use of this fact to compute the genotypic profile after one generation. 

\bigskip
\noindent
{\it 
Reproduction profile.}
Let us first consider the individuals $(r_0^i)_{i=0}^\infty$ after reproduction. By the previous observation, 
\begin{eqnarray}
\frac{\log(\mathbb{E}[\# \{i :r_0^i \in X_{eq}^N + (a\log(N),b\log(N)) \}])}{\log(N)} & = &  \frac{\log\int_{a\log(N)}^{b\log(N)} e^{-x} dx}{\log(N)} \nonumber \\
 & \to  & \max_{x\in(a,b) } (- x), \label{eq:expected} 
\end{eqnarray}
as $N\to\infty$.
By a second moment argument, one can easily prove that the expectation can be removed inside the log, which yields
\begin{eqnarray}
\frac{\log(\# \{i :r_0^i \in X_{eq}^N + (a\log(N),b\log(N)) \})}{\log(N)} & \to  & \max_{(a,b)} R, \ \mbox{where} \ R(x):= \pi(- x),
\end{eqnarray}
where the convergence is meant in probability and
the projector $\pi$ has the effect of setting the population to $0$ when the ``expected stochastic exponent" in (\ref{eq:expected}) takes negative values (as highlighted in the main text). 

For any $a\in\R$, define the shift operator $\theta_a$ by
$$
\forall x>0, \ \theta_{a}f(x) = f(x-a)
$$
and 
$$
\hat X_{eq} = \frac{X_{eq}}{\log(N)}.
$$
The previous result implies that the set of individuals after reproduction has a limiting log-profile given by $\theta_{\hat X_{eq}} R(x)$.

\bigskip

\noindent
{\it Truncation.} We now consider the system of particles after only retaining the $N$ rightmost individuals. This leads to a truncation profile given by $\theta_{\hat X_{eq}}T(x)$ where
\begin{eqnarray}
\label{eq:truncation-profile}
    T(x) = \left\{\begin{array}{cc} -x& \ \mbox{if }  
 \ \ -1<x < 0 \\ -\infty & \mbox{otherwise.}
    \end{array} \right.
\end{eqnarray} 
 In words, the log-profile after truncation is obtained by cutting the reproduction profile to the left of $(\theta_{\hat X_{eq}} R)^{-1}(1)=\hat X_{eq}-1$, so that the  $\asymp N$ rightmost particles remain.

\bigskip
\noindent
{\it Gibbs selection.} 
Let us consider the set of individuals present after the truncation step. Let ${\cal T}=(z_0^i)_{i=1}^{N}$ denote their positions relative to $\hat X_{eq}$. We now select $N^\gamma$ particles without replacement according to the sampling weights $e^{\mu z_0^i}$. 
Let $z\in{\cal T}$ be an individual at position $u\log(N)$ ($u\in(-1,0)$). 
For the sake of simplicity, let us first assume that individuals are sampled with replacement and let
$$
q_N(u) := N^{\gamma} \times \frac{e^{\mu u\log(N)}}{\sum_{i=1}^N e^{\mu z_0^i}}
$$
be the expected  number of times our focal individual is selected. From~\eqref{eq:truncation-profile}, an easy computation shows that 
$$
\frac{\log(\sum_{i=1}^N e^{\mu z_0^i})}{\log(N)} \approx {1-\mu},
$$
so that 
$$
\frac{\log(q_N(u))}{\log(N)} \approx \gamma-1+\mu(1+u).
$$
As a conclusion, if we sample with replacement then
\begin{itemize}
\item If $\gamma-1+\mu(1+u)>0$,
the individual is sampled infinitely many times as $N\to\infty$.
\item If $\gamma-1+\mu(1+u)<0$,
the probability of sampling the individual goes to $0$ and is 
$\asymp  N^{\gamma-1+\mu(1+u)}$.
\end{itemize}
With a little bit of extra work, we can then deduce that if we now sample without replacement (as we should), then the following dichotomy holds
\begin{itemize}
\item If $\gamma-1+\mu(1+u)>0$
the probability of sampling the individual goes to $1$.
\item If $\gamma-1+\mu(1+u)<0$
the probability of sampling the individual is $\asymp N^{\gamma-1+\mu(1+u)}$.
\end{itemize}
We can then deduce that the log-profile of genotypes after one generation is $\theta_{\hat X_{eq}(x_0)} G$,
where
$$
G(x) = 
    \left\{\begin{array}{cc} \pi( -x + (\gamma-1+\mu(1+x))_-)& \ \mbox{if \ \ } -1<x < 0 \\ -\infty & \mbox{otherwise.}
    \end{array} \right.
$$

\bigskip

To summarize the previous heuristics, one striking feature of the noisy exponential $K$-BRW is that the wave reaches ``stationarity" after only $1$ generation and the only effect of the initial configuration is in the shift $\theta_{\hat X_{eq}(x_0)}$.
It then follows that $G$
is a traveling wave solution with a speed given by the limit of $\hat X_{eq}(x_0)$ ,
where $(x_0^i)_{i=1}^{N^\gamma}$ is a configuration with a limiting log-profile $G$.

\medskip

As in our original model, we can now distinguish between two regimes from the explicit description of $G$.
Define 
$$
\chi(\mu,\gamma) \equiv \chi := 
1 - \frac{1-\gamma}{\mu} 
$$
and define 
$$
\gamma_{c}(\mu) := 1-\mu
$$
so that $\chi >0$ if and only if $\gamma>\gamma_c$. 

\bigskip

\noindent
{\it Semipulled regime ($\gamma > \gamma_c$).}
 Then $G$ is obtained by concatenating continuously two linear functions with respective slopes $-1$ and $-(1-\mu)$ at $-\chi$. That is
 $$
 G(x) \ = \ 
 \left\{\begin{array}{cc} 
 -x & \ \mbox{if $x\in(-\chi,0)$} \\
 \chi  -(1-\mu)(x-\chi) \ & \ 
 \mbox{if $x\in(-1,-\chi)$} \\
 -\infty & \mbox{otherwise.}
 \end{array} \right.
 $$
From the previous computations, we see that 
 the slope at the tip is $-1$ and then $-(1-\mu)$. The change of slope occurs at $-\chi$. 
 
Let us now consider the speed of the wave and its monotonicity in $\gamma$. 
Recall that the speed is given by
\begin{equation}
\label{eq:hatX}
\hat X_{eq}(x_0^i) \ = \frac{1}{\log(N)} \ \log(\sum_{i=0}^{N^\gamma} e^{x_0^i} ),
\end{equation}
where 
$(x_0^i)$ has limiting log-profile $G$. The expression of $G$ and a direct computation yields that  the speed
is given by
 \begin{equation}
 \label{eq:speed-semi}
 v \approx \frac{\log(\chi(\mu,\gamma)\log N)}{\log(N)}  \ \to \  0, \ 
 \end{equation} 
as $N\to\infty$, so that 
in the semipulled regime, the wave is static in the natural scaling of the system (that is, in $\log(N)$ units).

\bigskip

\noindent
{\it Fully pulled regime ($\gamma < \gamma_c$).}
Define $\alpha$ through the relation
$$
\alpha+ (\gamma-1 +\mu(1-\alpha) )=0 \ \ \Longleftrightarrow \ \  \alpha = 1-\frac{\gamma}{1-\mu} \in (0,1).
$$
Then 
$$
 G(x) \ = \ 
 \left\{\begin{array}{cc} 
 -(1-\mu)(x+\alpha)  & \ \mbox{if $x\in(-1,-\alpha)$}  \\
 -\infty & \mbox{otherwise,}
 \end{array} \right.
 $$
 and a direct computation from (\ref{eq:hatX}) shows that 
 \begin{equation}
 \label{eq:speed-fully}
v(\mu,\gamma) \approx -\alpha =  \frac{\gamma}{1-\mu}-1<0.      
 \end{equation}
 It follows that $v$ is now increasing in $\gamma$. 

 \bigskip

 Finally, putting all the previous results together yields Table \ref{tab:exponential}.

\medskip

\noindent
{\bf Genealogies.} The noisy exponential $K$-BRW preserves the properties of the original model summarized in Table \ref{tab:exponential}. Those properties were derived (analogously to the original model) by looking at the limiting log-profile after one generation.
Since all other properties listed in Table \ref{tab:phenomenology} 
(ancestry, selection, etc.) follow from this analytical approach, it is not hard to extend the previous computations and show that actually all the properties listed in Table \ref{tab:phenomenology} also hold for the noisy exponential $K$-BRW.

This hints at the fact that the two models fall in the same universality class, so that the same genealogical structure should emerge in the infinite population limit.
Let us now recall one of the main results for the noisy exponential $K$-BRW derived in \cite{schertzer2023relative}. See Theorem 2.7 in \cite{schertzer2023relative} for a more precise statement.

For a population of size $N$, let $\Pi^{N}_k$ be the random genealogy obtained by sampling $k$ individuals at a given time horizon and tracing their ancestral lineages backward in time.  
In \cite{schertzer2023relative}, it is shown that for a fixed value of $\mu\in(0,1)$,  
\begin{enumerate}
\item If $\gamma<\gamma_{c}(\mu)$, then $\Pi^{N}_k$ converges to the (discrete) Poisson–Dirichlet coalescent with parameter $(1-\mu,0)$. See \cite{schertzer2023relative} for a definition. In particular, lineages coalesce in finite time for $k=2$, and the effective population size is
given by
\begin{equation}
N_e \equiv \E{T_2^N}\approx\frac{1}{\mu}.
\label{eq:Ne-below-gammac}
\end{equation}
\item If $\gamma>\gamma_c(\mu)$, then the coalescence time between two lineages goes to $\infty$ and we need to accelerate time by $\chi \log(N)$ in order to see an interesting picture emerging. After this proper time rescaling, the tree $\Pi^N_k$ converges to the Bolthausen-Sznitman coalescent \cite{Pitman1999}. In particular, \cite{schertzer2023relative} proved that 
\begin{equation}
N_e=\E{T_2^N}\approx \chi(\mu,\gamma) \log(N).
\label{eq:Ne-above-gammac}
\end{equation}
\end{enumerate}

This is the ansatz used in Section \ref{sect:effecti-pop} for the model with phenotypic noise which is in good accordance with our numerical simulations. 
See Fig \ref{fig: rateOfCv} in the main text. 

\section{Corrections to the limit theorems}
\label{sec:corrections}
    
 The question of the deviation of finite size models from their  deterministic approximations is of fundamental importance since those limits involve log transforms and change of scales in $\log(N)$. As expected, our numerical simulations show measurable errors from their infinite population prediction. See again Figure \ref{fig: rateOfCv} in the main text. 

 Before explaining the origin of those deviations, we note that despite the fact that our hydrodynamic results only provide rough approximates on the precise values for the rate of adaptation and  the effective population size (even for large $N$), our numerical simulations show that our limit theorems still capture the main qualitative behavior of the system. In particular, we can predict the existence of a phase transition between the selection of the luckiest and fittest regimes (change of monotonicity in the evolution speed $\gamma\mapsto v(\mu,\gamma)$ and in the effective population size $\mu\mapsto N_{e}(\mu,\gamma)$). Furthermore, the critical value $\gamma_{c}(\mu)$
is well predicted by our limit theorems. See the right panel of Figure~\ref{fig: rateOfCvSM}. 

\medskip

{\it Order of corrections.}
Since the noisy exponential $K$-BRW (with $K=N^\gamma$) is fully integrable, it can inform us on the order of the finite-size population deviations from the limit theorems. 
We have the following picture:
\begin{enumerate}
\item 
for $\gamma<\gamma_{c}(\mu)$, it is shown in \cite{schertzer2023relative} that
\begin{equation}
\label{eq:speed_correction1}
v_N \approx -(1-\frac{\gamma}{1-\mu}) + \frac{\E{\log(Y_\mu)}}{\log(N)} + o\left(\frac{1}{\log(N)}\right),
\end{equation}
where $Y_{\mu}$ is a positive $(1-\mu)$-stable random variable whose Laplace transform is given by
\begin{equation}\label{eq:speed_correction2}
\E{e^{-\lambda Y_{\mu}}}) = \exp(-\Gamma(\mu)\lambda^{1-\mu});
\end{equation}
\item for $\gamma>\gamma_{c}(\mu)$, we showed in \ref{eq:speed-semi} that
$$
v_N \approx   \frac{\log(\chi(\mu,\gamma) \log(N))}{\log(N)}.
$$
\end{enumerate}
As claimed in the main text, we then have a correction of order $1/\log(N)$ when $\gamma<\gamma_{c}$, or a correction of
$\log(\chi(\mu,\gamma) \log(N))/\log(N)$ when $\gamma>\gamma_{c}$. 
Note that the latter correction explodes when $\gamma\to \gamma_c(\mu)$, indicating that a different theory is needed to grasp the behavior of the system at and near criticality.

In addition, we plotted the estimation of the speed for different values of $N$ in the noisy exponential $K$-BRW. See the left panel of Figure~\ref{fig: rateOfCvSM}. We see that the higher order correction terms make the convergence to the theoretical rescaled speed extremely slow. This is particularly true in the semipulled regime ($\gamma>\gamma_{c}(\mu)$) because of the correction $\log(\chi \log(N))/\log(N)$. 
A similar pattern is observed in our original model (see the left panel of Fig.~\ref{fig: rateOfCv}). Extending our methods to quantify the error terms (as in the noisy exponential $K$-BRW) remains an important but presumably difficult challenge. 

\begin{figure}[H]
\begin{center}
    \begin{tabular}{cc}    
    \includegraphics[width=0.4\textwidth]{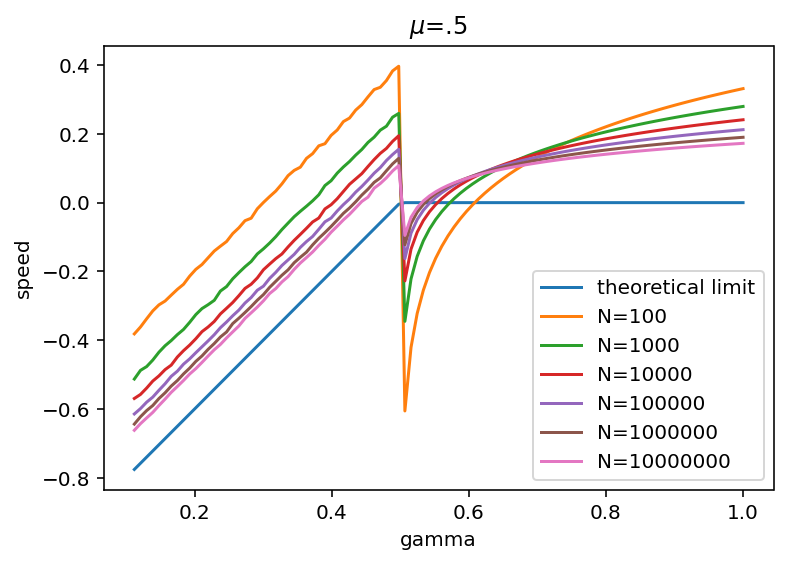}
    &
    
    \includegraphics[width=0.4\textwidth]{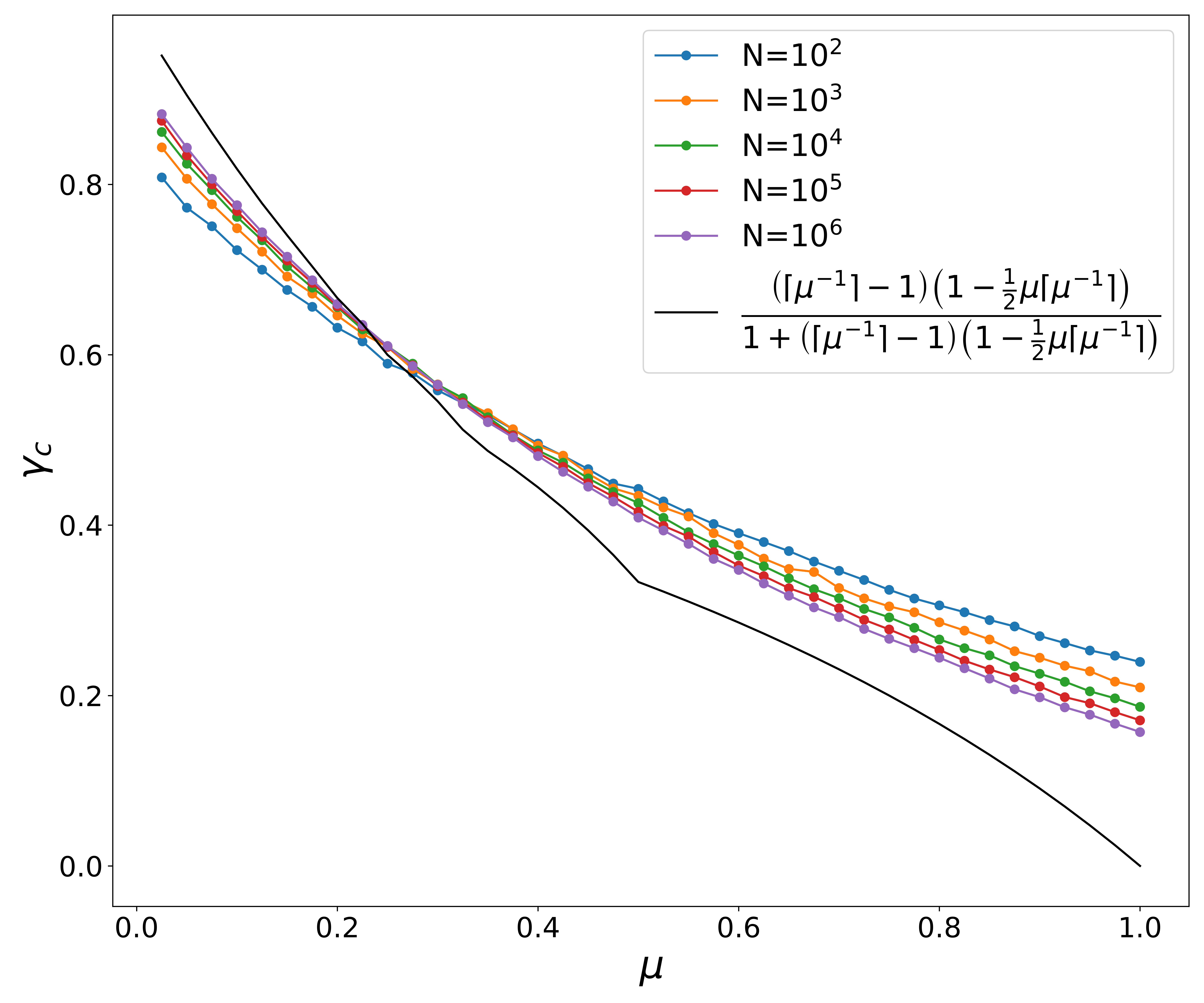}
    \end{tabular}
\end{center}
    \caption{
    {\it Left panel:}
     Rate of adaptation for the deterministic limiting model and its finite-size population correction, for the noisy exponential $K$-BRW.  
    {\it Right panel:} 
    We show the theoretical function $\mu\mapsto\gamma_c(\mu)$ given by Theorem~\ref{thm:main} in black, and, for different population sizes, the estimated critical values of $\gamma$, which maximize the function $\gamma \mapsto v^{(N)}_{\gamma,\mu}$ in the simulations.
  }
    \label{fig: rateOfCvSM}
\end{figure}

\section{Sexual reproduction}\label{sect: sexual}

We consider a sexual version of the asexual model exposed in  the main text. 
As before,  at every generation $n$, the population consists of $N^\gamma$ individuals. The population evolves from $n-1$ to $n$ into two successive steps.

\medskip

{\it Step 1. Reproduction.} We generate $N$ offspring. Each offspring has two parents $(x_1,x_2)$ chosen uniformly at random from generation $n-1$. The genotype and phenotype of an individual is
obtained by adding noise to the mean parental genotype values, that is $$\mbox{genotype}=\frac{1}{2}(x_1+x_2)+X, \ \ \ \mbox{phenotype}= \frac{1}{2}(x_1+x_2)+X+Y,$$
where $X$ and $Y$ have respective distribution $f_{X}$ and $f_{Y}$ with
$$
f_{X}(x) = \frac{1}{2} \exp(-|x| ),  \ \ f_{X}(x) = \frac{\mu}{2} \exp(-\mu |x| ),
$$
for a fixed value of $\mu>0$ which captures the inverse of the phenotypic noise.

{\it Step 2. Selection.} The population at generation $n$ is obtained by selecting the $N^\gamma$ genotypes with the $N^\gamma$ rightmost phenotypes; the same way as in the asexual case.

\medskip

{\it Log-profiles.}
Similarly to the asexual case, we can derive a recursive equation for the limiting stochastic exponents. 
At generation $n-1$, we think of 
$N^{g_{n-1}(x)} dx$
as the number of particles in $dx \log(N)$. We claim that if $g_0$ is concave on its support (that is the set of points where $g_{0}(x)\neq -\infty$), then for every $n\geq0$ 
\begin{equation}
  \label{eq:sex1}
  \begin{split}
      g_{n}(x) \ = \  \pi\left[ \sup_{y \in \R} \left( 1 - 2\gamma + 2 g_{n-1}(y) - |x-y| - \mu (s_n - x)_+\right)\right], \\
      \mbox{where $s_n$ satifies $\sup g_{n}= \gamma$}.
  \end{split}
\end{equation}

In order to justify the formula, we note that the average number of offspring with parents in $(dx_1,dx_2)$ is approximately
$$\frac{N^{g_{n-1}(x_1)}N^{g_{n-1}(x_2)}}{N^{2\gamma}} N dx_1 dx_2.$$
Along the lines of the asexual case, this yields
\begin{equation}
  \label{eq: sex2}
  \begin{split}
      g_{n}(x) \ = \  \pi\left[ \sup_{y_1,y_2 \in \R} \left( 1 - 2\gamma +  g_{n-1}(y_1)+ g_{n-1}(y_2) - \left|x-\frac{y_1+y_2}{2}\right| - \mu (s_n - x)_+\right)\right], \\
      \mbox{where $s_n$ satisfies $\sup g_{n}= \gamma$}.
  \end{split}
\end{equation}
Now, assume that $g_{n-1}$ is concave on its support. Then 
$$
\forall y_1,y_2 \in\mbox{Supp}(g_{n-1}), \ \   \ \
g_{n-1}(y_1)+g_{n-1}(y_2)  \ \leq \ 2 g_{n-1}\left(\frac{y_1+y_2}{2}\right),
$$
so that (\ref{eq:sex1})
must hold. Since $g_0$ is assumed to be concave on its support, it remains to show that concavity is preserved by the dynamics. This easily follows from Lemma \ref{lem:concave}.

It is interesting to observe the appearance of the term $2g_{n-1}(y)$ in~\eqref{eq:sex1} corresponding to the selection of two parents and that, in this deterministic limit, every individual surviving will have two parents with roughly equal genotypic values.

Even if the recursive equations of the sexual and asexual case look similar at first sight (see~\eqref{eq: deterministicDynamic2} in the main text), sexual populations exhibit a much richer behavior, and we postpone its mathematical analysis for future work.  

First, we see transitions similar to the asexual case when $\mu\in[1/2,1]$, that is, when the phenotypic noise is at intermediate values. In this case, Figure~\ref{Tux} shows that the speed of evolution is also maximized at intermediate values of $\gamma_{c}(\mu)$.  However, our simulations show a complex critical line with some resonance-like modes. In addition, we exhibit a regime which was not present in the asexual case. When the phenotypic noise is too high ($\mu<\frac{1}{2}$),  the speed of evolution stabilizes at $0$ indicating that the population remains static on the natural space scaling of the system (measured in $\log(N)$ units). See Figures \ref{Tux},\ref{Tux2}. 
Finally, our models predict that the speed of evolution is always higher in the asexual case.

\begin{figure}[H]
\centering
\begin{tabular}{cc}
  \includegraphics[width=0.4\textwidth]{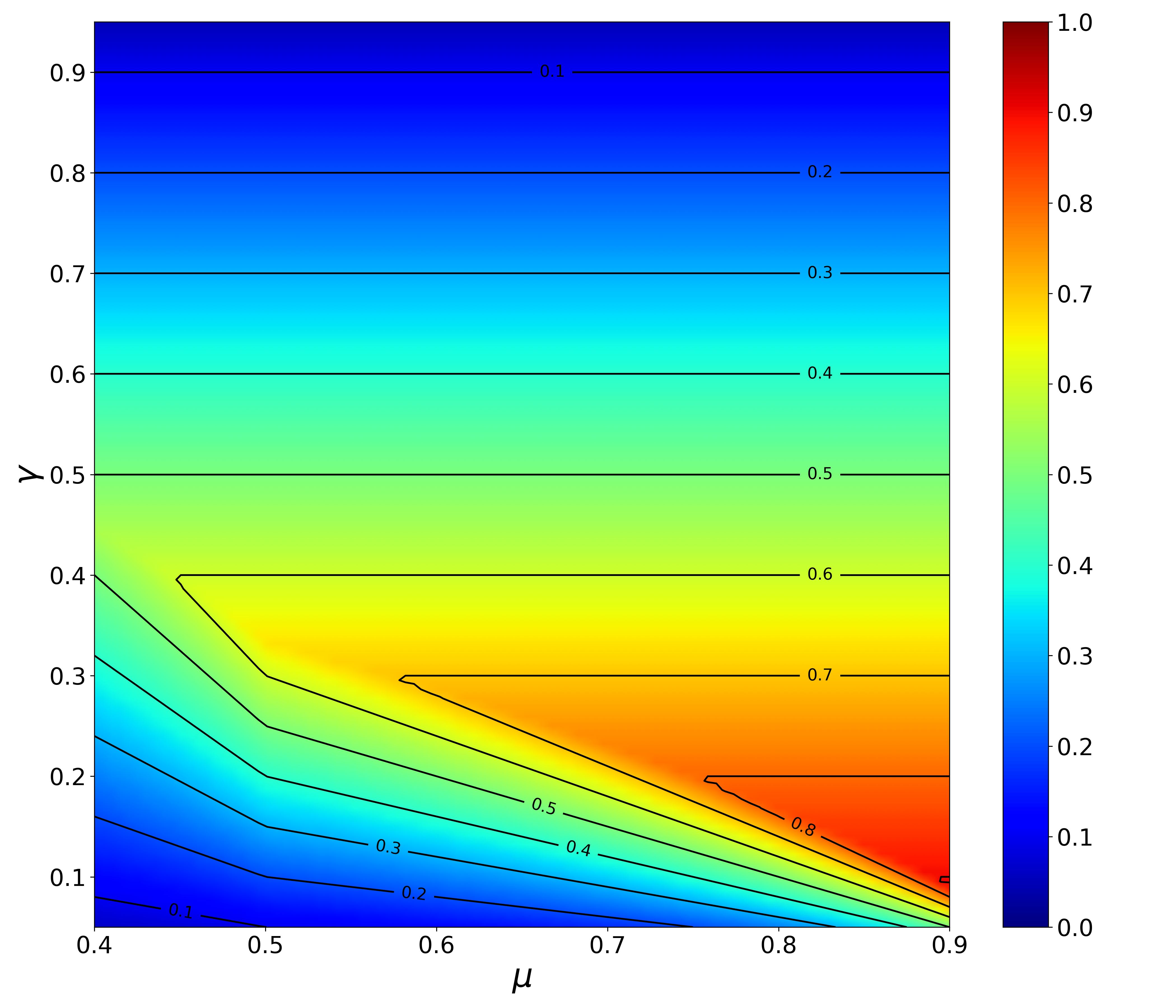}
& \includegraphics[width=0.4\textwidth]{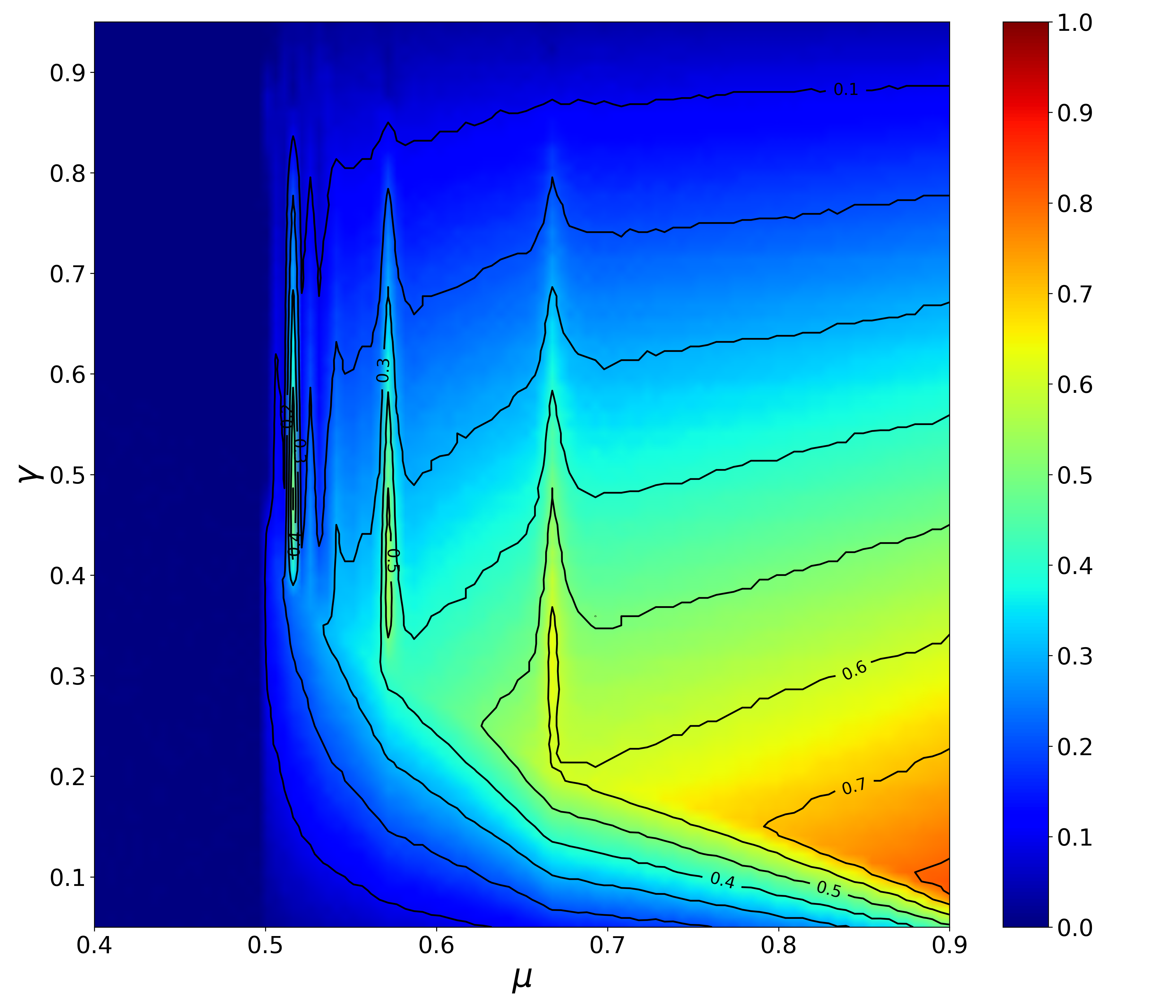}
\end{tabular}

 \includegraphics[width=0.4\textwidth]{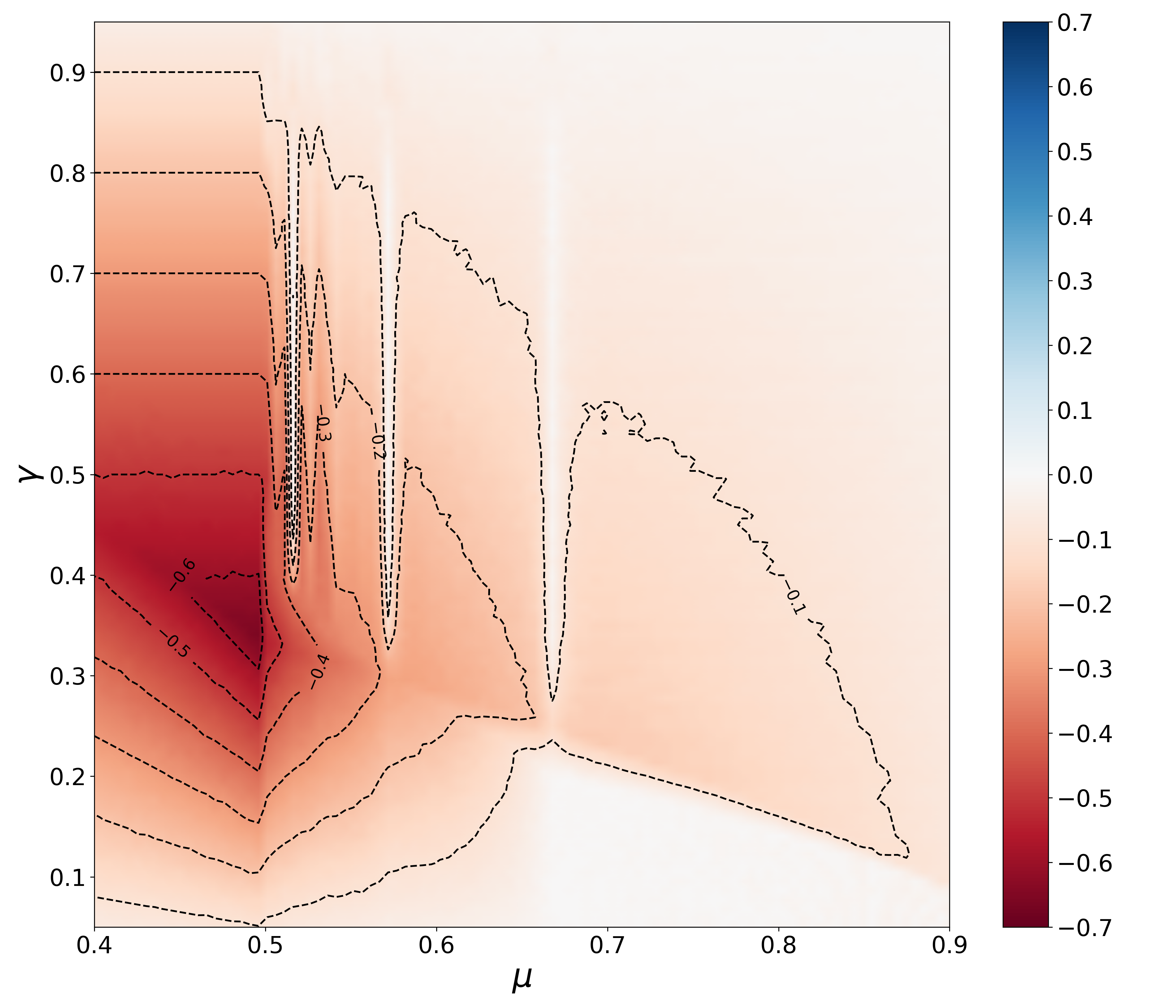}
\caption{{\it Left panel.} Speed for the asexual model as predicted by the explicit solution of (\ref{eq: deterministicDynamic2}) from the main text. {\it Right panel.} Speed for the sexual model as predicted by iterating the modified equation (\ref{eq:sex1}). For both models, if $\mu\in(.5,1)$, the speed is maximized at intermediate values. However, in the sexual case, the speed is always $0$ when the phenotypic noise is too large $\mu<1/2$. {\it Bottom panel.} Comparison of the speed of evolution in the sexual and asexual models.
}
\label{Tux}
\end{figure}

\begin{figure}[H]
\centering
\includegraphics[scale=.3]{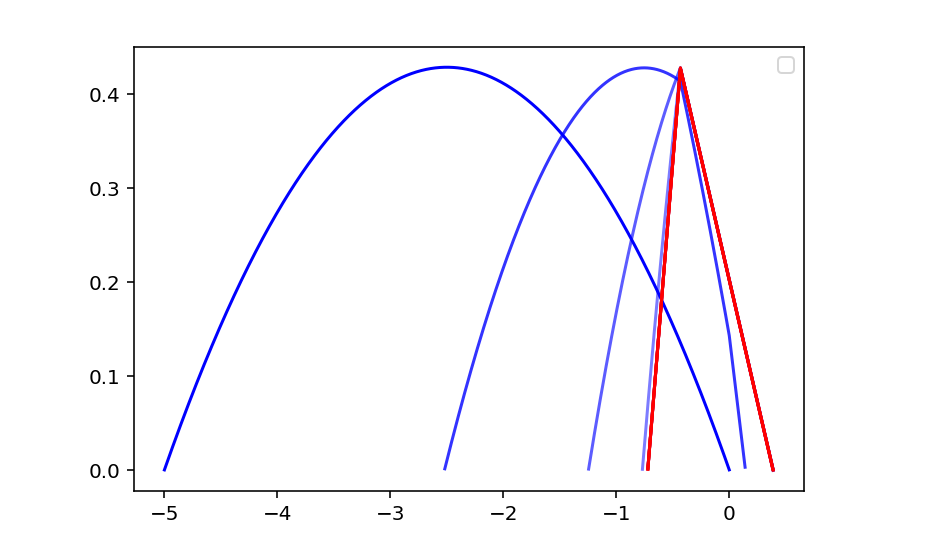}
 \ \ \ \ \includegraphics[scale=.3]{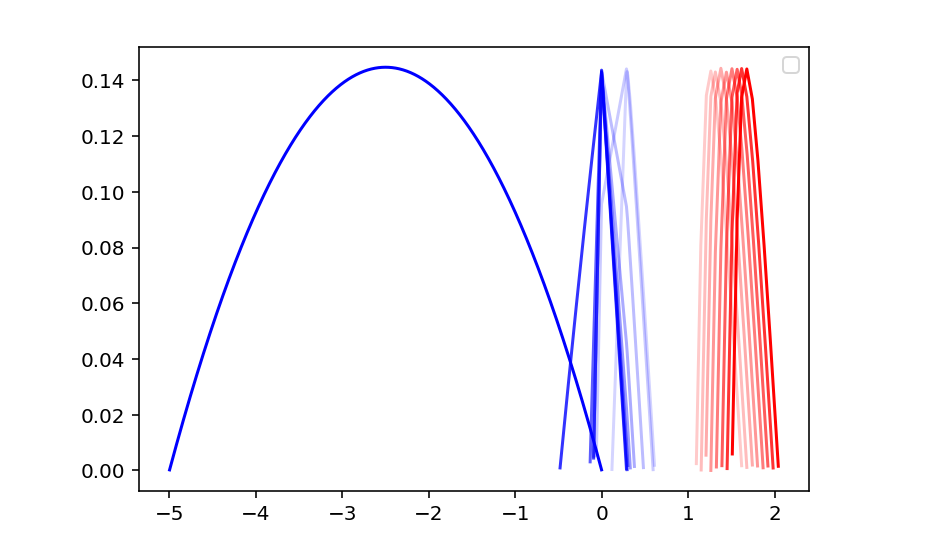}
 \ \ \  \
 \includegraphics[scale=.3]{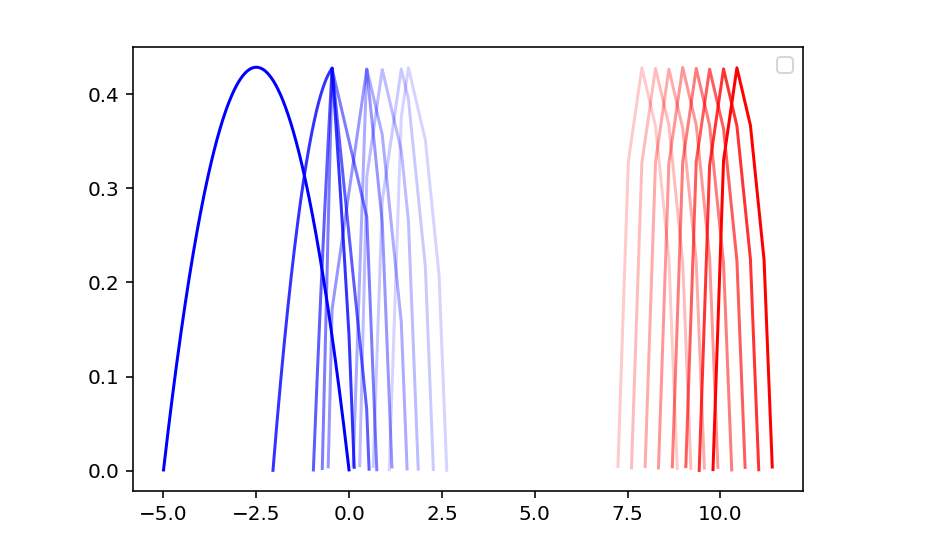}
\caption{Fitness wave for the sexual model. In blue : generation $0$ to $8$. In red: generation $16$ to $24$.  From left to right: static regime ($\mu=.48, \gamma= .43$); close to  static ($\mu=.51$, $\gamma=.14$); moving ($\mu=.61$, $\gamma=.61$)}
\label{Tux2}
\end{figure}

\end{document}